
\documentclass[9pt]{extarticle}
\usepackage{geometry}

 \renewcommand{\paragraph}[1]{\subsubsection*{{#1}}}

\usepackage{amsthm}\newtheorem{definition}{Definition}\newtheorem{theorem}{Theorem}\newtheorem{lemma}{Lemma}\newtheorem{corollary}{Corollary}\newtheorem{proposition}{Proposition}

\newtheorem{preexperiment}{Experiment}
\newenvironment{experiment}{\begin{preexperiment}\rm}{\hfill\qed\end{preexperiment}}

\usepackage[cmex10]{amsmath} %
\usepackage{amsfonts,amssymb} %
\usepackage{ellipsis} %
\usepackage{url} %
\usepackage{epsfig,color} %
\usepackage{stmaryrd} %

\usepackage{cite}

\usepackage{booktabs} %
\newenvironment{tab}[1]
{\begin{center}
\let\oldarraystretch=\arraystretch
\renewcommand{\arraystretch}{1.2} %
\begin{tabular}{@{}#1@{}}
\toprule
}
{\bottomrule
\end{tabular}
\renewcommand{\arraystretch}{\oldarraystretch}
\end{center}}

\newcommand{\avg}{\mathsf{avg}}
\newcommand{\simm}{\mathsf{sim}}
\newcommand{\hit}{\mathsf{hit}}
\newcommand{\percent}{\mathsf{prc}}

\newcommand{\pt}{\mathsf{pt}}
\newcommand{\cnt}{\mathsf{count}}

\newcommand{\dist}[1]{\Delta({#1})}
\newcommand{\degen}{\delta}
\newcommand{\cross}{\times}
\newcommand{\pa}{\mathsf{par}}
\newcommand{\doo}{\mathsf{do}}
\newcommand{\msf}{\mathsf}

\newcommand*{\cons}{{\cdot}}

\newcommand*{\filter}[2]{\lfloor{#1}{\downarrow}{#2}\rfloor}

\newcommand{\posres}{\text{\tiny\texttt{+}}}
\newcommand{\negres}{\text{\tiny\texttt{-}}}

\newcommand*{\email}[1]{\url{#1}}

\newcommand*{\given}{\mathop{{|}}}

\newcommand{\lnmap}{\ln^{\!*}}
\DeclareMathOperator{\coss}{coss}

\DeclareMathOperator{\st}{\,s.t.\,}

\author{Michael Carl Tschantz\\
\email{mct@berkeley.edu}\\
UC Berkeley
\and Amit Datta\\
\email{amitdatta@cmu.edu}\\
Carnegie Mellon University
\and Anupam Datta\\
\email{danupam@cmu.edu}\\
Carnegie Mellon University
\and Jeannette M. Wing\\
\email{wing@microsoft.com}\\
Microsoft Research}

\pagestyle{plain}

\title{A Methodology for Information Flow Experiments\thanks{The first three sections of this technical report have significant overlap with a previous technical report~\cite{tschantz13tr}.
This research was supported by the U.S.\ Army Research Office grants DAAD19-02-1-0389 and W911NF-09-1-0273 to CyLab, by the National Science Foundation (NSF) grants CCF0424422 and CNS1064688, and by the U.S.\ Department of Health and Human Services grant HHS 90TR0003/01. The views and conclusions contained in this document are those of the authors and should not be interpreted as representing the official policies, either expressed or implied, of any sponsoring institution, the U.S.\ government or any other entity.}}

\begin{document}
\maketitle

\begin{abstract}
Information flow analysis has largely ignored the setting where the analyst has neither control over nor a complete model of the analyzed system.  We formalize such limited information flow analyses and study an instance of it: detecting the usage of data by websites.  We prove that these problems are ones of causal inference.  Leveraging this connection, we push beyond traditional information flow analysis to provide a systematic methodology based on experimental science and statistical analysis.  Our methodology allows us to systematize prior works in the area viewing them as instances of a general approach.  Our systematic study leads to practical advice for improving work on detecting data usage, a previously unformalized area.  We illustrate these concepts with a series of experiments collecting data on the use of information by websites, which we statistically analyze.
\end{abstract}

\section{Introduction}
\label{sec:intro}

\paragraph{Web Data Usage Detection}
Concerns about privacy have led to much interest in determining how third-party associates of first-party websites use information they collect about the visitors to the first-party website.  Mayer and Mitchell provide a recent presentation of research that tries to determine what information these third-parties collect~\cite{mayer12sp}.  Others have attempted to determine what these third-parties \emph{do} with the information they collect~\cite{guha10imc,wills12wpes,balebako12w2sp,sweeney13cacm}.  We call this problem \emph{web data usage detection} (WDUD).

The researchers involved in WDUD each propose and use various analyses to determine what information is tracked and how it is used.  They primarily design their analyses by intuition and do not formally present or study their analyses.  Thus, questions remain:
\begin{enumerate}
\item Are the analyses used correct?
\item Are they related to more formal prior work?
\end{enumerate}

To answer these questions, we must start with a formal framework that can express the problem and the analyses.  In essence, each of these works is conducting an information flow analysis: the researchers want to know when information flows to a third-party and where it goes from there.  Thus, the natural starting point for such a formalism is prior research on \emph{information flow analysis} (IFA).
However, despite the great deal of research on IFA (see \cite{sabelfeld03journal} for a survey), we know of no attempt to relate or inform WDUD research with the models or techniques of IFA, even in an informal manner.

We believe this disconnect exists for an important reason: the traditional motivation for IFA, designing secure programs, pushes it away from analyzing third-party systems as done in WDUD.
Typically, the analyst is seen as verifying that a system under his control protects information sensitive to the system.  Thus, the problems studied and analyses proposed tend to presume that the analyst has access to the program running the system in question.  

In WDUD, the analyzed system can be adversarial with the analyst aligned with a \emph{data subject} whose information is collected by the system.  In this setting, the analyst has no access to the program running the third-party service, little control over its inputs, and a limited view of its behavior.  Thus, the analyst does not have the information presupposed by traditional IFAs.  
To understand the WDUD problem as an instance of IFA requires a fresh perspective on IFA.

\paragraph{Other Atypical IFA Problems}
The implicit assumptions underlying much of IFA research also obscure its connection to other areas of research. 

For example, the cryptography community has much work on identifying illicit flows of files.
Such work has included \emph{watermarking}~\cite{wagner83sp,swanson98ieee}, in which a key that links to the identity of the person to whom the publisher sold the copy is embedded in the work.  \emph{Traitor tracing} is the special case of determining who illicitly provided cryptographic keys to enable decrypting data~\cite{chor94crypto}.  

Closely related is the detection of plagiarism.  One approach the publisher can use for this problem is to employ a \emph{copyright trap}: deliberately unusual (typically, false) information inserted into reference works to detect copying.  For example, a map might include a \emph{trap street} that is purposely misplaced and/or misnamed~\cite{monmonier96maps}.  If another publisher mechanically copies the map, the inclusion of the trap street in the copy will indicate the copying.  

Organizations handling sensitive data are concerned about data misuse.
For example, governments are concerned with employees leaking classified documents to reporters or foreign spies.  
For ethical reasons and to comply with regulations, such as the HIPAA Privacy Rule~\cite{hipaa}, healthcare providers limit the use of personal health information.  
Thus, organizations have adopted a variety of methods to discourage the misuse of such data by their employees~\cite{symantec,rsa}.  For example, investigators have employed \emph{Barium meals}, 
a watermarking-like analysis~\cite{wright87spycatcher}.  To use a Barium meal, the investigators feed different versions of classified information to each suspect leaker.  While the investigators cannot see what each suspect does with this information directly, they may be able to infer the identity of the leaker based upon newspaper accounts of the leaked information.  As another example, a company can distribute email lists to business partners with varying fake addresses, or \emph{honeytokens}~\cite{spitzner10symantec} or with varying subsets of the data~\cite{papadimitriou11data}.

In essence, these works are all IFAs.   In particular, the analyst, who is aligned with the copyright holder or organization, would like to determine whether a system (typically a personal computer or person) is enabling an illicit flow of information.  However, those working on these problems have not typically discussed them as such since they do not fit into the traditional IFA setting.  In particular, the analyst has little if any access or control over the analyzed system.  Like with WDUD, the analyst must investigate an uncontrolled black box.  Indeed, we find that some of the intuitive approaches used in WDUD are related to cryptographic measures used in piracy detection.

\paragraph{Goal} 
Our goal is to systematize the information flow problems and analyses common to these areas of research.  
To do so, we identify the limited abilities of the analyst in these problems. 
as a form of analysis between the extremes of white box program analysis and black box monitoring.  We show that the ability of the analyst to control some inputs during an investigation enables \emph{information flow experiments} that manipulate the system in question to discover its use of information without a white box model of the system.  Our  framework provides a fresh perspective both on our diverse set of motivating applications and on IFA by allowing us to elucidate and challenge approaches in these areas and in IFA. 

The overarching contribution of this work is relating IFA in these nontraditional settings to experiments designed to determine causation.  To do so, we prove a connection between information flow and causality, which allows us to reduce these problems to well understood empirical ones.  In particular, it allows us to use statistical analyses in the place of traditional methods of IFA, such as program analysis.

\paragraph{Overview}
We start with a closer examination of our motivating applications of WDUD in Section~\ref{sec:areas}.  We focus on WDUD as the least understood of the motivating problems.

We then discuss IFA in general and the limitations of traditional IFA in Section~\ref{sec:ifa}.  
We abstract over particular problems to systematize a class of IFA that has gone unformalized.  We shift IFA from its traditional context of program analysis using white box models of software to the new context of investigating black box systems that hide much of their behavior and operate in uncontrolled environments.  This work systematizes the common but hitherto independent efforts of our motivating applications by unifying them under one framework. 

In particular, we formalize these problems in terms of a version of noninterference, the primary definition of traditional IFA~\cite{gm82security}, giving the first systematic expression of the WDUD problem.
We prove that sound information flow detection is impossible in this setting (Theorem~\ref{thm:unsound-inter}).

Motivated by the impossibility result, we look for an alternative statistical approach.  Fortunately, IFA is related to causality, a much studied concept for which statistical analyses already exist.  In Section~\ref{sec:cause}, we prove that a system has interference from a high-level user $H$ to low-level user $L$ in the sense of IFA if and only if inputs of $H$ can have a causal effect on the outputs of $L$ while the other inputs to the system remain fixed (Theorem~\ref{thm:inter-cause}).  This connection allows us to appeal to inductive methods employed in experimental science to study IFA.  
Such methods provide precisely what we need in the face of our unsoundness results to make high-assurance statistical claims about flows.

We leverage this observation to approach WDUD with information flow experiments.
 Section~\ref{sec:doe} discusses how to conduct such experiments.
While many of the issues discussed are well known to scientists, we must adapt them to our setting.
We show a correspondence between the features of WDUD and the requirements of a scientific study (Table~\ref{tbl:compare}).
 We pay particular attention to general principles that should guide the design of information flow experiments rather than attempting to provide a cookbook approach, which often leads to misapplication~\cite{ludbrook08ije}.

Section~\ref{sec:stats} reviews significance testing as a systematic method of quantifying the degree of certainty that an information flow experiment has observed interference.  In particular, we focus on permutation testing~\cite{good05book}, a method of significance testing that we find particularly well suited to the setting of WDUD in which we have little knowledge of the web tracker's internal behavior.

Section~\ref{sec:stings} provides a systematic look at prior works in WDUD.  We analyze each of them under the unifying method of permutation testing.  This unification allows us to compare and contrast their disparate, and often ad hoc, methods.  We find the strengths and weaknesses of their experimental designs.  We also empirically benchmark our interpretations of their approaches with our own WDUD study, which we believe to be the first to come with an analysis of correctness (Section~\ref{sec:pos}).

We end by discussing future work.  We first provide practical suggestions, which are summarized in Section~\ref{sec:sugg-conc}, for conducting future WDUD studies in a systematic fashion.
We then discuss directions for new research that apply the connection between information flow and causality to other security problems.

\paragraph{Contributions}
Our methodology is supported by a chain of contributions that follows the paper's outline:

\newcommand*{\itmnum}[1]{Section~{#1}}
\newcommand*{\egap}{1ex}
\begin{center}
\begin{tabular}{ll}
\itmnum{\ref{sec:ifa}} & a systematization of nontraditional IFA\\
\itmnum{\ref{sec:cause}} & a proof of a connection between IFA and causality\\
\itmnum{\ref{sec:doe}} & an experimental design leveraging this connection\\
\itmnum{\ref{sec:stats}} & a statistical approach to analyzing experimental data\\
\itmnum{\ref{sec:stings}} & a systematization prior studies under a unifying method\\
\end{tabular}
\end{center}

These contributions are each necessary for creating a chain of sound reasoning from intuition about vague problems to rigorous quantified results in a formal model.  This chain of reasoning provides a systematic, unifying, view of these problems, which leads to a concrete methodology based on well studied scientific methods.
While the notion of experimental science is hardly new, our careful justification provides guidance on the choices involved in actually conducting an information flow experiment.

Throughout this work, we present our own experiments to illustrate the abstract concepts we present.
These results may also be of independent interest to the reader.
  To keep the presentation clear, we focus on only WDUD.

In addition to containing details of experiments and results, the appendices found at the end of this document also contain formal models and the proof of each of our theorems.  We make the code used to run our experiments and the data collected available at:\\
\centerline{\url{http://www.cs.cmu.edu/~mtschant/ife/}}

The systematization of experimental approaches to security is becoming increasingly important as technology trends (e.g., Cloud and Web services) result in analysts having limited access to 
and control over systems whose properties they are expected to study. This paper provides a useful starting point towards such a systematization by 
providing a common model and a shared vocabulary of concepts that ties together seemingly disparate areas of security and privacy by placing them in the context of causality, experimentation, and statistical analysis.

\paragraph{Prior Work}
Three of the authors have previously identified the need to formalize the setting of information flow experiments~\cite{tschantz13tr}.  Their prior technical report overlaps significantly with the first three sections of this report.  Their approach did not use standard statistical analyses or experimental designs.  They instead identified assumptions that would categorically justify the implicit reasoning of prior information flow analyses.

Ruthruff, Elbaum, and Rothermel note the usefulness of experiments for program analysis~\cite{ruthruff06issta}. 
 Whereas our work focuses on problems where traditional white box analyses are impossible, their work examines experiments in the more traditional setting where the analyst has control over the system in question.
Furthermore, rather than provide an informal overview of how experiments can be used for program analysis, we develop a formalism relating informal flow and causality, provide proofs, and present a statistical analysis.  

While we could not find any prior articulation of this formal correspondence between informal flow and causality (our Theorem~\ref{thm:inter-cause}), we are not the first to note such a connection. 
McLean~\cite{mclean90sp} and Mowbray~\cite{mowbray92csf} each proposed a definition of information flow that uses the lack of a causal connection to rule out security violations even if there is a flow of information from the point of view of information theory. %
Sewell and Vitek provide a ``causal type system'' for reasoning about information flows in a process calculus~\cite{sewell00csf}.  We differ from these works by showing an equivalence between a standard notion of information flow, noninterference~\cite{gm82security}, and a standard notation of causality, Pearl's~\cite{pearl09book}, rather than using an ad hoc notion of causality to adjust an information theoretic notion of information flow.
 Furthermore, Mowbray's formalism requires white box access to the system while McLean's only considers temporal ordering as a source of causal knowledge.  
More importantly, they use causality to handle problematic edge cases in their formalisms whereas we reduce interference to causality so that we may apply standard methods from experimental science to IFA, which we discuss in the next section.

In Section~\ref{sec:areas}, we discuss prior works on WDUD and we show in Section~\ref{sec:stings} that our methodology can formalize them.  
In Section~\ref{sec:ifa} we discuss in detail prior work on IFA and why it is insufficient for our goal of black box program analysis.  
We draw on works from experimental design and statistical analysis, whose discussion we defer until the point of use.

\section{Web Data Usage Detection}
\label{sec:areas}

Users visiting websites provide vast amounts of information and yet have little understanding of how the website might use the information.  In particular, websites provide little information about how one provided input might affect what the user sees on that page or others.  
A visitor might be unaware of and surprised by the flows of information from one place to the next on the web and how these information flows impact their treatment.  (For a survey, see~\cite{mayer12sp}.)

A first step to understanding these flows is determining what information a website collects (e.g.,~\cite{krishnamurthy11w2sp}).     However, more difficult is detecting the \emph{usage} of such data.  That is, determining how the collected information impacts the treatment of the visitor on that and affiliated sites.
Researchers working on this problem of web data usage detection (WDUD), must infer from interactions over the Internet the unseen flows of information within and between web servers.

Wills and Tatar studied how Google selects ads based on information provided by the website visitor via first-party websites~\cite{wills12wpes}.  The authors draw conclusions about Google's information use in two ways.  First, they observed Google showing them (posing as normal website visitors) ads that included sensitive information they provided to Google by interacting with a website that uses a Google service, such as Ad Sense.  Second, when posing as two different users with different interests, they observed Google showing them ads differing in ways related to the differing interests.

Guha et al.\ study a similar problem using a more statistical approach~\cite{guha10imc}.  Like Wills and Tatar, they would pose as various visitors with different characteristics.  To test whether some change between two user profiles resulted in a change in Google's ads, they would pose as the first profile twice and as the second profile once.  By using the same profile twice, they could calculate the baseline amount of noise or ``ad churn'' in the ads independent of the change.  If the change between the first and second profile is larger than this baseline, they then conclude that the change between profiles caused the increased difference in the ads.  Balebako et al.\ adopt the methodology of Guha et al.\ to study the effectiveness of web privacy tools~\cite{balebako12w2sp}.

While Wills and Tatar look at the differences between ads to determine whether they have anything to do with sensitive information, Guha et al.\ do not attempt to interpret the ads to see what could have caused the change.  (They did look at the ads while validating their analysis.)  While the analysis of Wills and Tatar leads to a better understanding of how the website is using the information, the analysis of Guha et al.\ can find changes that people are apt to miss since the relationship between the changes in input and output are not immediately clear or because they take a larger sampling to notice than is possible with manual inspection.  For example, they find that a profile purportedly of a homosexual male gets a large increase in nursing school ads, which may have been missed by the Wills and Tatar's method since there is no clear connection between the change in the profile to the change in the ads.  As Guha et al.\ point out, this lack of connection makes this discovery more important since the website visitor would also be unlikely to realize that responding to the nursing ad could leak sensitive information to the nursing program.

Recently, Sweeney conducted an information flow experiment in which she examined the flow of information from a search field to ads shown along side the search results~\cite{sweeney13cacm}.  She found that searching for characteristically black names yielded more ads for InstantCheckmate featuring the word ``arrested'' than searching for characteristically white names.  She found this result on both the websites of Google and of Reuters.  Unlike the preceding studies, she used a statistical test, the $\chi^2$ test, to analyze her results and found them to be significant.  

After introducing the machinery necessary to so do, we will systematically analyze each of these information flow experiments in Section~\ref{sec:stings}.  We will examine each of them in relation to the permutation test, which will allow us to discuss their strengths and make suggestions for improvements.

\section{Information Flow Analysis}
\label{sec:ifa}

In this section, we discuss prior work on information flow analysis starting with noninterference, a formalization of information flows.  We next discuss the analyses used in prior work to determine whether a flow of information exists.  We present them systematically by the capabilities they require of the analyst.  We end by discussing the capabilities of the analyst in our motivating applications, how prior analyses are inappropriate given these capabilities, and the inherent limitations of these capabilities.

\subsection{Noninterference}

Goguen and Meseguer introduced \emph{noninterference} to formalize when a sensitive input to a system with multiple users is protected from untrusted users of that system~\cite{gm82security}.  Intuitively, noninterference requires that the system behaves identically from the perspective of untrusted users regardless of any sensitive inputs to the system.  

As did they, we will define noninterference in terms of a synchronous finite-state Moore machine.
The inputs that the system accepts are tuples where each component represents the input received on a different input channel.  
Similarly, our outputs are tuples representing the output sent on each output channel.
For simplicity, we will assume that the machine has only two input channels and two output channels, but all results generalize to any finite number of channels.  

We partition the four channels into $H$ and $L$ with each containing one input and one output channel.  
Typically, $H$ corresponds to all channels to and from high-level users, and $L$ to all channels to or from low-level users.  
The high-level information might be private or sensitive information that should not be mixed with public information, denoted by $L$.  In the area of taint analysis, the roles are reversed in that the tainted information is untrusted and should not be mixed with trusted information on the trusted channel.  However, either way, the goal is the same: keep information on the input channel of $H$ from reaching the output channel of $L$.

We will often have a single user using channels of both sets since we are concerned with not only to whom information flows but also under what contexts.  To this end, we interpret \emph{channel} rather broadly to include virtual channels created by multiplexing, such as a field of an HTML form or the ad container of a web page.  We also allow for each channel's input/output to be a null message indicating no new input/output.

A system $q$ consumes a sequence $\vec{\imath}$ of input pairs where each pair contains an input for the high and the low input channels.   We write $q(\vec{\imath})$ for the output sequence $\vec{o}$ that $q$ would produce upon receiving $\vec{\imath}$ as input where output sequences are defined as a sequence of pairs of high and low outputs.

For an input sequence $\vec{\imath}$, let $\filter{\vec{\imath}}{L}$ denote the sequence of low-level inputs that results from removing the high-level inputs from each pair of $\vec{\imath}$.
That is, it ``purges'' all high-level inputs.  
We define $\filter{\vec{o}}{L}$ similarly for output sequences.

\begin{definition}[Noninterference]\label{def:noninter}
A system $q$ has \emph{noninterference} from $L$ to $H$ iff for all input sequences $\vec{\imath}_1$ and $\vec{\imath}_2$, 
\[ \filter{\vec{\imath}_1}{L} = \filter{\vec{\imath}_2}{L} \text{ implies } \filter{q(\vec{\imath}_1)}{L} = \filter{q(\vec{\imath}_2)}{L} \]
\end{definition}

Intuitively, if inputs only differ in high-level inputs, then the system will provide the same low-level outputs.

To handle systems with probabilistic transitions, we will employ a probabilistic version of noninterference similar to the previously defined \emph{probabilistic nondeduciblity on strategies}~\cite{gray91sp}.  To define it, we let $Q(\vec{\imath})$ denote a probability distribution over output sequences given the input $\vec{\imath}$, a concept that can be made formal given the probabilistic transitions of the machine~\cite{gray91sp}.  We define $\filter{Q(\vec{\imath})}{L}$ to be the distribution $\mu$ over sequences $\vec{\ell}$ of low-level outputs such that $\mu(\vec{\ell}) = \sum_{\vec{o} \st \filter{\vec{o}}{L} = \vec{\ell}} Q(\vec{\imath})(\vec{o})$.

\begin{definition}[Probabilistic Noninterference]
A system $Q$ has \emph{probabilistic noninterference} from $L$ to $H$ iff for all input sequences $\vec{\imath}_1$ and $\vec{\imath}_2$, 
\[ \filter{\vec{\imath}_1}{L} = \filter{\vec{\imath}_2}{L} \text{ implies } \filter{Q(\vec{\imath}_1)}{L} = \filter{Q(\vec{\imath}_2)}{L} \]
\end{definition}

\subsection{Analysis}

Information flow analysis (IFA) is a set of techniques to determine whether a system has noninterference (or similar properties) for interesting sets $H$ and $L$.  
Proving (non)interference by brute force is difficult for systems with many possible inputs especially when the system, its inputs, or its outputs are out of the control or view of the analyst.  Thus, analysts must employ strategic analyses specialized to his capabilities. 

IFA grew out of the demand to build military computers respecting mandatory access controls (MAC).  Thus, much of the work in the area presumes that the analyst has a degree of control over the production of the analyzed system.
Examples include analyses employing type systems~\cite{volpano96jcs,sabelfeld03journal}, model checking of code~\cite{barthe04csf}, or dynamic approaches that instrument the code running the system to track values carrying sensitive information (e.g.,~\cite{vachharajani04micro,newsome05ndss,venkatakrishnan06icics,mccamant07plas}).

The above methods are inappropriate for WDUD since they require \emph{white box} access to the program.  That is, the analyst must be able to study and/or modify the code.  
In our applications, the analyst must treat the program as a \emph{black box}.  That is, the analyst can only study the I/O behavior of the program and not its internal structure.  Black box analyses vary based on how much access they require to the system in question.  
Figure~\ref{fig:taxonomy} shows a taxonomy of analyses.
\begin{figure}
\begin{center}
\includegraphics{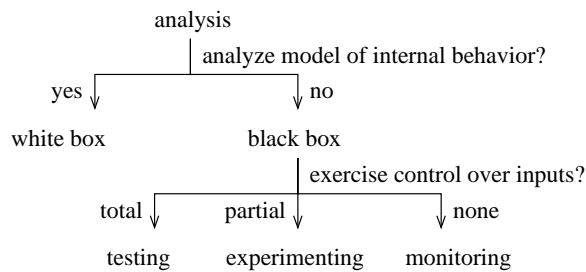}
\end{center}
\caption{Taxonomy of analyses}
\label{fig:taxonomy}
\end{figure}

Numerous black box analyses for detecting information flows exist that operate by running the program multiple times with varying inputs to detect changes in output that imply interference~\cite{yumerefendi07tightlip,guernic07asian,jung08ccs,capizzi08preventing,devriese10noninterference}.
However, these black box analyses continue to require access to the internal structure of the program even if they do not analyze that structure.  For example, the analysis of Yumerefendi et al.\ requires the binary of a program to copy it into a virtual machine for producing I/O traces~\cite{yumerefendi07tightlip}.  In theory, such black box analyses could be modified to not require any access to code by completely controlling the environment in which the program executes.  To do so, the analyst would run a single copy of the program and reset its environment to simulate having multiple copies of the system.  We call this form of black box analysis, with total control over the system, \emph{testing} as it is the setting typical to software testing 
and testing notions of equivalence (e.g.,~\cite{nicola83calp,nicola84tcs}).

Testing will not work for our applications.  For example, in the application of WDUD, the analyst cannot run the program multiple times since the analyst has only limited interactions with the program over a network.  Thus, it cannot force the program into the same initial environment to reset it.  Furthermore, unlike a program, Google's ad \emph{system} is stateful and, thus, modifying its environment alone would be insufficient to reset it.  In this setting, the analyst must analyze the \emph{system} as it runs, not a program whose environment the analyst can change at will.

At the opposite extreme of black box analysis is \emph{monitoring}, which passively observes the execution of a system. 
While some monitors are too powerful by being able to observe the internal state of the running system (e.g.~\cite{schneider00tissec}), others match our needs in that the analyst only has access to a subset of the program's outputs (e.g.,~\cite{garg11ccs}).  
However, all monitors are too weak since they cannot provide inputs to the system as our application analysts can.  We need a form of black box analysis between the extremes of testing and monitoring.

Thus, we find that no prior work on IFA that corresponds to the capabilities of the analyst in WDUD or other motivating applications.

\subsection{Information Flow Experiments}
\label{sec:ife-impossible}

Unlike the primary motivation of traditional IFA, developing programs with MAC, our motivating examples involve situations in which the analyst and the system in question are not aligned.  Thus, the information available to the analyst is much more limited than in the traditional security setting.
In particular, the analyst
\newcommand{\itemm}{\item} %
\begin{enumerate}
\itemm has no model of or access to the program running the system,
\itemm cannot observe or directly control the internal states of the system,
\itemm has limited control over and knowledge of the environment of the system,
\itemm can observe a subset of the system's outputs, and
\itemm has control over a subset of the inputs to the system.
\end{enumerate}
We will call performing IFA in this setting \emph{experimenting}.  Experiments may be viewed as an interactive extension of a limited form of execution monitoring that allows for analyst inputs but limits the analyst to only observing a subset of system I/O.

Prior work shows that no monitor can detect information flows~\cite{mclean94sp, schneider00tissec, volpano99sas}.  
We argue that experiments, with the additional ability to control some inputs to the system, do not improve upon this situation.  In particular, we prove that no non-degenerate analysis can be sound for interference or for noninterference, even on deterministic systems.

Before presenting the formal theorems, let us intuit why checking for interference would be difficult in this setting.  
To start, let us examine the difficulties in producing a sound (no false positives) method for determining that Google has interference.  That is, we would like a method that upon returning a positive result implies that Google did in fact use some high-level information to select some low-level output.  For example, the high-level information could be a search query to Google and the low-level outputs could be the ads that Google shows at some later point.

Note that first two limitations above forces the analyst to determine interference by examining only the inputs and outputs to the system.  Since this prohibits white box analysis, to conclude interference, she would need to observe input sequences $\vec{\imath}_1$ and $\vec{\imath}_2$ such that $\filter{\vec{\imath}_1}{L} = \filter{\vec{\imath}_2}{L}$ but $\filter{q(\vec{\imath}_1)}{L} \neq \filter{q(\vec{\imath}_2)}{L}$ where $q$ is the system and $L$ is the set of low-level inputs (recall Definition~\ref{def:noninter}).

However, the third limitation prevents the analyst from observing all the inputs to determine that $\filter{\vec{\imath}_1}{L} = \filter{\vec{\imath}_2}{L}$ unless $L$ includes only inputs that the analyst can observe.  Since every input must be either low-level or high-level and only the user's gender is high-level, the low-level inputs include many inputs that the analyst cannot observe such as inputs from advertisers to Google.
(Furthermore, ideally, the analyst would have control over the inputs to ensure that they are equal instead of merely hoping that equality occurs.)

To eliminate the unobservable low-level inputs, the analyst must shrink the set of low-level inputs.  One means of achieving this goal is to consider more inputs high-level.  However, if the inputs converted to be high-level are already known to determine the ads shown (such as inputs from advertisers), then the analysis would be of little interest.  Another means would be to eliminate the inputs from Google, but the analyst does not have such control over Google.  However, the analyst does have control over which system she studies.  Rather than study Google in isolation, she could study the composite system of Google and the advertisers operating in parallel.  By doing so, she converts the unobserved low-level inputs to Google from the advertisers into internal messages of the composite system, which are irrelevant to whether interference occurs.

In some sense we have converted the problem from one of experimenting proper to one more akin to testing the composite system.  However, even with this conversion, the analyst still does not have total control over the system in question (i.e., the composite one) since the analyst still cannot alter the internal structures of the system.  In particular, by the second limitation, the analyst cannot reset the system as analysts commonly do while testing the system's behavior on various input sequences.  Thus, the analyst in our setting cannot actually run two input sequences since doing so changes the internal initial state of the second run; we are not truly in a testing situation.  Furthermore, this limitation results in unsoundness even for the composite system as we show below.

To prove this unsoundness of black box analyses for interference, we consider an arbitrary system $q$ for which an analysis returns a positive result indicating interference.  In our setting, the analysis must base its decision solely upon its interactions with the system.  Thus, it will return the same positive result for a system $q_{\mathrm{N}}$ that always produces the same outputs as $q$ did irrespective of its inputs.  Since $q_{\mathrm{N}}$ always produces these outputs, it has noninterference making the positive result false.

\begin{theorem}\label{thm:unsound-inter}
Any black box analysis that ever returns a positive result from interference for $H$ to $L$ is unsound for interference from $H$ to $L$.
\end{theorem}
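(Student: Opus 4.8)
The plan is to formalize the intuition already sketched in the text: an experiment's behavior depends only on its observed interactions with the system, so it cannot distinguish a system that genuinely uses high-level inputs from a ``replay'' system that ignores them. First I would make precise what a ``black box analysis'' (an experiment) is in this setting: it is an interactive procedure that, at each step, chooses a low-level input and a high-level input to feed the system, observes the low-level component of the system's output, and eventually halts with a verdict in $\{\posres, \negres\}$. Crucially, by the first two limitations enumerated in Section~\ref{sec:ife-impossible}, the verdict is a function only of the transcript of (analyst-chosen inputs, observed low-level outputs) that the interaction produced --- the analysis has no other access to $q$.

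Next I would fix an arbitrary system $q$ and an arbitrary run of the analysis against $q$ that returns $\posres$. Let $\vec{\imath}$ be the full input sequence fed during that run and let $\vec{o} = q(\vec{\imath})$ be the resulting output sequence, so that $\filter{\vec{o}}{L}$ is the low-level transcript the analysis saw. Now construct the ``constant'' system $q_{\mathrm{N}}$ that, on \emph{every} input sequence, outputs the fixed sequence $\vec{o}$ (truncated or padded to the appropriate length, and in particular agreeing with $\vec{o}$ on every prefix length the analysis could reach). Such a $q_{\mathrm{N}}$ is realizable as a finite-state Moore machine that simply walks down a fixed output track ignoring its inputs. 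The key observation is then twofold. First, $q_{\mathrm{N}}$ has noninterference from $H$ to $L$ trivially: for any $\vec{\imath}_1, \vec{\imath}_2$ we have $q_{\mathrm{N}}(\vec{\imath}_1) = q_{\mathrm{N}}(\vec{\imath}_2) = \vec{o}$, so the implication in Definition~\ref{def:noninter} holds vacuously in its conclusion; hence a $\posres$ verdict on $q_{\mathrm{N}}$ is false, witnessing unsoundness. Second, when the analysis is run against $q_{\mathrm{N}}$ and makes exactly the same input choices it made against $q$ (which it will, since its choices are determined by the transcript so far, and the transcripts coincide step by step), it observes exactly the same low-level outputs, so by induction on the number of interaction steps the two runs produce identical transcripts and therefore the analysis halts with the same verdict $\posres$ on $q_{\mathrm{N}}$.

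Putting these together: any black box analysis that returns $\posres$ on some system $q$ also returns $\posres$ on the noninterfering system $q_{\mathrm{N}}$, and is therefore unsound for interference from $H$ to $L$. The main obstacle --- really the only place care is needed --- is the inductive argument that the two runs stay in lockstep: I must be explicit that the analyst's input at step $k$ is a (possibly randomized, but then we couple the randomness) function of the first $k-1$ observed low-level outputs only, and that those agree between the $q$-run and the $q_{\mathrm{N}}$-run because $q_{\mathrm{N}}$ was defined to replay $\filter{\vec{o}}{L}$. One also has to handle the degenerate bookkeeping that $q_{\mathrm{N}}$ must be a legitimate machine of the assumed form and must be defined on inputs of every length, which is why I build it to emit a fixed infinite (or sufficiently long finite) output track; this is routine and is exactly why the theorem excludes ``degenerate'' analyses that never return a positive result at all.
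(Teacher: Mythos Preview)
Your proposal is correct and follows essentially the same approach as the paper: both construct a ``replay'' Moore machine $q_{\mathrm{N}}$ that ignores its inputs and walks through the fixed output trace observed on the positive run, so that $q_{\mathrm{N}}$ trivially has noninterference yet is indistinguishable from $q$ to the black box analysis. Your treatment is, if anything, slightly more careful than the paper's in making the lockstep/inductive argument explicit for adaptive (and randomized) analyses; the paper simply asserts $q_{\mathrm{N}}(\vec{\imath}_\posres) = q_\posres(\vec{\imath}_\posres)$ and that ``$A$ cannot tell them apart,'' leaving that step-by-step coincidence implicit.
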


The argument for noninterference is symmetric, but requires that interference is possible given the system's input and output space.  That is, the system must have at least two high inputs and two low outputs.

\begin{theorem}\label{thm:unsound-noninter}
Any black box analysis that ever returns a positive result for noninterference from $H$ to $L$ is unsound for noninterference from $H$ to $L$ if $H$ has two inputs and $L$ has two outputs.
\end{theorem}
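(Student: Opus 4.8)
The plan is to mirror the argument sketched for Theorem~\ref{thm:unsound-inter}, but now in the opposite direction: we are given an analysis that sometimes reports \emph{noninterference}, and we must exhibit a system on which it reports noninterference yet which actually has interference from $H$ to $L$. The hypothesis that $H$ has two inputs and $L$ has two outputs is exactly what guarantees that an interfering system over the given I/O alphabet exists, so there is something for the false report to be false about.

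First I would fix an analysis $A$ and a system $q$ on which $A$, after some finite sequence of interactions, returns ``noninterference.'' Since the analyst in the experimenting setting (limitations 1--5) bases its verdict solely on the observable subset of I/O that it exchanges with the system, and since it exercises only finitely many interactions before returning, only a finite prefix of $q$'s behavior is ever probed. Second I would construct a companion system $q_{\mathrm{I}}$ that agrees with $q$ on every interaction $A$ could possibly have performed up to the point of its verdict --- so $A$ cannot distinguish $q$ from $q_{\mathrm{I}}$ and returns the same ``noninterference'' verdict on $q_{\mathrm{I}}$ --- but which, on some input sequence pair that $A$ never explored, genuinely exhibits interference. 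Concretely, pick two high inputs $h_1 \neq h_2$ and two low outputs $\ell_1 \neq \ell_2$ (available by hypothesis); let $q_{\mathrm{I}}$ behave like $q$ until it has ``escaped'' the finitely many interaction histories witnessed by $A$, and thereafter, on two input sequences $\vec{\imath}_1,\vec{\imath}_2$ with $\filter{\vec{\imath}_1}{L} = \filter{\vec{\imath}_2}{L}$ differing only in carrying $h_1$ versus $h_2$ on the high channel at the escaping step, have $q_{\mathrm{I}}$ emit $\ell_1$ in one case and $\ell_2$ in the other. Then $\filter{q_{\mathrm{I}}(\vec{\imath}_1)}{L} \neq \filter{q_{\mathrm{I}}(\vec{\imath}_2)}{L}$, so by Definition~\ref{def:noninter} (and its probabilistic analogue if needed) $q_{\mathrm{I}}$ has interference from $H$ to $L$, contradicting the verdict. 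Hence no positive verdict for noninterference can ever be sound.

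The step I expect to be the main obstacle is making precise what ``the interactions $A$ could have performed'' means and showing $q_{\mathrm{I}}$ can be realized as a legitimate (synchronous, finite-state Moore, possibly probabilistic) machine over the stipulated channel alphabet while still faithfully imitating $q$ on all of those interactions. One has to be careful that $A$ may be adaptive and may branch on the outputs it sees, so the set of histories to be preserved is a tree rather than a single path; however, because $A$ halts after finitely many steps and the I/O alphabets are finite, this tree is finite, and $q_{\mathrm{I}}$ need only coincide with $q$ on a finite set of histories. Realizing the ``escape then diverge'' behavior within a finite-state machine is routine once one observes that $q_{\mathrm{I}}$ need not be a minor perturbation of $q$ --- it only needs to be indistinguishable by $A$ --- and one can, for instance, take $q_{\mathrm{I}}$ to replay $q$'s output table on the recorded finite history tree and switch to a fixed two-state interfering gadget off that tree. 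A secondary subtlety is the asymmetry flagged in the remark after the theorem: unlike the interference case, here we genuinely need the richer I/O space, since over a degenerate alphabet (one high input or one low output) \emph{every} system has noninterference and a constant ``noninterference'' analysis would be vacuously sound; the two-input/two-output hypothesis is what rules this out and must be invoked explicitly when choosing $h_1,h_2,\ell_1,\ell_2$.
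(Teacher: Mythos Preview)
Your proposal is correct and follows the same high-level strategy as the paper: fix a run on which $A$ declares noninterference, then build a system $q_{\mathrm{I}}$ that is indistinguishable to $A$ on that run yet genuinely has interference, using the two high inputs and two low outputs to manufacture the divergence.

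The paper's construction is considerably simpler than what you outline, however. In the paper's model the black-box analyst interacts with a single running copy of the system, so its interaction is recorded by a \emph{single} input sequence $\vec{\imath}_\negres$ and the resulting trace $[s_1,o_1,i_1,s_2,o_2,\ldots]$, not a tree of histories. Consequently the paper does not replay $q$ on an entire finite tree; it simply adjoins two fresh states $s_{00},s_{01}$ to $q_\negres$, makes $s_{00}$ the new initial state, and has $s_{00}$ transition into the old machine on input $\vec{\imath}_\negres[1]$ but into the absorbing state $s_{01}$ (whose low output $o_{01}$ differs from $o_2$) on any other first input. Interference is then witnessed at the very first step by any input $i$ that differs from $\vec{\imath}_\negres[1]$ only in its high component. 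Your tree-based construction would also work and is arguably more robust (e.g., against a randomized analyst or a probabilistic system where several traces could arise), but in the deterministic single-trace setting the paper adopts, the two-state gadget sidesteps the bookkeeping you correctly flag as the main obstacle.
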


Note that these theorems hold even if the analyst can observe every input in $H$ and $L$ making the above shift of focus to the composite system of Google operating in its environment unsuccessful.  However, as we will later see, we can probabilistically handle the lack of total internal control of the composite system using statistical techniques.  Since we can never be sure whether we have started a particular sequence of inputs from the same initial state as another sequence, we use many instances of each sequence instead of one for each.  Intuitively, if the outputs for one group of inputs are consistently different from outputs for the other group of inputs, then it is likely that the difference is introduced by the difference between the groups instead from the initial states differing.  We formalize this idea to present a probabilistically sound method of detecting interference.   We leave detecting noninterference to future work.

\section{Causality} 
\label{sec:cause}

In this section, we discuss a formal notion of causality motivated by the studies of the natural sciences.  We then prove that noninterference corresponds to a lack of an effect.  This result allows us to repose WDUD as a problem of statistical inference from experimental data using causal reasoning.

\subsection{Background} 

Let us start with a simple example.  A scientist might like to determine whether a Drug X causes an effect on mouse mortality.
More formally, she is interested in whether the value of the \emph{experimental factor} $X$, recording whether the mouse gets Drug X, causes an effect to a \emph{response variable} $Y$, a measure of mouse mortality, holding all other factors (possible causes) constant.

Pearl~\cite{pearl09book} provides a formalization of \emph{effect} using \emph{structural equation models} (SEMs), a formalism widely used in the sciences (e.g.,~\cite{hoyle12book}).
A probabilistic SEM $M = \langle \mathcal{V}_{\msf{en}}, \mathcal{V}_{\msf{ex}}, \mathcal{E}, \mathcal{P}\rangle$ includes a set of \emph{variables} partitioned into \emph{endogenous} (or dependent) variables $\mathcal{V}_{\msf{en}}$ and \emph{exogenous} (or independent) variables $\mathcal{V}_{\msf{ex}}$.  $M$ also includes in $\mathcal{E}$, for each endogenous variable $V$, a \emph{structural equation} $V := F_V(\vec{V})$ where $\vec{V}$ is a list of other variables not equal to $V$ and $F_V$ is a possibly randomized function.  A structural equation is directional like variable assignments in programming languages. 
Each exogenous variable is defined by a probability distribution given by $\mathcal{P}$.  Thus, every variable is a random variable defined in terms of a probability distribution or a function of them.

Let $M$ be an SEM, $X$ be an endogenous variable of $M$, and $x$ be a value that $X$ can take on. 
Pearl defines the \emph{sub-model} $M[X{:=}x]$ to be the SEM that results from replacing the equation $X := F_X(\vec{V})$ in $\mathcal{E}$ with the equation $X := x$.  
The sub-model $M[X{:=}x]$ shows the \emph{effect} of setting $X$ to $x$.
Let $Y$ be an endogenous variable called the \emph{response variable}.  We define \emph{effect} in a manner similar to Pearl~\cite{pearl09book}.
\begin{definition}[Effect]
The experimental factor $X$ has an \emph{effect} on $Y$ given $Z:=z$ iff there exists $x_1$ and $x_2$ such that the probability distribution of $Y$ in $M[X{:=}x_1][Z{:=}z]$ is not equal to its distribution in $M[X{:=}x_2][Z{:=}z]$.
\end{definition}
Intuitively, there is an effect if $F_Y(x_1, \vec{V}) \neq F_Y(x_2, \vec{V})$ where $\vec{V}$ are the random variables other than $X$.

\subsection{The Relationship of Interference and Causality}

Intuitively, interference is an effect from a high-level input to a low-level output.  Noninterference corresponds to lack of an effect, which Pearl calls \emph{causal irrelevance}~\cite{pearl09book}.

We can make the connection between interference and causality formal by providing a conversion from a probabilistic system to an SEM.  Given a system model $Q$, we define a SEM $M_Q$.  
For each time $t$, $M_Q$ contains the endogenous variables $V_{\mathsf{hi},t}$ and $V_{\mathsf{li},t}$ for the high and low input, and $V_{\mathsf{lo},t}$ for the low output at the time $t$.
The behavior of $Q$ provides functions $F_{\mathsf{lo},t}$ defining the low output at time $t$ in terms of the previous and current inputs, which can be saved to a variable representing state.  (Details may be found in Appendix~\ref{app:connection}.)

To state the theorem, we use $\vec{V}_{\msf{lo}}^{t}$ to denote a vector of low-output response variables ranging in time from $1$ to $t$ and $\vec{V}_{\msf{i}}^{t}$ to represent a similar vector of input factors combining $V_{\mathsf{hi},t}$ and $V_{\mathsf{li},t}$.

\begin{theorem}\label{thm:inter-cause}
$Q$ has probabilistic interference iff there exists low inputs $\ell$ of length $t$ such that $\vec{V}_{\mathsf{hi}}^{t}$ has an effect on $\vec{V}_{\msf{lo}}^{t}$ given $V_{\mathsf{li}}^{t} := \ell$.  
\end{theorem}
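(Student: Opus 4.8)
The plan is to prove both directions of the biconditional by unfolding the definitions of probabilistic interference (the negation of probabilistic noninterference) and of effect, and showing they match after translating through the system-to-SEM conversion $Q \mapsto M_Q$. The key conceptual bridge is that fixing the low inputs to a particular sequence $\ell$ of length $t$ via the SEM intervention $V_{\mathsf{li}}^{t} := \ell$ is exactly what it means, on the system side, to restrict attention to input sequences $\vec{\imath}$ with $\filter{\vec{\imath}}{L} = \ell$ (up to the time horizon $t$), and that the distribution of the low-output vector $\vec{V}_{\msf{lo}}^{t}$ in the submodel $M_Q[\vec{V}_{\mathsf{hi}}^{t}{:=}\vec{h}][V_{\mathsf{li}}^{t}{:=}\ell]$ equals $\filter{Q(\vec{\imath})}{L}$ truncated to length $t$, where $\vec{\imath}$ is the input sequence built from interleaving $\vec{h}$ and $\ell$. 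So the first step is to prove this correspondence lemma: for every pair of sequences $\vec{h}$ and $\ell$ of length $t$, the distribution of $\vec{V}_{\msf{lo}}^{t}$ in $M_Q[\vec{V}_{\mathsf{hi}}^{t}{:=}\vec{h}][V_{\mathsf{li}}^{t}{:=}\ell]$ coincides with the low-purge of $Q$'s output distribution on the combined input sequence, restricted to the first $t$ outputs. This should follow from the construction of the $F_{\mathsf{lo},t}$ in Appendix~\ref{app:connection} by induction on $t$, since each $F_{\mathsf{lo},t}$ computes the low output at time $t$ from the (now fixed) inputs up through time $t$ and the state variable, which is itself a deterministic-plus-noise function of earlier fixed inputs.

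Granting that lemma, the two directions are then essentially bookkeeping. For the forward direction, suppose $Q$ has probabilistic interference: by definition there exist input sequences $\vec{\imath}_1, \vec{\imath}_2$ with $\filter{\vec{\imath}_1}{L} = \filter{\vec{\imath}_2}{L}$ but $\filter{Q(\vec{\imath}_1)}{L} \neq \filter{Q(\vec{\imath}_2)}{L}$. The two output distributions differ, so they already differ on some finite prefix; let $t$ be large enough that the length-$t$ truncations differ, and let $\ell$ be the common low-input prefix of length $t$ (extending the sequences harmlessly if needed so they have length at least $t$, using that inputs can be null messages). Taking $\vec{h}_1, \vec{h}_2$ to be the two high-input prefixes, the correspondence lemma converts the inequality of truncated output distributions into an inequality of $\vec{V}_{\msf{lo}}^{t}$-distributions between $M_Q[\vec{V}_{\mathsf{hi}}^{t}{:=}\vec{h}_1][V_{\mathsf{li}}^{t}{:=}\ell]$ and $M_Q[\vec{V}_{\mathsf{hi}}^{t}{:=}\vec{h}_2][V_{\mathsf{li}}^{t}{:=}\ell]$, which is exactly the statement that $\vec{V}_{\mathsf{hi}}^{t}$ has an effect on $\vec{V}_{\msf{lo}}^{t}$ given $V_{\mathsf{li}}^{t} := \ell$. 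For the reverse direction, run this in reverse: an effect gives some $t$, some $\ell$, and some $\vec{h}_1, \vec{h}_2$ witnessing distinct $\vec{V}_{\msf{lo}}^{t}$-distributions; build $\vec{\imath}_1, \vec{\imath}_2$ by interleaving, note they agree on low inputs, and apply the lemma to get $\filter{Q(\vec{\imath}_1)}{L} \neq \filter{Q(\vec{\imath}_2)}{L}$ (truncations differing implies the full distributions differ), i.e., probabilistic interference.

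I expect the main obstacle to be handling the bookkeeping around time horizons and sequence lengths cleanly, rather than anything deep: the definition of effect quantifies over a fixed finite vector $\vec{V}_{\msf{lo}}^{t}$ for a specific $t$, whereas probabilistic interference is stated over arbitrary (possibly differing-length, or conceptually infinite) input sequences, so one must argue carefully that a difference in the full output distributions is always witnessed at some finite $t$ and that padding with null inputs does not change anything. A secondary subtlety is making sure the SEM intervention $V_{\mathsf{li}}^{t} := \ell$ is well-typed against the definition of effect, which is phrased for a single endogenous variable $X$ and a single conditioning variable $Z$; here both $\vec{V}_{\mathsf{hi}}^{t}$ and $V_{\mathsf{li}}^{t}$ are vectors, so I would first note (or cite from Appendix~\ref{app:connection}) the routine generalization of Pearl's submodel and effect definitions to vector-valued interventions, after which the argument goes through verbatim.
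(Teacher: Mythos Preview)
Your proposal is correct and follows essentially the same route as the paper: the heart of the argument is the correspondence lemma you identify (the paper's Lemma~\ref{lem:tc.eq-filtered}), after which both directions are definitional unwinding. Two minor remarks: first, the paper establishes the correspondence not by induction on $t$ but by deriving a closed form for $Q(s,\vec{\imath})$ (Lemma~\ref{lem:q-out}) and then applying Pearl's Properties~1 and~2 inside the SEM (Lemma~\ref{lem:tc.fix-out}); your inductive plan would also work. Second, your worry about infinite sequences and finite prefixes is moot here, since in this model input sequences are already finite and $\filter{\vec{\imath}_1}{L} = \filter{\vec{\imath}_2}{L}$ forces equal lengths, so no padding argument is needed; the only genuine indexing subtlety is the Moore-machine off-by-one (an input of length $t$ yields $t{+}1$ outputs), which the paper handles by invoking Pearl's Property~1 to drop the final input since the time-$t$ output does not depend on the time-$t$ input.
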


Notice that Theorem~\ref{thm:inter-cause} requires that the low-level inputs to the system in question be fixed to a set value $\ell$.  Thus, the experimenter must ensure that the entire sequence of low-level inputs is equal to $\ell$, recalling the issue of having a lack of total control over inputs discussed in Section~\ref{sec:ife-impossible}.  As discussed, our solution is to consider the system operating in its environment allowing us to include these inputs as internal to the composite system.

In the case of studying Google, the impact of considering this composite system, rather than Google proper, is that finding an effect for Google while experimenting with Google might not imply interference within Google proper, but rather interference in the composite system.  An example of such interference would be Google passing a high-level input to an advertiser that alters its low-level inputs to Google resulting in a change in Google's output to the experimenter (Figure~\ref{fig:compose}).
\begin{figure}
\begin{center}
\includegraphics{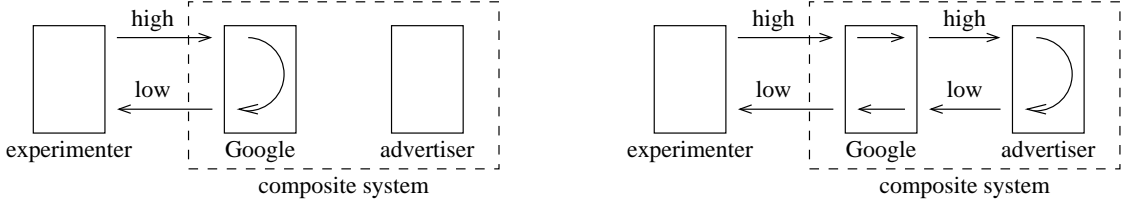}
\end{center}
\caption{The left shows a flow of information within Google proper that implies interference.  The right shows a flow of information that implies interference for the composite system consisting of Google and the advertiser but not for Google considered in isolation.}
\label{fig:compose}
\end{figure}
That Google operating in its environment can have interference while Google considered in isolation does not is related to noninterference not being preserved under composition (e.g.,~\cite{mclean94sp}).  

\section{Experimentation}
\label{sec:doe}

To understand the role of experimentation in determining causal relations, we start by returning to the mouse study and continuing it in a manner suggested by an epidemiology methods paper~\cite{greenland86epidemiology}.  We then discuss the design of experiments in general. Section~\ref{sec:ife} applies these general principles to information flow experiments to justify a particular methodology.  In particular, we present a correspondence between well known features of experimental science to less familiar features of information flow experiments, which we summarize in Table~\ref{tbl:compare}.  Lastly, we comment on a few secondary concerns.

\subsection{Example}

The scientist would like to learn whether for some mouse $k$ there is an effect of Drug X on the mouse's ability to survive for a week.  That is, whether there exists some conditions $\vec{z}_k$ such that $F_k(0, \vec{z}_k, \vec{U}_k) \neq F_k(1, \vec{z}_k, \vec{U}_k)$ where we use $X_k = 1$ to denote the mouse $k$ getting treated with Drug X and $X_k = 0$ for not getting treated,  
and where the range of $F_k$ is $1$ for dying and $0$ for living a week.  
For simplicity, let us assume that $F_k$ is deterministic given $X_k$, and that it does not depend upon $\vec{Z}_k$ and $\vec{U}_k$ (which we drop).

Even with these simplifications, the scientist's task is difficult.  
If $F_k$ were known to the scientist, she could compare its calculated value at $0$ and $1$, which is similar to white box program analysis.  However, $F_k$ is unknown and the scientist can only observe its output once: either $F_k(0)$ or $F_k(1)$ since each mouse can only be treated or not.

If the scientist could get two mice $k$ and $j$ such that $F_k = F_j$,
then she could check whether $F_{k}(0) \neq F_{j}(1)$.  If so, she can infer an effect of $X_k$ to $Y_k$ and $X_j$ to $Y_j$.
Requiring that $F_k = F_j$ does not require the mice to be identical, just that they react in the same manner to Drug X as far as living for a week is concerned.  In fact, there are only four functions $F_k$ and $F_j$ could be: the constant $0$ function (always live), the constant $1$ function (always die), the identity function (die iff treated), and the ``negation'' function $1 - X$ (live iff treated).  

To leverage this observation, the scientist gets a large number of mice and splits them randomly into two groups of equal size.  She then gives only the mice in the first group, the experimental group, Drug X and treats the mice in the second group, the control group, otherwise identically.  To make the example extreme, suppose she then observes that every mouse treated died and every mouse not treated lived.  These results could be explained by the experimental group consisting solely of mice that are characterized by the constant $1$ function (always die) and all the mice in the control group being characterized by the constant $0$ function (always live).  However, to randomly assign mice in such a fashion is extremely unlikely even if the population of mice consist of only those functions in a 50\%/50\% split.  Rather, such results suggest that at least one mouse (and probably almost all) are characterized by the identity function since such a population makes the result much more likely.  Thus, the scientist concludes that there exists at least one mouse $k$ such that $X_k$ has an effect on $Y_k$.

\subsection{Experimental Design}

This reasoning can be extended to the case where $X_k$ and $F_k$ take on more than two values and $F_k$ depends upon $\vec{Z}_k$ and $\vec{U}_k$ in a randomized fashion.  In general, the scientist takes a sample of \emph{experimental units} (e.g., mice), the number of which is the \emph{sample size}.  She also prepares a vector $\vec{x}$ with a length equal to the sample that hold values, called \emph{treatments}, that each $X_k$ can take on.  She randomly assigns each experimental unit $k$ to an index $i_k$ of $\vec{x}$ so that no unit is assigned the same index.  For each $k$, she then sets $X_k$ to be value at the $i_k$th slot of $\vec{x}$.  Units assigned the same treatment are called a \emph{group}.

The defining feature of an experiment is that the experimental units are randomly assigned their treatment groups.  Proper randomization over larger sample sizes makes negligible the probability that the groups vary in a systematic manner in terms of the \emph{noise factors} $F_k$, $\vec{Z}_k$, and $\vec{U}_k$ before the application of treatments.   This key property, \emph{exchangeability}, allowed the scientist to reject as unlikely the explanation that all the mice in the experimental group were of the always-die type and all the mice of the control group were of the always-live group~\cite{greenland86epidemiology}.  

However, randomization and a large sample are not sufficient to ensure valid conclusions.  The scientist must also ensure that no systematic differences are introduced to the groups after the application of the treatment.  For example, in addition to giving the mice in the experimental treatment group Drug X, the scientist also handled them more (to give them the drug), then any effects detected by the experiment could have resulted from the handling rather than the drug.

Under such conditions, the units will remain exchangeable under the \emph{null hypothesis} that the treatment has no effect. 
Thus, any difference in response that consistently shows up in one group but not another can only be explained by chance under the null hypothesis.  If given the sample size, this chance is small, 
then the scientist can reject the null hypothesis as very unlikely, providing probabilistic evidence of causal relationship, which we make precise in Section~\ref{sec:stats}.

Much of experimental design focuses on increasing the odds of finding an effect if one exists or on making such findings generalize to larger populations of units (see, e.g.,~\cite{cox00book}).
However, due to reasons of space, we limited our discussion to only issues of soundness, which we summarize as: 
\begin{enumerate}
\item start with exchangeable units,
\item randomly assign them treatments and introduce no other systematic differences, and
\item use a large sample to make ``unlucky'' assignments rare.
\end{enumerate}

\subsection{Information Flow Experiments}
\label{sec:ife}

To understand these issues in the context of information flow experiments, we consider how they apply to WDUD experiments.
At a high level, the fourth-party tracker would like to determine how a third-party web service uses information from or about visitors for selecting ads on first-party websites~\cite{guha10imc,wills12wpes,balebako12w2sp}.   
To model this problem as an experiment, we treat the information of interest as the factor $X_k$ that we will vary by applying treatments.  We treat the ads received as the response variable $Y_k$.  The additional factors $\vec{Z}_k$ and $\vec{U}_k$ that we will attempt to hold constant or randomize over includes the behavior of other users, advertisers, and other websites.

Mapping these goals to experimental science centers around deciding what counts as an experimental unit during the course of an information flow experiment. 
One obvious answer for WDUD is Google, the subject of our studies and the entity that processes the information of concern.  However, under this view, we have only a single system in question.
(While Google uses more than one server, they are interconnected.  For this reason, and simplicity, we treat Google as a single monolithic entity.)
Since we need at least two experimental units to compare across, 
we must separate our interactions with Google into multiple experimental units.

At the opposite extreme, we could count each input/output interaction with Google as a separate experimental unit, which gives each time step $t$ its own unit.  In WDUD, this could be viewed as treating each ad sent from Google in response to some request as a separate unit.

However, recall that  one of the major goals of WDUD is to determine the nature of Google's behavioral tracking of people.
  This suggests that interactions with Google at the granularity of people could be an appropriate experimental unit.  However, since we desire automated studies, we substitute separate browser instances for actual people.
In particular, we can use multiple browser instances with separate caches and cookies to simulate multiple users interacting with the web tracker.  
We can apply treatments to browsers by having them controlled by different scripts that automate different behaviors.
Table~\ref{tbl:compare} shows an overview of the relationship between experimentation for the experimental sciences and for IFA in general and WDUD in particular under this view.

\begin{table}
\begin{tab}{@{}lll@{}}
Experimental Science & Information Flow           & WDUD\\
\midrule
natural process      & system in question         & Google in its environment\\
population of units  & subset of interactions     & browser instances\\
factors              & input channels             & visitor behavior\\
treatments           & controlled inputs          & behavior profiles\\
noise factors        & uncontrolled channels      & other users, advertisers\\
response variables   & observed output channels   & sequences of ads\\
effect               & interference               & use of data\\
\end{tab}
\caption{Experimental Science, IFA, and WDUD Compared}
\label{tbl:compare}
\end{table}

\subsection{Limitations, Extensions, and Secondary Concerns}
We have not mentioned a few issues heavily emphasized in the design of experiments and statistics.
We consider them here to emphasize that they are not required for determining interference.

\paragraph{Random Sampling}
Acquiring units by randomly sampling from a more general population will, with high likelihood, provide a representative sample, which allows findings of effects to generalize to the population as a whole.  Random sampling is not needed if one just wants to prove the existence of an effect and not that the effect is widespread~\cite{zieffler11book}.
While results need not be general to show that Google tracks some behavior, showing that Google often does is more interesting. 
Thus, one may choose to run units at randomly selected times or locations for more general results.  %

Producing a representative sample could be abnormally difficult in our setting due to the possibility that Google alters its behavior in response to the atypical patterns of access exhibited by our experiments.  For example, Google could purposely make the reverse engineering its of system difficult by showing special behavior towards users that it suspects to be automated or probing.  Such atypical reactions from Google would not invalidate our conclusion that a flow information exists, but it mean that Google does not typically exhibit a flow.

\paragraph{Cross-unit Effects}
Many experimental designs emphasize the \emph{stable unit-treatment value assumption}, which requires that giving or withholding a treatment from one unit will not have an effect upon the other units~\cite{rubin86asa}.  Using experimental units that could plausibly satisfy this assumption is emphasized since it allows for a much wider ranger of statistical techniques.  However, it is not required for the permutation test of whether an effect exists~\cite{rosenbaum07asa}, which we discuss next.

The fact that determining the existence of an effect does not require a lack of interactions between units is key to our ability to do WDUD studies.  Any choice of unit other than all of Google, which leads to a sample size of one,  will possibly exhibit cross-unit effects by virtue of being multiplexed onto a single system.  Indeed, we found cross unit effects both at the level of ads and at the level of browsers.

\begin{experiment}\label{exp:cross-browsers}
To check for cross-unit effects, we studied whether multiple browser instances running in parallel affect one another.  Specifically, we compared the ads collected from a browser instance running alone to the ads collected by an instance running with seven additional browser instances each collecting ads from the same page.

A primary browser instance would first establish an interest in cars by visiting car-related websites.
We selected car-related sites by collecting, before the experiment, the top $10$ websites returned by Google when queried with the search terms  ``BMW buy'', ``Audi purchase'', ``new cars'', ``local car dealers'', ``autos and vehicles'', ``cadillac prices'', and ``best limousines''.
After manifesting this interest in cars, the instance would collect text ads served by Google on the International Homepage of Times of India.\footnote{\url{http://timesofindia.indiatimes.com/international-home}}  We attempted to reload the collection page $10$ times, but occasionally it would time out.  Each successful reload would have $5$ text ads, yielding as many as $50$ ads.

Our experiment repeated this round of interest manifestation and ad collection $10$ times using a new primary browser instance during each round.  We randomly selected $5$ of the rounds to also include seven additional browsers.  When the additional browsers were present, three of them performed the same actions as the primary one.  The other four would wait doing nothing instead of visiting the car-related websites and then went on to collecting ads after waiting.  All instances would start collecting ads at the same time.

The experiment showed that the primary browsers ran in isolation  would receive a more diverse set of ads than those running in parallel with other browsers.  We repeated the experiment four times (twice using $20$ rounds) and found this pattern each time:
\begin{center}
\begin{tab}{@{}ccc@{}}
Rounds        & Unique ads in isolation & Unique ads in parallel\\
\midrule
    10 &       37 &       25    \\
   10  &      46 &     33       \\
    20   &     58 &    47        \\
    20    &   57   &       52     \\            
\end{tab}
\end{center}
The presence of this pattern makes assuming an absence of cross-unit effects for browser instances tenuous at best.  While a statistical test could report whether the observed effect is significant (it is in one of the subsequent experiments), doing so would inappropriately shift the burden of proof: if a scientist would like to use a statistical analysis that requires an absence of cross-unit effects, then the onus is on him to justify the absence.

This and all other experiments were carried out using Python bindings for Selenium WebDriver, which is a browser automation framework. A test browser instance launched by Selenium uses a temporary folder that can be accessed only by the process creating it. So, two browser instances launched by different processes do not share cookies, cache, or other browsing data. All our tests were carried out with the Firefox browser running in a 64-bit Ubuntu 12.04 VM on a server located in [redacted]. When observing Google's behavior, we first ``opted-in'' to receive interest-based Google Ads across the web on every test instance. This placed a Doubleclick cookie on the browser instance. No ads were clicked in an automated fashion throughout any experiment.
\end{experiment}

\paragraph{Independent, Identically Distributed Samples} 
I.i.d.\ samples allow for powerful statistical techniques, which in some cases allow for smaller sample sizes or more detailed characterizations of a research finding.  However, this assumption is difficult to justify in our setting for the same reason that we cannot guarantee a lack of cross-unit interactions.  Fortunately, exchangeability, which can be seen as a weaker form of i.i.d., is sufficient for our purposes~\cite{greenland86epidemiology}.

\paragraph{Controlling Conditions}
Most experimental designs emphasize subjecting the units to conditions that are identical except for the experimental treatment.
The maxim goes \emph{control what you can; randomize what you can't}, but for our purposes it should read \emph{randomize what you want; control what you can't randomize} since relieved of the burden of creating i.i.d.\ samples, one need only control those aspects of the experiment that cannot be randomized.  However, ensuring that every experimental unit is subjected to approximately the same environment will typically produce less noisy results allowing one to reduce the sample size and make more definitive statements.

If one were interested in determining whether Google proper (not Google composed with its environment) had interference, then controlling conditions would take on a new significance.  In particular, the experimenter would have to control the low-level inputs from advertisers that could depend upon the high-level inputs to avoid confounding.

\bigskip
While some points in this section may seem pedantic, or even rudimentary, we will see in Section~\ref{sec:stings} that they are subtle enough to have led to real studies with poor statistical properties.  We now turn to making these properties precise.

\section{Statistical Analysis}
\label{sec:stats}

After designing and running an experiment, scientists must analyze the data collected.  In particular, they must quantify the probability that the collected responses could have occurred by chance through an unlucky random assignment of units to treatments.  
In this section, we reduce such quantification for information flow experiments to well known methods from statistics (Corollary~\ref{cor:independ-noninter}). 
We then discuss a particular method, \emph{permutation testing}, that is well suited for our setting of analyzing a complex black box system.
In the next section, we show that the test is general enough to formalize each of the prior WDUD studies.

A common approach to quantifying experimental results is by \emph{significance testing}~\cite{fisher35doe}.  The possibility of an unlucky assignment of units is formalized as a \emph{null hypothesis} that states that the groups differ by chance. %
A \emph{statistical test} of the data provides a \emph{p-value}, the probability of seeing results at least as extreme as the observed data under the assumption that the null hypothesis is true.  A small p-value implies that the data is unlikely under the null hypothesis.  Typically, scientists are comfortable rejecting the null hypothesis if the p-value is below a threshold of $0.05$ or $0.01$ depending on field.  Rejecting the null hypothesis makes the alternative hypotheses more plausible.

In our case, the null hypothesis is that the system in question has noninterference and the alternative of interest is the system has interference.
A combination of Theorem~\ref{thm:inter-cause} and the experimental design of Section~\ref{sec:ife} allows us to use the large class of statistical tests for independence of random variables to test for interference.

\begin{corollary}\label{cor:independ-noninter}
A test for independence of two random variables in science is a test of noninterference for information flow experiments. 
\end{corollary}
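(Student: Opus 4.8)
The plan is to chain Theorem~\ref{thm:inter-cause} together with the structure of the experimental design from Section~\ref{sec:ife}, reducing the null hypothesis ``the system in question has noninterference'' to the null hypothesis ``$X$ and $Y$ are independent'' that scientific tests of independence are built to assess. First I would recall, via Theorem~\ref{thm:inter-cause}, that $Q$ has probabilistic noninterference exactly when the high-input factors $\vec{V}_{\mathsf{hi}}^{t}$ have no effect on the low-output response variables $\vec{V}_{\msf{lo}}^{t}$ once the low inputs are held to a fixed value $\ell$ --- a condition the experimental design arranges by studying the composite system, so that the troublesome low inputs become internal messages (as discussed after the theorem). It then suffices to show that a scientific test of independence of the treatment variable $X$ and the response variable $Y$ is a valid significance test for the null hypothesis ``$X$ has no effect on $Y$.''

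Second, I would invoke the design principles: the treatments are assigned to experimental units by a uniformly random bijection, so each unit's treatment value $X_k$ is independent of that unit's noise factors $F_k$, $\vec{Z}_k$, $\vec{U}_k$ (this is exactly the exchangeability the design secures, and the ``introduce no other systematic differences'' requirement ensures nothing reintroduces dependence after assignment). Under the null hypothesis of no effect, the distribution of $Y$ in the sub-model $M[X{:=}x][Z{:=}z]$ does not vary with $x$, so the response $Y_k$ is (in distribution) a function of the noise factors alone; combined with randomization this yields $X_k \perp Y_k$, and --- more to the point for what follows --- the collected responses are exchangeable across the treatment groups. Hence the observed data $(X,Y)$ is drawn from a distribution in which $X$ and $Y$ are independent precisely when $Q$ has noninterference.

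Third, I would close the argument: any statistical test whose null hypothesis is ``$X \perp Y$'' therefore has, as its null, exactly the event ``$Q$ has noninterference''; its p-value is stochastically bounded under that null, so rejecting it at level $\alpha$ is a level-$\alpha$ test of noninterference, and a rejection is evidence of interference by Theorem~\ref{thm:inter-cause}. This is where permutation testing enters in the next section, since it is the family of independence tests whose validity rests only on exchangeability rather than on i.i.d.\ sampling --- the assumption we cannot justify in WDUD because of cross-unit effects (Experiment~\ref{exp:cross-browsers}).

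I expect the main obstacle to be the second step: making rigorous that the design actually delivers independence --- more precisely, exchangeability of the responses across treatment groups --- under the noninterference null. The delicate points are (i) carrying the fixed-low-input hypothesis of Theorem~\ref{thm:inter-cause} through the composite-system maneuver so that the noise factors $F_k$ genuinely absorb everything the treatment does not control, (ii) ruling out dependence reintroduced after randomization, and (iii) being precise about whether ``independence of two random variables'' is meant in the i.i.d.\ sense or the weaker exchangeable sense, the latter being what is actually available and what permutation tests require. Everything else is bookkeeping layered on Theorem~\ref{thm:inter-cause} and the soundness principles already established.
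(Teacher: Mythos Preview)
Your approach is correct and reaches the same conclusion, but the paper organizes the argument differently. The paper's proof is two lines: it invokes a short auxiliary lemma (Lemma~\ref{lem:ind-eq}) stating that for mutually exclusive and exhaustive conditions $C_1,\ldots,C_n$, dependence of $A$ on $C$ --- i.e., $\Pr[A\mid C_k]\neq\Pr[A]$ for some $k$ --- is equivalent to inequality of conditionals, $\Pr[A\mid C_i]\neq\Pr[A\mid C_j]$ for some $i,j$. That purely probabilistic equivalence converts ``independence of $X$ and $Y$'' into ``equality of the distributions of $Y$ across values of $X$,'' and Theorem~\ref{thm:inter-cause} then identifies the latter with noninterference. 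The randomization-delivers-the-interventional-distribution step that you spell out carefully is left implicit in the phrase ``information flow experiments.''

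What each approach buys: the paper's factoring through Lemma~\ref{lem:ind-eq} keeps the corollary's proof to a single reduction and avoids rehearsing the design-of-experiments argument; your route is more self-contained and makes explicit exactly where randomization and the composite-system maneuver enter, which also clarifies why only exchangeability (not i.i.d.) is needed --- a point the paper defers to the discussion of permutation testing. Your anticipated obstacle in step~2 is real but not deep: once you accept that randomization equates $\Pr[Y\mid X{=}x]$ with $\Pr[Y\mid \doo(X{:=}x)]$, your argument and the paper's Lemma~\ref{lem:ind-eq} are doing the same work from opposite ends.
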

Since, as we discussed in Section~\ref{sec:ifa}, IFA lacks methods of conducting these studies, Corollary~\ref{cor:independ-noninter} fills an important gap.

However, some tests of independence require difficult-to-justify assumptions about the system in question.
For example, the most common statistical tests are \emph{parametric} tests that assume that the system in question's behavior is drawn from some known family of distributions with a small number of unknown parameters.  
Our experimental results show that such a family of distributions would have to be complex.

\begin{experiment}\label{exp:binning}
To understand how ads served by Google on a third-party website varies over time, we simultaneously started two browser instances, and collected the ads served by Google on the Breaking News page of \url{ChicagoTribune.com}.\footnote{\url{http://www.chicagotribune.com/news/local/breaking/}}  Each instance reloaded the web page 200 times, with a one minute interval between successive reloads. 

Figure~\ref{fig:exp-bin} shows a temporal plot of the ads served for each of these instances.
\begin{figure}
\begin{center}
\includegraphics[width = 11.5cm]{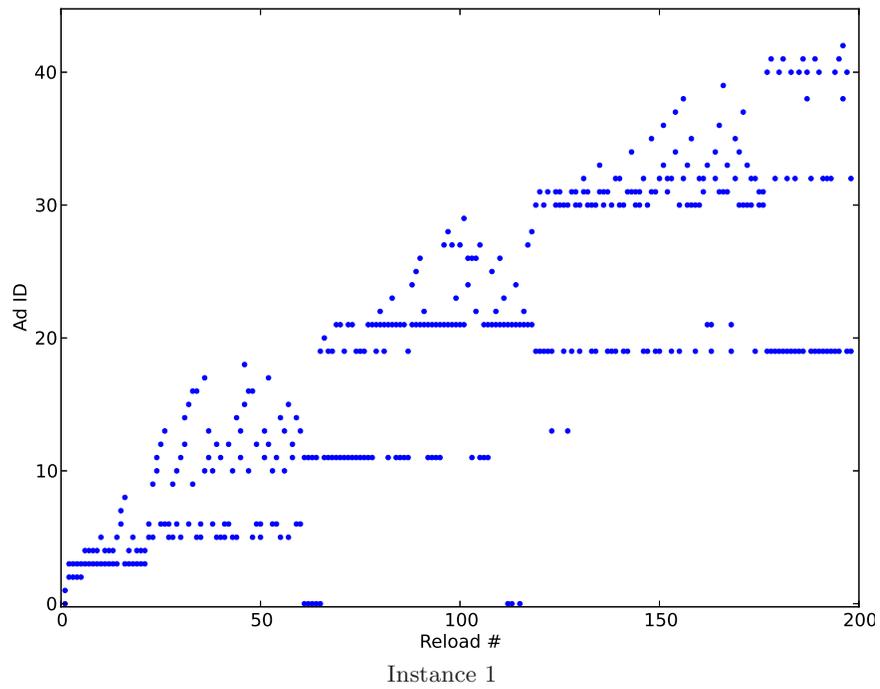}\\
Instance 1\\[3ex]
\includegraphics[width = 11.5cm]{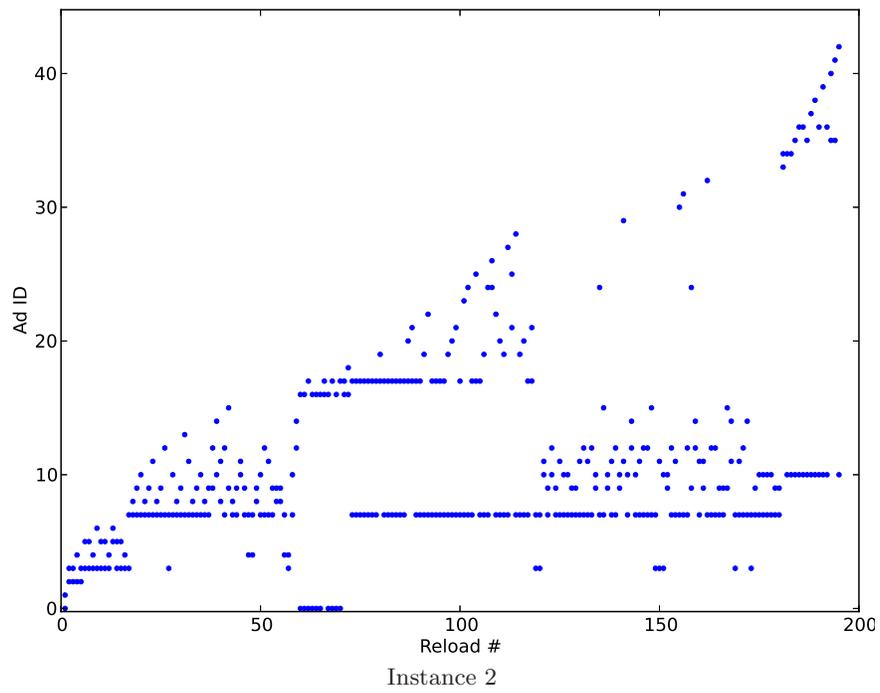}\\
Instance 2\\
\end{center}
\caption{The x-axes ranges over unique ads ordered by the time at which the instance first observed it in the experiment. The y-axis ranges over time measured in terms of page reloads.}
\label{fig:exp-bin}
\end{figure}
The plots suggest that each instance received certain kinds of ads for a period of time, before being switched to receiving a different kind.  One explanation for this behavior is that Google associates users with various ad pools switching users from pool to pool over time.  While hierarchical families of parametric models could capture this behavior, we are not comfortable making such an assumption and the resulting models would be more complex than those typically used in parametric tests.
\end{experiment}

Our results do not mean that one could not reverse engineer enough of Google to find an appropriate model.  However, they suggest that such reserve engineering would be difficult.  Furthermore, it runs against the spirit of performing black box information flow analysis.

Thus, we focus on \emph{non-parametric} tests, which do not require assuming a family of distributions and instead treat the generating distribution as a black box.  In particular, we will focus on \emph{permutation tests} %
(see e.g.,~\cite{good05book}).  Crucially, permutation tests (also known as \emph{randomization tests}) allow cross-unit interactions~\cite{rosenbaum07asa}, which can occur in WDUD studies (Experiment~\ref{exp:cross-browsers}).  

At the core of a permutation test is a \emph{test statistic} $s$, which is a function from the data, represented as a vector of responses, to a number.  
The vector of responses $\vec{y}$ has one response for each experimental unit.  The vector must be ordered by the random indices $i_k$ used to assign each unit $k$ a treatment from the treatment vector $\vec{x}$ prepared during the experiment.  Thus, the $k$th entry of $\vec{y}$ received the treatment at the $k$th entry of $\vec{x}$.

For example, an intuitive test statistic for an experiment with two treatment groups could use the first $n$ components of the data vector as the results of the experimental group and the remaining $m$ as the results for the control group where the groups have $n$ and $m$ units, respectively.  
A common test statistic over such data is the mean of the first $n$ responses less the mean of the last $m$ responses.
Intuitively, the higher the value of the test statistic, the more different the responses of the two groups are and larger the evidence of interference.

Since the scientist is allowed to pick any function $s$ from response vectors to numbers for the test statistic, the permutation test needs to gauge whether an observed data vector $\vec{y}$ produces a large value with respect to $s$.  To do so, it compares the value of $s(\vec{y})$ to the value of $s(\pi(\vec{y}))$ for every permutation $\pi$ of $\vec{y}$.  
Intuitively, this mixes the treatment groups together and compares the observed value of $s$ to its value for these arbitrary groupings.  Every time $s(\vec{y}) \leq s(\pi(\vec{y}))$ occurs, the test counts it as evidence that $s(\vec{y})$ is not particularly large.  

The significance of these comparisons is that under the null hypothesis of independence (noninterference), the groups should have remained exchangeable after treatment and there is no reason to expect $s(\vec{y})$ to differ in value from $s(\pi(\vec{y}))$.  Thus, we would expect to see at least half of the comparisons succeed.  Thus, we call a permutation $\pi$ such that $s(\vec{y}) \leq s(\pi(\vec{y}))$ fails to hold a \emph{rejecting permutation} since too many rejecting permutations leads to rejecting the null hypothesis.   

Formally,
the value produced by a (one-tailed signed) permutation test given observed responses $\vec{y}$ and a test statistic $s$ is
\begin{align}
\pt(s, \vec{y}) &= \frac{1}{|\vec{y}|!} \sum_{\pi \in \Pi(|\vec{y}|)} I[s(\vec{y}) \leq s(\pi(\vec{y}))] \label{eqn:pt}
\end{align}
where $I[\cdot]$ returns $1$ if its argument is true and $0$ otherwise, $|\vec{y}|$ is the length of $\vec{y}$ (i.e., the sample size), and $\Pi(|\vec{y}|)$ is the set of all permutations of $|\vec{y}|$ elements, of which there are $|\vec{y}|!$.

Recall that under significance testing, a p-value is the probability of seeing results at least as extreme as the observed data under the assumption that the null hypothesis is true.
$\pt(s, \vec{y})$ is a (one-tailed) p-value using $s$ and $\leq$ to define \emph{at least as extreme as} in the definition of p-value.  To see this, note that each permutation of data is equally likely under the null hypothesis $H_0$ that the treatments have no effect since the order of the responses is by treatment and otherwise random.  Thus,
\begin{align}
\pt(s, \vec{y}) &=  \sum_{\pi \in \Pi(|\vec{y}|): s(\vec{y}) \leq s(\pi(\vec{y}))} \Pr[\vec{Y} = \vec{y} \given H_0] \label{eqn:pvalue}
\end{align}
matching the definition of a p-value.  One could use other definitions of \emph{as extreme as} by replacing the $\leq$ in~\eqref{eqn:pt} and~\eqref{eqn:pvalue} by $\geq$ or by comparing the absolute values of $s(\vec{y})$ and $s(\pi(\vec{y}))$ to check for extremism in both  directions (a two-tailed test).

Good discusses using sampling to make the computation of $\pt(s,\vec{y})$ tractable for large $\vec{y}$~\cite{good05book}.  Greenland provides detailed justification of using permutation tests to infer causation~\cite{greenland11book}.

We do not claim that permutation tests are the only suitable statistical tests.  However, we find it sufficient to characterize the prior WDUD works, which we do next.

\section{Formalization of Prior Work}
\label{sec:stings}

We examine the four WDUD studies that attempt to determine how Google uses the information it collects~\cite{wills12wpes,guha10imc,sweeney13cacm,balebako12w2sp}.  
We are able to systematically explain, extend, and compare the works by framing them as permutation tests for analyzing the results of information flow experiments.
Our framework makes clear the reasoning employed by these works and identifies improvements to their experimental designs.  To that end, we make suggestions for conducting future studies throughout, which we summarize in Section~\ref{sec:sugg-conc}.
However, we select and scrutinize these studies because they contain interesting and important results that we would like to place into the context of IFA; not because we believe them to contain major flaws.

We organize our presentation by the type of test statistic used by each work.  In the case of Sweeney's study~\cite{sweeney13cacm}, the test statistic is provided by her own statistical analysis.  For the others, we select one that naturally captures their informal reasoning.  We discuss the study of Wills and Tatar twice since they employ two very different styles of reasoning.
We end with an empirical comparison of the test statistics discussed.
In addition to shedding light on foundations of these studies, this tour of prior work shows that the permutation test is a general framework for reasoning about the statistical significance of information flow experiments.

\subsection{The $\chi^2$ Test}

We will start by considering a key finding in Sweeney's study~\cite{sweeney13cacm}: searching for a characteristically black first name will produce a higher rate of Instant Checkmate ads including the word ``arrest'' than searching for characteristically white first names.  While much of Sweeney's study consisted of finding appropriate names to test and exploring the ramifications of these results, we will focus on the core finding of a flow of information from the first name of the search query to the ads shown.

She made her finding by Googling for various names and checking the ads returned with the results over the course of a month.  For each Instant Checkmate ad returned, she recorded whether it contained the word ``arrest''.  Consistent with our recommendation, she used a new browser instance each time she Googled a name.  Thus, we can view each browser instance as an experimental unit.  Each unit received the treatment of either a characteristically black or white name.  She did not provide details of how she allocated treatments to units.  Thus, a methodological concern is that her allocation might not have been properly randomized since Google's behavior could be time dependent.

Given the long period of time over which she conducted her experiments, even larger temporal effects may be present.
(Indeed, the theoretical benefit from increasing sample size is often partly removed by the increase in variation among units from a decreased ability to hold conditions constant across them~\cite{cox00book}.)
However, since we have no reason to suspect that changes in Google's behavior would affect these results, for analyzing her study, we will assume she randomized the treatments.

To model her work in terms of a SEM, we use the factor $X_k$ to denote the race of the first name of the $k$th instance.  The response variable $Y_k$ can be modeled as taking on three values: $1$ for an Instant Checkmate ad with the word ``arrest'', $-1$ for one without, and $0$ for no Instant Checkmate ads.  (She never observed more than one Instant Checkmate ad for a search.)  

Unlike the other studies we will consider, Sweeney already provided a statistical analysis of her results.  She used the $\chi^2$ test, a popular nonparametric statistic.  A theoretical justification of the $\chi^2$ test is that it asymptotically approaches a permutation test~\cite{ludbrook08ije}.  Thus, we can understand her test in terms of permutation testing.  
With the size of her data, such approximations become not only accurate, but useful for computational reasons.  Nevertheless, we believe the permutations continue to provide the semantics behind such approximations, especially considering that the justification of the $\chi^2$ test includes an assumption that the experimental units are independent~\cite{lehmann05book}, which is unlikely as discussed in Section~\ref{sec:ife}.

\subsection{Counting}

Consider the WDUD study of Wills and Tatar in which they pose as various visitors to first-party websites~\cite{wills12wpes}.  They perform multiple experiments looking at different features of Google's behavior.  Here we will discuss one of their approaches in detail; %
we discuss another in~\ref{sec:nonce}.

Consistent with our approach (Section~\ref{sec:ifa}), they use separate browser instances to simulate separate users, which represent their experimental units.
The treatments they apply to each instance corresponds to either inducing some interest or not by searching for a word on a website.  They had each instance participate in multiple sessions that consisted of inducing the interest followed by visiting a different third-party web page that serves Google ads.
(Actually, to reduce resource use they induced more than one interest per unit making their study multi-factorial in design.  For simplicity, we will ignore this complication, but it can be handled by our framework.  See, e.g., \cite{good05book}.)

Formally, the factor of interest $X_k$ is the search entry field.  The response variables $Y_k$ are the ads seen at the third-party website.
Their test statistic is the percentage of sessions that included a non-contextual ad containing a keyword associated with the treatment.
To formalize their test statistic, let $W_t$ be the set of keywords they associated with interest $t$.
Representing the data collected during a session as a list $\ell$ of ad-context pairs, let $\hit(\ell,t)$ be true iff there exists a pair $\langle a, c\rangle$ in $\ell$ such that the ad $a$ contains a keyword in $W_t$ and $c$ is not a context relative to $t$.
(They determined context by hand.)

The data collected is a vector $\vec{y}$ of responses for each unit where each response is a list of sessions.  Let first $n$ of them be those with the induced interest.  Let $\percent$ compute the percentage of sessions with a non-contextual ad among the responses within a range: $\percent(\vec{y}, a, b) = 100*\sum_{k=a}^{b} \sum_{j = 1}^{|\vec{y}[k]|} \hit(\vec{y}[k][j], t)/N$ where $N$ is the number of sessions in that range: $N = \sum_{k=1}^n |\vec{y}[k]|$.  In their Figure~5, they plot $\percent(\vec{y}, 1, n)$ and $\percent(\vec{y}, n+1, n+m)$ where $n$ and $m$ are the numbers of instances with and without the interest induced.

Whereas they reasoned informally by comparing these two numbers, we can provide rigorous statistics based upon them by using a test statistic based on them.  One such test statistic would be $s_\percent(\vec{y}) = \percent(\vec{y}, 1, n) - \percent(\vec{y}, n+1, n+m)$.  If inducing the interest increases the number of ads shown about it, then we would expect $s_\percent(\vec{y})$ to be larger than $s_\percent(\pi(\vec{y}))$ for permutations that mix the responses.

A feature of their design is that their instances are long running with multiple sessions spanning a week.  While these long-running instances do not increase the sample size, collecting more data on each unit allows for a more complete view of that unit allowing for the detection of subtle differences and more detailed test statistics over multiple measurements~\cite{good05book}.
Furthermore, 
it allows them to see behavior that Google might not manifest over a short time period. 
Indeed, consistent with their own finding, we found that Google would not update its listing of a person's gender until over a day of interactions.

\begin{experiment}\label{exp:gender}
We created two browser instances and randomly assigned one to visit the top $100$ websites for females as determined by Alexa, which takes approximately 5.5 hours.  The other visited the top $100$ sites for males.
Before visiting each site, we checked the gender inferred by Google on its Ad Settings page, which provides users with a summary of Google's profile of them.
The instances idled on each site for three minutes.  After visiting all $100$ pages, they idled for two hours.  They repeated this process until Google inferred a gender.
Google inferred the gender of both instances during the fifth round of training at 30 hours 19 minutes for the female and 30 hours 12 minutes for the male.
\end{experiment}

\subsection{Cosine Similarity}
\label{sec:baseline}

Guha et al.\ present a methodology for performing WDUD~\cite{guha10imc}, which is also followed by Balebako et al.~\cite{balebako12w2sp}.  Their methodology uses three browser instances.  Two of them receive the same treatment and can be thought of as controls.  The third receives some experimental treatment.  The treatments consist of having them visit web pages, perform searches, and click on links.  For each instance, after having them display behavior dependent upon their treatment, they collect the ads Google serves them, which they compare using a similarity metric.
Based on experimental performance, they decided to use one that only looks at the URL displayed in each ad.  For each instance, they perform multiple page reloads and record the number of page reloads for which each displayed URL appears.  From these counts, they construct a vector for each unit where the $i$th component of the vector contains the logarithm of the number of reloads during which the $i$th ad appears.  To compare runs, they compare the vectors resulting from the instances using the cosine similarity of the vectors.

More formally, their similarity metric is $\simm(\vec{v}, \vec{w}) = \coss(\lnmap(\vec{v}), \lnmap(\vec{w}))$ where
$\vec{v}$ and $\vec{w}$ are vectors that record the number of page reloads during which each displayed URL ad appears, $\lnmap$ applies a logarithm to each component of a vector, and $\coss$ computes the cosine similarity of two vectors.  
They conclude that a flow of information is likely if $\simm(\vec{v}_{\mathsf{c1}}, \vec{v}_{\mathsf{c2}})$ is much larger than $\simm(\vec{v}_{\mathsf{c1}}, \vec{v}_{\mathsf{e}})$ where $\vec{v}_{\mathsf{c1}}$ and $\vec{v}_{\mathsf{c2}}$ are the responses from the two control instances and $\vec{v}_{\mathsf{e}}$ is the response from the experimental instance.

Their intuition of comparing two control instances to get a baseline amount of noise in the system is a good one.  However, as we discuss in Section~\ref{sec:ife}, browser instances make for good units, not individual ads.  Thus, their experiment only consists of $3$ experimental units, too few to achieve reliable results.  Indeed, the p-value of a permutation test cannot be less than $1/3! \approx 0.17$ with just $3$ units.

To generalize their method to larger sample sizes, we replace their metric $\simm$ with one that can compare more than two vectors.  One choice is to first aggregate together multiple URL-count vectors by computing the average number of times each URL appeared across the aggregated units.  Formally, let $\mathsf{avg}(\vec{u})$ compute the component-wise average of the vectors in $\vec{u}$, a vector of vectors of URL counts.  
We can then define a test statistic $s_{\simm}(\vec{y}) = -\simm(\avg(\vec{y}_{1:n}), \avg(\vec{y}_{n+1:n+m}))$ where $\vec{y}_{a:b}$ is the sub-vector consisting of the entries $a$ though $b$ of $\vec{y}$, the first $n$ responses are from the experimental group, and the next $m$ are those from the control group.  
We use negation since our permutation test takes a metric of difference, not similarity.
Intuitively, the permutation test using the test statistic $s_{\simm}$ will compare the \emph{between-group} dis-similarity to the dis-similarity of vectors that mix up the units by a permutation.  In aggregate, the dis-similarity of these mixed up vectors provide a view on the global dis-similarity inherit in the system.

\subsection{Simulated Comparisons: Nonce Presence}
\label{sec:nonce}

During their study, Wills and Tatar observe Google serving the ad ``LGBT for Obama'' on \url{thefreedictionary.com}, a site that is not about LGBT (lesbian, gay, bisexual, or transgendered) issues~\cite{wills12wpes}.  While they do not conclude that Google necessarily selects ads based upon a sensitive interest in LGBT issues, they note this behavior as suspicious.  Their suspicion is based on using LGBT like a nonce by virtue of it being rare.  
  That is, LGBT serves to connect Google's selection of a low-level ad to sensitive high-level information provided by browsing LGBT-related sites that are otherwise unrelated to the ad.

Since only $3.4\%$ of U.S.\ adults self-identify as LGBT~\cite{gates12gallup}, Google selecting LGBT ads without using some information seems unlikely.
However, assuming that, without tracking, Google would present ads in proportion to the target population size, we would expect that $3.4\%$ of ads that target a sexual orientation would be LGBT targeting ads.  Thus, if the LGBT related ad was only one of a large number of ads targeting sexual orientation, then a conclusion of a flow of information could be a false positive.  

To examine the quality of LGBT as a nonce, we searched $397,361$ ads that we collected during our studies.  Only $30$ of them contained any of the words  ``gay'', ``lgbt'', ``lesbian'', or ``queer''.  
With just $0.0075\%$ of the ads in our sample containing these words, seeing one is a noteworthy event.

Another test of Wills and Tatar involved using LinkedIn and Pandora profiles with the location set to New York City.  The authors wanted to determine whether Google used the profile locations for selecting advertisements.  However, despite seeing numerous ads for NYC, the authors do not conclude that Google uses the profile location since (1) NYC ``is a popular location in general'' and (2) they did ``not have a baseline for comparison''~\cite[page 9]{wills12wpes-tr}.  We found $2028$ instances of ``NYC'' and ``New York'', %
$0.5\%$ of the ads we sampled, despite our server not  being located near NYC.  Thus, seeing NYC related words is much less noteworthy than LGBT related words.

Such reasoning might appear to have nothing to do with permutation tests.  However, we can even view it as a special case of the permutation test in which most of the test runs were not actually done explicitly.  Such a view does not strictly adhere to the assumptions needed to draw causal conclusions since it lacks randomization.  Nevertheless, it provides a conceptual basis for converting informal checks like the one above into actual randomized experiments.

To see how, let the data vector $\vec{y}$ have the observed response with the nonce in it at its first position and the observations that led the scientist to believe that the nonce is in fact rare fill every other slot.  Ideally, these observations would be from a randomized experiment, but the reasoning leads to an informal assessment of a convenience sample, such as us looking at all the ads we collected.  %
Let the test statistic $s_n$ return $1$ if the first component of a data vector contains the nonce $n$ and $0$ otherwise.
It may seem odd to choose a test statistic that ignores all but the first response, but since the test statistic will be used in a permutation test, every response of $y_k$ will contribute to the overall p-value produced by being shifted into the first position by permutations. 
The p-value produced by the permutation test will be $\pt(s_n, \vec{y}) = \mathsf{count}(\vec{y}, n) / |\vec{y}|$ where $\cnt(\vec{y}, n)$ counts up the number of responses of $\vec{y}$ that contains the nonce $n$. 

The above model also extends to nonces justified on theoretical grounds, such as those from a random number generator.  For example, if we take $\vec{y}_m$ to be a vector of length $m$ with the nonce only in its first component, then $\lim_{m \to \infty} \pt(s_n, \vec{y}_m) = \lim_{m \to \infty} 1/m = 0$ showing that the p-value allows rejection of the null hypothesis (acceptance of interference) with certainty given a perfect nonce.  If we let $\vec{y}_{m,w}$ be a vector of length $m$ with the $n$ in the first component and $w$ of the following components, then $\lim_{m \to \infty} \pt(s_n, \vec{y}_{m, \lceil p*m \rceil})$ and $\lim_{m \to \infty} \pt(s_n, \vec{y}_{m, \lfloor p*m \rfloor})$ both equal $p$, capturing the idea that seeing a nonce with probability $p$ of occurring by chance (such as those produced by a random number generator) implies that one can infer causation with a p-value of $p$.

The nonce analysis has appeared elsewhere.  Both watermarks and trap streets, mentioned in the introduction for copyright infringement detection, are nonces~\cite{wagner83sp,swanson98ieee,monmonier96maps}.
Sekar used a similar analysis to find web application vulnerabilities in a black box fashion~\cite{sekar09ndss}.

Nonces are typically thought of in terms of information flow, not physical causation, raising the question of what using a nonce corresponds to in the natural sciences.  In that setting, nonces correspond to an experimental treatment and a response so extreme that the scientist dispenses with the control group.  For example, the scientists testing the ability of a bomb to destroy an island (such as during Operation Crossroads), do not typically set aside a control island. 
\subsection{Comparison of Test Statistics}
\label{sec:pos}

Given all the test statistics discussed, one might wonder how they compare.  
We will empirically compare the tests in our motivating setting of WDUD.  However, we caution that our experiment should not be considered definitive since other WDUD problems may result in different results.  We recommend that each experiment is preceded by a pilot study to determine the best test(s) for the experiment's needs.  For example, we have found pilot studies useful for selecting distinguishing keywords to search for in ads.

\begin{experiment}\label{exp:pos}

Each run of the experiment involved ten simultaneous browser instances, each of which represent an experimental unit.  We used a sample size of ten due to the processing power and RAM restrictions of our server.
For each run, the script driving the experiment randomly assigns five of the instances, the experimental group,  to receive the treatment of manifesting an interest in cars.  
As in Experiment~\ref{exp:cross-browsers}, an instance manifests its interest by visiting the top $10$ websites returned by Google when queried with certain automobile-related terms: ``BMW buy'', ``Audi purchase'', ``new cars'', ``local car dealers'', ``autos and vehicles'', ``cadillac prices'', and ``best limousines''.
The remaining five instances made up our control group, which remained idle as the experimental group visited the car-related websites.  Such idling is needed to remove time as a factor ensuring that the only systematic difference between the two groups was the treatment of visiting car-related websites.

As soon as the experimental group completed visiting the websites, all ten instances began collecting text ads served by Google on the International Homepage of Times of India.  
As in Experiment~\ref{exp:cross-browsers}, each instance attempted to collect $50$ text ads by reloading a page of five ads ten times, but page timeouts would occasionally result in an instance getting fewer.  We repeated this process for 20 runs with fresh instances to collect 20 sets of data, each containing ads from each of ten instances.

Across all runs of the experiment, we collected $9832$ ads with $281$ being unique.  Instances collected between $40$ and $50$ ads with two outliers each collecting zero.  Both outliers were in the $19$th run and in the experimental group.
We analyzed the data with multiple test statistics.
Table~\ref{tbl:positive} summarizes the results with the last row showing the number of statistically significant results under the traditional cutoff of $5\%$.
\begin{table}
\caption{p-values for the permutation tests}\label{tbl:positive}
\begin{tab}{@{}ccccc@{}}
 Data set & $s_{\simm}$ & $s_{\msf{kw}}$  & $s_{\percent}$ & $\chi^2$ \\
\midrule
$1$ & $0.007937$         & $0.003968$    & $0.222222$    & $3.1815\ \times 10^{-33}$ \\
$2$ & $0.007937$         & $0.003968$    & $1.000000$    & $1.75166\times 10^{-24}$ \\
$3$ & $0.015873$         & $0.019841$    & $0.500000$    & $7.33209\times 10^{-13}$ \\
$4$ & $0.007937$         & $0.003968$    & $0.083333$    & $6.31635\times 10^{-33}$ \\
$5$ & $0.007937$         & $0.099206$    & $1.000000$    & $4.15872\times 10^{-07}$ \\
$6$ & $0.007937$         & $0.003968$    & $0.500000$    & $3.5201\ \times 10^{-31}$ \\
$7$ & $0.007937$         & $0.003968$    & $0.222222$    & $4.87536\times 10^{-25}$ \\
$8$ & $0.007937$         & $0.003968$    & $1.000000$    & $2.93566\times 10^{-30}$ \\
$9$ & $0.007937$         & $0.003968$    & $1.000000$    & $2.30865\times 10^{-25}$ \\
$10$ & $0.007937$        & $0.003968$    & $0.222222$    & $2.73048\times 10^{-26}$ \\
$11$ & $0.460317$        & $0.015873$    & $0.500000$    & $1.84605\times 10^{-07}$ \\
$12$ & $0.023810$        & $0.023810$    & $1.000000$    & $8.78432\times 10^{-13}$ \\
$13$ & $0.007937$        & $0.003968$    & $1.000000$    & $1.74223\times 10^{-20}$ \\
$14$ & $0.015873$        & $0.003968$    & $1.000000$    & $3.3131\ \times 10^{-26}$ \\
$15$ & $0.039683$        & $0.011905$    & $1.000000$    & $2.16042\times 10^{-17}$ \\
$16$ & $0.007937$        & $0.003968$    & $0.500000$    & $4.1491\ \times 10^{-29}$ \\
$17$ & $0.031746$        & $0.003968$    & $1.000000$    & $9.44887\times 10^{-17}$ \\
$18$ & $0.015873$        & $0.007937$    & $0.500000$    & $3.42116\times 10^{-15}$ \\
$19$ & $0.007937$        & $0.087302$    & $1.000000$    & $4.44136\times 10^{-21}$ \\
$20$ & $0.111111$        & $0.003968$    & $0.500000$    & $3.17792\times 10^{-27}$ \\
\midrule
$\text{Number} < 5\%$ & 18  &              18        &      0       &                        20\\
\end{tab}
\end{table}
First, we used the permutation test with $s_\simm$, the extension of Guha et al.'s cosine similarity metric~\cite{guha10imc}, as the test statistic.  
Observe that there are  $10! > 3$ million different permutations for the ten instances.  However, since $\simm$ treats the response vector provided to it as two sets, intuitively, the experimental and control groups, many permutations will produce the same value for $s_\simm$.  To speed up the calculation, we replaced comparing all permutations with comparing all partitions of the responses into equal sized sets of $5$, yielding only $\binom{10}{5} = 252$ comparisons.  Since cosine similarity is a symmetric statistic, there can be at most $126$ unique values.  Since at least one of these will be equal to the actual observed $\coss(\vec{y})$, the minimum possible p-value is $1/126 = 0.007937$.  Looking at the p-values from Table~\ref{tbl:positive}, we see that twelve out of the twenty have the minimum possible p-value. 

Second, we carried out the permutation test using $s_{\percent}$, the keyword-based statistic of Wills and Tatar~\cite{wills12wpes}, as the test statistic.  
From our initial search terms, we created a set of keywords containing ``bmw'', ``audi'', ``car'', ``vehicle'', ``automobile'', ``cadillac'', and ``limo'', words whose presence we believe to be indicative of an instance being in the experimental group.
The statistic $s_\percent$ counts the number of sessions that had an ad with any of these keywords present. 
However, our instances did not participate in multiple sessions, as Wills and Tatar's did.  Thus, directly applying $s_\percent$ to our data, treating each response as a single session, produces lackluster results.  
Since we had only five instances per group, the values of this statistic can only take on $10$ different values, making it a blunt instrument for distinguishing groups.  
Indeed, the p-values were not conclusive with most of them being either $1.0$ or $0.5$.  The choice of test statistic is an important one.  

To give the keyword approach a fair chance, we also tested an adapted one $s_{\msf{kw}}$, which looks at the number of ads that each instance received containing a keyword rather than the number of sessions.
We defined our statistic to be the the number of ads that contained any of the keywords amongst the first half (intuitively, the experimental group) of the responses less the number in the second half (intuitively, the control group).  
As with $s_{\simm}$, we have at most 252 unique comparisons to make.  Thus, the minimum possible p-value from our experiment is $1/252 = 0.003968$.  Most the p-values computed from our data sets are at their minimum.  Observe from Table~\ref{tbl:positive} that the p-values obtained from $s_{\msf{kw}}$ are less than the corresponding p-values from $s_{\simm}$. 
We believe this improvement is from the domain knowledge provided by the keywords.  

Lastly, we conducted the $\chi^2$ test on a $2\times 2$ contingency table computed from the data from each round.  The type of treatment was represented in rows, while the presence or absence of keywords was represented in the columns. Thus, the top-left entry in the table was the number of ads in $AD_t$ containing a keyword. The p-values obtained from running the $\chi^2$ test on our data is shown in Table~\ref{tbl:positive}.  While these results are impressively low, they can be misleading given that the $\chi^2$ test assumes the independence of experimental units~\cite{lehmann05book}.

For comparison purposes, we re-run the above experiment without having the experimental group manifest any interests.  That is, we compared two control groups against one another expecting to not find statistically significant differences.  We found that each of the statistics produced one statistically significant result except for the $\chi^2$, which produced $12$.
\end{experiment}

The wide range of tests might tempt one into running more than one test on data.
However, running multiple tests increases the chance of getting a low p-value for one of them by an unlucky randomization of units rather from an effect.  Thus, one cannot look just at the test that produced the lowest p-value.  Rather one must report them all or apply a correction for multiple tests such as 
those for the %
false discovery rate~\cite{benjamini95rss}.
\section{Conclusions and Suggestions}
\label{sec:sugg-conc}

We have identified a range of problems that can be approached systematically as information flow experiments.  This work provides a fresh perspective on these problems and on IFA, which has long been dominated by white box program analysis.
We have explained information flow experiments in terms of causality by relating noninterference to the notion of effect.
This observation allows us to apply in a rigorous manner the methods and statistics of experimental science to problems of information flow.
In particular, we have recommended an experimental methodology and a statistical analysis, based on permutation testing, that is well suited to our setting.
This viewpoint has allowed us to systematically find the limitations and abilities of information flow experiments in general and of specific studies individually.  

In particular, we have examined the emerging area of WDUD and formalized studies in the area as experiments in our framework.  
The value of this exercise is two fold.  First, by placing these empirical studies into our formal framework, we can closely study their reasoning using standard notions, such as experimental units, and metrics of soundness, such as the p-value.  In particular, we discuss whether the implicit assumptions made by these works are reasonable and how to improve their analyses.
Second, we test the applicability of noninterference to real studies outside of its comfort zone of program analysis.  

This process has allowed us to convert the abstract principles of experimental design and analysis into concrete suggestions:

\newcommand{\itemx}{\item}
\begin{enumerate}
\itemx\emph{Use an appropriate statistical test.}   %
Attempting to shoehorn data into familiar statistics can result in incurring requirements that cannot be met.  
A lack of cross unit effects, random samples, and independent, identically distributed experimental units each enable additional statistical techniques, but are difficult, if not impossible, to achieve in our setting.  Fortunately, they are \emph{not} required for permutation testing.

\itemx \emph{Start with exchangeable units.}
The exchangeability of units is a requirement.  We ran multiple browser instances in parallel to obtain exchangeable units.  While cross-unit effects were likely to have occurred, we met the requirements of our chosen statistical analysis: permutation testing.

\itemx \emph{Randomly assign treatments to units.}  
Randomization provides the justification needed for permutation testing and for avoiding the more difficult conditions above.

\itemx \emph{Let the requirements of a statistical analysis guide data collection.}
For example, Wills and Tatar's intuitive analysis using New York City did not collect data enough for analysis (Section~\ref{sec:nonce}).  The right time to select an analysis is before the experiment since it can reveal the data that needs to be collected.

\itemx \emph{Use domain knowledge gained during pilot studies to select a test statistic.}  Finding the correct keywords to examine in ads allowed us to not only get results that were statistically significant, but also intuitive.

\itemx \emph{Be selective.}
Finding websites that produced consistent results was difficult.  For example, before trying the Times of India, we used Fox News.  Despite using Google for advertising, we could not find any effects.  Since we wanted to find an effect from Google, not Fox News, we were free to be selective and should have moved on to another site earlier.  The situation is different if you want to prove that an effect is widespread, which requires random sampling~\cite{zieffler11book}.
\end{enumerate}

While statistical analysis can be intimidating due to their complex requirements, selecting the correct test is liberating by also identifying what conditions you need not worry about.

\section{Future Directions}
\label{sec:fut-work}

\paragraph{Demonstrating Noninterference}
An analyst might wish to show that a system has noninterference. 
However, the permutation test requires that the null hypothesis be that the system has noninterference.  Thus, it can only provide a quantitative measure of the evidence \emph{against} noninterference.  
Conceptually, proving noninterference would require looking at every test statistic under every input sequence.
Since examining an infinite set of sequences is impossible, using the scientific method to show that a system has noninterference would require building a theory of the system's operation and then proving noninterference in that theory. 

\paragraph{Other Notions of Information Flow and Causation}
We examined only one information flow property, a probabilistic noninterference, and one notion of causality, effect.  Exploring the many alternatives could tighten the connection between the two fields and further organize each.
We believe the interplay between these two fields can be rich with each benefiting from the other's perspective and techniques.

\paragraph{Monitoring and Observational Studies}
Passive monitoring in IFA corresponds to observational studies.  A wide range of work deals with the cases under which one can infer causation from a correlation learned from an observational study (see, e.g.,~\cite{pearl09book}).  Future work can import these results to IFA showing how monitoring could be useful in some cases despite its inherit unsoundness~\cite{mclean94sp, schneider00tissec, volpano99sas}.

For example, author de-anonymization (e.g.,~\cite{stamatatos09jasist,narayanan12sp}), detecting cheating (e.g.~\cite{palazzo10j}), and detecting plagiarism of a third-party's work (e.g.,~\cite{maurer06jucs}) all correspond to monitoring since the analyst does not control the sensitive messages (e.g., an anonymous posting).  However, in practice, authors are de-anonymized using comparisons.  We conjecture that such analyses could be shown sound under similar assumptions as those used for observational studies.

\paragraph{Related Experiments}
Problems outside of IFA are also instances of investigations.
For example, Google ran a nonce-like experiment to determine whether Bing's search results were mimicking Google's~\cite{sullivan11blog}.  Thus, rather than tracking information flows, Google's experiment involved tracking flows of behavior.  In particular, their nonce involved Google returning unusual search results.  Google then observed Bing mimicking this behavior after Bing observed users clicking on the unusual results in Internet Explorer.  
Bowen et al.\ conduct access-control experiments by monitoring decoy files that attract adversaries into accessing them~\cite{bowen09baiting}.  
Another problem is \emph{provenance}, tracking the handling of data~\cite{buneman01icdt}, which is an extended form of IFA in which the analyst needs to know not just the source of the data, but also the step-by-step flow and handling of the data in a network.
Comparing and combining experiments from these fields with our own approach would provide a more comprehensive approach to data governance.

In general, information flow experiments allow an analyst to exercise oversight and detect transgressions by an entity not controlled by the analyst and unwilling to provide the analyst complete access to the system.  We see this setting becoming ever more common: data lives in the cloud, jobs are outsourced, products licensed, and services replace infrastructure.  In each of these cases, a party has ceded control of a resource for efficiency.  Nevertheless, each party must ensure that the other abides by their agreement while having only limited access to the other.  Thus, we envision experimentation, as opposed to white box verification, playing an increasing role in computer security and society in general.

\paragraph{Acknowledgments}
We thank Divya Sharma and Arunesh Sinha for many helpful comments on this work.

\appendix
\newgeometry{top=1in, bottom=1in, left=0.75in, right=0.75in}
\section*{Appendix}

\section{System Formalism}

For a finite set, let $\dist{X}$ be the set of distributions over $X$.
Let $\degen(x)$ be the degenerate distribution assigning probability $1$ to $x$.
Let $[]$ be the empty list.
Let $\vec{\imath}\cons i$ be the list created by appending $i$ to $\vec{\imath}$, and let $i \cons \vec{\imath}$ be the list created by prepending $i$ to $\vec{\imath}$.

Let a probabilistic Moore Machine be $Q = \langle \mathcal{S}, s_0, \mathcal{I}, \mathcal{O}, \tau, \sigma\rangle$ where $S$ is a finite set of states, $s_0$ is the initial state, $\mathcal{I}$ is a finite input set, $\mathcal{O}$ is a finite output set, $\tau: \mathcal{S} \cross \mathcal{I} \to \dist{\mathcal{S}}$ is the state transition function, and $\sigma: \mathcal{S} \to \mathcal{O}$ is the output function.  

Let $Q(s, \vec{\imath})(\vec{o}, \vec{s})$ be the probability of the seeing the trace $\vec{s}[1], \vec{o}[1], \vec{\imath}[1], \vec{s}[2], \vec{o}[2], \vec{\imath}[2], \ldots, \vec{s}[k], \vec{o}[k], \vec{\imath}[k],  \vec{s}[k+1], \vec{o}[k+1]$:
\begin{align}
Q(s, [])([\sigma(s)], [s]) &= 1 &&\\
Q(s, i\cons\vec{\imath})(\sigma(s)\cons\vec{o}, s\cons\vec{s}) &= \sum_{s'} \tau(s, i)(s') * Q(s',\vec{\imath})(\vec{o},\vec{s}) &&\\ 
Q(s, \vec{\imath})(\vec{o}, \vec{s}) &= 0 &&\text{otherwise}
\end{align}

We take the distribution $Q(\vec{\imath})$ over outputs to such that $Q(\vec{\imath})(\vec{o}) = \sum_{\vec{s}} Q(s_0, \vec{\imath})(\vec{o}, \vec{s})$.

The following lemma provides a closed form for $Q(s,\vec{\imath})$, which will become useful later.
\begin{lemma}\label{lem:q-out}
For all $Q$, $s$, and $\vec{\imath}$, $\vec{o}$, and $\vec{s}$ of equal lengths $k \geq 0$, $k+1$, and $k+1$, respectively,
\begin{align}
Q(s, \vec{\imath})(\vec{o}, \vec{s}) 
&= \degen(s)(\vec{s}[1]) *  \degen(\sigma(\vec{s}[1]))(\vec{o}[1]) * \prod_{\kappa = 1}^{k} \tau(\vec{s}[\kappa], \vec{\imath}[\kappa])(\vec{s}[\kappa+1]) * \degen(\sigma(\vec{s}[\kappa+1]))(\vec{o}[\kappa+1]) 
\end{align}
\end{lemma}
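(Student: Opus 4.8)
The plan is to prove the identity by induction on the length $k$ of the input list $\vec{\imath}$, mirroring the three clauses that define $Q(s,\vec{\imath})$.

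For the base case $k=0$, the list $\vec{\imath}$ is empty and $\vec{o},\vec{s}$ each have length one, so the right-hand side reduces to $\degen(s)(\vec{s}[1])\cdot\degen(\sigma(\vec{s}[1]))(\vec{o}[1])$ with an empty product. This equals $1$ exactly when $\vec{s}[1]=s$ and $\vec{o}[1]=\sigma(s)$, and $0$ otherwise, which is precisely what the first clause (giving $1$) and the third, ``otherwise'' clause (giving $0$) of the definition of $Q$ yield.

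For the inductive step I would write $\vec{\imath}=i\cons\vec{\imath}'$ and split into two subcases. First, the degenerate subcase where $\vec{s}[1]\neq s$ or $\vec{o}[1]\neq\sigma(s)$: here the only applicable defining clause is the ``otherwise'' one, giving $Q(s,\vec{\imath})(\vec{o},\vec{s})=0$, and correspondingly one of the first two factors on the right-hand side vanishes, so both sides are $0$. In the main subcase $\vec{s}=s\cons\vec{s}'$ and $\vec{o}=\sigma(s)\cons\vec{o}'$, the recursive clause gives
\[
Q(s,i\cons\vec{\imath}')(\sigma(s)\cons\vec{o}',s\cons\vec{s}')=\sum_{s'}\tau(s,i)(s')\cdot Q(s',\vec{\imath}')(\vec{o}',\vec{s}').
\]
I would then apply the induction hypothesis to each summand $Q(s',\vec{\imath}')(\vec{o}',\vec{s}')$. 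The leading factor $\degen(s')(\vec{s}'[1])$ of the hypothesis annihilates every term of the sum except $s'=\vec{s}'[1]$, collapsing it to $\tau(s,i)(\vec{s}'[1])$ times the hypothesis's product for $\vec{\imath}'$. Finally I would re-index: since $\vec{s}[1]=s$, $\vec{\imath}[1]=i$, $\vec{s}[\kappa+1]=\vec{s}'[\kappa]$, $\vec{\imath}[\kappa+1]=\vec{\imath}'[\kappa]$ and $\vec{o}[\kappa+1]=\vec{o}'[\kappa]$, the collapsed $\tau$-factor becomes the $\kappa=1$ term of the target product, and the remaining product for $\vec{\imath}'$ together with its output-matching $\degen$ factors shifts up by one index into the $\kappa=2,\dots,k$ terms; the two leading factors $\degen(s)(\vec{s}[1])$ and $\degen(\sigma(\vec{s}[1]))(\vec{o}[1])$ are both $1$ in this subcase, which closes the induction.

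The only real work is the bookkeeping in the inductive step: collapsing $\sum_{s'}$ correctly through the degenerate distribution and then matching the off-by-one index shift between $\vec{\imath}'$ and $\vec{\imath}$ (and likewise between $\vec{s}',\vec{o}'$ and $\vec{s},\vec{o}$) so that the two products literally coincide. I expect no conceptual obstacle; the hypothesis's constraint that $\vec{\imath}$, $\vec{o}$, $\vec{s}$ have lengths $k$, $k+1$, $k+1$ is exactly what makes both the recursion step and the re-indexing well-formed, so keeping those length conditions explicit throughout is what keeps the bookkeeping honest.
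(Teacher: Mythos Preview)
Your proposal is correct and follows essentially the same induction as the paper's proof. The only cosmetic difference is that where you split the inductive step into a degenerate subcase ($\vec{s}[1]\neq s$ or $\vec{o}[1]\neq\sigma(s)$) and a main subcase, the paper absorbs both into a single line by prefixing the recursive clause with the factors $\degen(s)(s')\cdot\degen(\sigma(s))(o)$, which vanish in the degenerate case; the sum collapse via $\degen(s'')(\vec{s}[1])$ and the subsequent index shift are identical.
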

\begin{proof}
Proof by induction.
Base Case: $k = 0$.
$Q(s, [])([\sigma(s)], [s]) = \degen(s)(\vec{s}[1]) *  \degen(\sigma(\vec{s}[1]))(\vec{o}[1])$, which is $1$ if $\vec{s}[1] = s$ and $\sigma(\vec{s}[1]) = \vec{o}[1]$ and $0$ otherwise as needed.

Inductive Case: $k > 0$.
\begin{align}
&Q(s, i\cons\vec{\imath})(o\cons\vec{o}, s'\cons\vec{s})\\
&= \degen(s)(s') * \degen(\sigma(s))(o) * \sum_{s''} \tau(s, i)(s'') * Q(s'',\vec{\imath})(\vec{o},\vec{s})\\
&= \degen(s)(s') * \degen(\sigma(s))(o) * \sum_{s''} \tau(s, i)(s'') * \degen(s'')(\vec{s}[1]) *  \degen(\sigma(\vec{s}[1]))(\vec{o}[1]) * \prod_{\kappa = 1}^{k-1} \tau(\vec{s}[\kappa], \vec{\imath}[\kappa])(\vec{s}[\kappa+1]) * \degen(\sigma(\vec{s}[\kappa+1]))(\vec{o}[\kappa+1]) \label{ln:ih}\\
&= \degen(s)(s') * \degen(\sigma(s))(o) * \tau(s, i)(\vec{s}[1]) * \degen(\vec{s}[1])(\vec{s}[1]) *  \degen(\sigma(\vec{s}[1]))(\vec{o}[1]) * \prod_{\kappa = 1}^{k-1} \tau(\vec{s}[\kappa], \vec{\imath}[\kappa])(\vec{s}[\kappa+1]) * \degen(\sigma(\vec{s}[\kappa+1]))(\vec{o}[\kappa+1])\label{ln:one}\\
&= \degen(s)(s') * \degen(\sigma(s))(o) * \tau(s, i)(\vec{s}[1]) * \degen(\sigma(\vec{s}[1]))(\vec{o}[1]) * \prod_{\kappa = 1}^{k-1} \tau(\vec{s}[\kappa], \vec{\imath}[\kappa])(\vec{s}[\kappa+1]) * \degen(\sigma(\vec{s}[\kappa+1]))(\vec{o}[\kappa+1])\\[1ex]
&= \begin{array}{ll}
  & \degen(s)(s') * \degen(\sigma(s))(o) * \tau(s, i)(s'\cons\vec{s}[1+1]) * \degen(\sigma(s'\cons\vec{s}[1+1]))(o\cons\vec{o}[1+1])\\
* & \prod\limits_{\kappa = 1}^{k-1} \tau(s'\cons\vec{s}[\kappa+1], i\cons\vec{\imath}[\kappa+1])(s'\cons\vec{s}[\kappa+1+1]) * \degen(\sigma(s'\cons\vec{s}[\kappa+1+1]))(o\cons\vec{o}[\kappa+1+1])\\
\end{array}\label{ln:something}\\[2ex]
&= \begin{array}{ll}
  & \degen(s)(s'\cons\vec{s}[1]) * \degen(\sigma(s'\cons\vec{s}[1]))(o\cons\vec{o}[1]) 
* \tau(s'\cons\vec{s}[1], i\cons\vec{\imath}[1])(s'\cons\vec{s}[1+1]) * \degen(\sigma(s'\cons\vec{s}[1+1]))(o\cons\vec{o}[1+1])\\
* & \prod\limits_{\kappa = 1}^{k-1} \tau(s'\cons\vec{s}[\kappa+1], i\cons\vec{\imath}[\kappa+1])(s'\cons\vec{s}[\kappa+1+1]) * \degen(\sigma(s'\cons\vec{s}[\kappa+1+1]))(o\cons\vec{o}[\kappa+1+1])\\
\end{array}\label{ln:ss}\\[2ex]
&= \begin{array}{ll}
  & \degen(s)(s'\cons\vec{s}[1]) * \degen(\sigma(s'\cons\vec{s}[1]))(o\cons\vec{o}[1]) \\
*\!\! & \tau(s'\cons\vec{s}[1], i\cons\vec{\imath}[1])(s'\cons\vec{s}[1+1]) * \degen(\sigma(s'\cons\vec{s}[1+1]))(o\cons\vec{o}[1+1])
* \prod\limits_{\kappa = 1+1}^{(k-1)+1} \tau(s'\cons\vec{s}[\kappa], i\cons\vec{\imath}[\kappa])(s'\cons\vec{s}[\kappa{+}1]) * \degen(\sigma(s'\cons\vec{s}[\kappa{+}1]))(o\cons\vec{o}[\kappa{+}1])\\
\end{array}\label{ln:reindex}\\
&= \degen(s)(s'\cons\vec{s}[1]) *  \degen(\sigma(s'\cons\vec{s}[1]))(o\cons\vec{o}[1]) * \prod_{\kappa = 1}^{k} \tau(s'\cons\vec{s}[\kappa], i\cons\vec{\imath}[\kappa])(s'\cons\vec{s}[\kappa+1]) * \degen(\sigma(s'\cons\vec{s}[\kappa+1]))(o\cons\vec{o}[\kappa+1]) \label{ln:suckup}
\end{align}
where \eqref{ln:ih} comes from the inductive hypothesis,
\eqref{ln:one} follows since $\degen(s'')(\vec{s}[1])$ will be $0$ for all other values of $s''$,
\eqref{ln:ss} follows since unless $s' = s$, the value will be zero due the $\degen(s)(s')$ term,
\eqref{ln:reindex} changes the indexing of the product so that \eqref{ln:suckup} can roll the two terms before the product into the product by starting the indexing from $1$ instead of $1+1$.
\end{proof}

\section{Universal Unsoundness and Incompleteness Proofs}

\subsection{Theorem~\ref{thm:unsound-inter}}
The theorem states:
\begin{quote}
Any black box analysis that ever returns a positive result from interference for $H$ to $L$ is unsound for interference from $H$ to $L$.
\end{quote}
\begin{proof}%
Assume that analysis $A$ can return a positive result for interference from interacting with a system.  Then, there must exist a system $q_\posres = \langle \mathcal{S}_\posres, s_{0\posres}, \mathcal{I}_\posres, \mathcal{O}_\posres, \tau_\posres, \sigma_\posres\rangle$ and $\vec{\imath}_\posres$ such that the output $q_\posres(\vec{\imath}_\posres)$ leads to $A$ returning a positive result.  
$q_\posres(\vec{\imath}_\posres)$ leads to a trace $[s_1, o_1, i_1, s_2, o_2, i_2, \ldots, s_k, o_k, i_{k}, s_{k+1}, o_{k+1}]$ where $s_1 = s_{0\posres}$, $o_j = \sigma_\posres(s_j)$, $i_j = \vec{\imath}_\posres[j]$, $s_j = \tau(s_{j-1}, i_{j-1})$, and $|\vec{\imath}_\posres| = k$.

Let $q_{\mathrm{N}}$ be a system that has noninterference but behaves like $q_\posres$ on $\vec{\imath}_\posres$.  That is, let $q_{\mathrm{N}}$ be $\langle \mathcal{S}_{\mathrm{N}}, s_{0{\mathrm{N}}}, \mathcal{I}_{\posres}, \mathcal{O}_{\posres}, \tau_{\mathrm{N}}, \sigma_{\mathrm{N}}\rangle$ where 
\begin{itemize}
\item $\mathcal{S}_{\mathrm{N}} = \{s^{\mathrm{N}}_1, \ldots, s^{\mathrm{N}}_k, s^{\mathrm{N}}_{k+1}\}$, 
\item $s_{0{\mathrm{N}}} = s^{\mathrm{N}}_1$,
\item $\tau_{\mathrm{N}}(s^{\mathrm{N}}_j, i) = s^{\mathrm{N}}_{j+1}$ for all $j \leq k$ and $\tau_{\mathrm{N}}(s^{\mathrm{N}}_{k+1}, i) = s^{\mathrm{N}}_{k+1}$ for all $i$, and 
\item $\sigma_{\mathrm{N}}(s^{\mathrm{N}}_j) = o_j$ for all $j \leq k + 1$.
\end{itemize}
Since the behavior of $q_{\mathrm{N}}$ does not depend upon any inputs, it has noninterference.  However, by construction, $q_{\mathrm{N}}(\vec{\imath}_\posres) = q_{\posres}(\vec{\imath}_\posres)$.  Thus, $A$ cannot tell them apart even with the ability to observe every input and output to the system.  Thus, it must produce an unsound positive result for interference on $q_{\mathrm{N}}$.
\end{proof}

\subsection{Theorem~\ref{thm:unsound-noninter}}
The theorem states:
\begin{quote}
Any black box analysis that ever returns a positive result for noninterference from $H$ to $L$ is unsound for noninterference from $H$ to $L$ if $H$ has two inputs and $L$ has two outputs.
\end{quote}
\begin{proof}%
Assume that $A$ can return a positive result from interacting with a system.  Then, there must exist a system $q_\negres$ and $\vec{\imath}_\negres$ such that the output $q_\negres(\vec{\imath}_\negres)$ lead $A$ to return positive for noninterference.  Let the trace of $q_\negres$ on $\vec{\imath}_\negres$ be $[s_1, o_1, i_1, s_2, o_2, \ldots]$. 

Let $q_{\mathrm{I}}$ be a system that has interference but behaves like $q_\negres$ on $\vec{\imath}_\negres$.  
That is, $q_{\mathrm{I}}$ be $\langle \mathcal{S}_{\mathrm{I}}, s_{0{\mathrm{I}}}, \mathcal{I}_{\negres}, \mathcal{O}_{\negres}, \tau_{\mathrm{I}}, \sigma_{\mathrm{I}}\rangle$ where 
\begin{itemize}
\item $\mathcal{S}_{\mathrm{I}} = \mathcal{S}_{\negres} \cup \{s_{00}, s_{01}\}$;

\item $s_{0{\mathrm{I}}} = s_{00}$;

\item $\tau_{\mathrm{I}}(s, i) = \tau_{\negres}(s, i)$ for all $i$ and $s$ in $\mathcal{S}_\negres$,
$\tau_{\mathrm{I}}(s_{00}, \vec{\imath}_\negres[1]) = \tau(s_0, \vec{\imath}_\negres[1])$,
$\tau_{\mathrm{I}}(s_{00}, i) = s_{01}$ for all $i \neq \vec{\imath}_\negres[1]$, and
$\tau_{\mathrm{I}}(s_{01}, i) = s_{01}$ for all $i$;

\item $\sigma_{\mathrm{I}}(s) = \sigma_{\negres}(s)$ for all $s$ other than $s_{01}$ and 
  $\sigma_{\mathrm{I}}(s_{01}) = o_{01}$ where $o_{01} \neq o_2 = \sigma_{\negres}(s_2)$.
\end{itemize}
Note that since $|\mathcal{O}| \geq 2$, such an $o_{01}$ exists, and since $|\mathcal{I}| \geq 2$, an $i \neq \vec{\imath}_\negres[1]$ exists making $s_{01}$ reachable.

The behavior of $q_{\mathrm{I}}$ at the state $s_{01}$ versus $s_1$ shows that it has interference when we consider an input $i$ such that $i \neq \vec{\imath}_\negres[1]$ and $i$ differs from $\vec{\imath}_\negres[1]$ by just high-level information.  However, by construction, $q_{\mathrm{I}}(\vec{\imath}_\negres) = q_{\negres}(\vec{\imath}_\negres)$.  Thus, $A$ cannot tell them apart  even with the ability to observe every input and output to the system.  Thus, it must produce an unsound result for $q_{\mathrm{I}}$ having noninterference.
\end{proof}

\section{Background: Causality}

In this section, we review Pearl's formalism of causality~\cite{pearl00book}.  In particular, we use notation and results found in Chapters~1 and~7 of~\cite{pearl00book}.

\paragraph{Background on Probability}
Recall that for any two propositions $A_1$ and $A_2$, $\mathcal{P}(A_1 \land A_2) = \mathcal{P}(A_2 \given A_1)*\mathcal{P}(A_1)$ if $\mathcal{P}(A_1) > 0$.  
If $\mathcal{P}(A_1) = 0$, then $\mathcal{P}(A_2 \given A_1)$ is not defined.
We adopt the convention that the product of a undefined term by zero will be zero (which is similar to~\cite{knuth92maa}).  Under this convention, $\mathcal{P}(A_1 \land A_2) = \mathcal{P}(A_2 \given A_1)*\mathcal{P}(A_1)$ holds in general.
Under this convention, the chain rule of probability iterates the above equation:
\begin{align}
\mathcal{P}(\land_{j=1}^{J} A_j) &= \prod_{j=1}^J \mathcal{P}(A_j \given \land_{k=1}^{j-1} A_k)
\end{align}

\paragraph{SEMs}
Recall that a probabilistic SEM $M$ is a tuple $\langle \mathcal{V}_{\msf{en}}, \mathcal{V}_{\msf{ex}}, \mathcal{E}, \mathcal{P}\rangle$ where $\mathcal{V}_{\msf{en}}$ is the endogenous variables,  $\mathcal{V}_{\msf{ex}}$ is the exogenous variables, $\mathcal{E}$ provides a \emph{structural equation} for each endogenous variable $V$,
 and $\mathcal{P}$ is a probability distribution.  

To define $\mathcal{E}$ in more detail,
let the space of functions $\mathcal{F}_{V}$ be (possibly randomized) functions from the ranges of a subset of the variables other than $V$ to the range of $V$.  
$\mathcal{E}$ maps a variable $V$ in $\mathcal{V}_{\msf{en}}$ to a function in $\mathcal{F}_V$.
If $V$ is mapped to a function $F_V$ that does not include the range of the variable $V'$, then $V$ does not have a direct dependence upon $V'$.
We write $V := F_V(\vec{V})$ where $\vec{V}$ is a list of other variables not equal to $V$ if $\mathcal{E}$ maps $V$ to a function $F_V$ that directly depends upon the variables $\vec{V}$.  Let $\pa(V)$ denote the variables $\vec{V}$, called the \emph{parents} of $V$.  Let $\pa(V)$ be the empty set for exogenous variables $V$.

To define $\mathcal{P}$ in more detail, let $\mathcal{P}$ map each exogenous variable $V$ to a probability distribution $\mathcal{P}_V$ over the range of $V$.  Note that exogenous variables are assumed to be independent and, thus, these marginal distributions suffice for explaining their behavior.

We call a SEM \emph{recursive} if the graph of variables created by including a directed edge from every parent to every child variable (node) is acyclic.  We will limit our discuss to recursive SEMs.  We will implicitly order their variables by the topology created by this graph.

\paragraph{Assigning Probabilities: Factorization}
We can use the topological ordering on the variables to extend to $\mathcal{P}$ to assign probabilities to assignments of values to variables.
To do so, we define some notation.
For a vector $\vec{V}$, we use $\vec{V}[j]$ to denote its $j$th component.
We take $\vec{V}=\vec{v}$ be shorthand for $\bigwedge_{j=1}^{t} \vec{V}[j]=\vec{v}[j]$ where $\vec{V}$ is a vector of length $t$ holding variables.
Similarly, let $\vec{V}^{j:k}=\vec{v}$ be shorthand for $\bigwedge_{t=j}^{k} \vec{V}[t]=\vec{v}[t]$.
We use $\pa(V)=\vec{w}$ as sort hand for $\bigwedge_{W_j \in \pa(V)} W_j = \vec{w}[j]$ where there is some implicit ordering on variables associating the $j$th element of $\pa(V)$ to the $j$th component of $\vec{w}$.

We start by assigning a probability to a variable given its parents in the SEM $M$.  For exogenous variables $V$,
let $\mathcal{P}^M(V = v \given \pa(V) = \vec{v})$ be $\mathcal{P}_V(v)$.  (Recall that $\pa(V)$ is the empty set for exogenous variables.  Thus, the vector $\vec{v}$ of values is empty as well.)
For endogenous variables $V$ defined by a deterministic function $f_V$, let $\mathcal{P}^M(V = v \given \pa(V) = \vec{v})$ be $1$ if $v = f_V(\vec{v})$ and be $0$ otherwise.  For randomized functions $F_V$, let $\mathcal{P}^M(V = v \given \pa(V) = \vec{v})$ be the probability that $v = F_V(\vec{v})$.

For a vector of all the variables $\vec{V}$ and a vector of values $\vec{v}$ they can take on, we determine $\mathcal{P}^M(\vec{V}{=}\vec{v})$ using a \emph{factorization} created by the chain rule:
\begin{align}
\mathcal{P}^M(\vec{V}{=}\vec{v}) 
&= \prod_{j=1}^{|\vec{V}|} \mathcal{P}^M(\vec{V}[j]{=}\vec{v}[j] \given \vec{V}^{1:j-1} = \vec{v}^{1:j-1})\\
&= \prod_{j=1}^{|\vec{V}|} \mathcal{P}^M(\vec{V}[j]{=}\vec{v}[j] \given \pa(\vec{V}[j]) = \vec{v}_{\pa(\vec{V}[j])}) \label{ln:fac}
\end{align}
where $j$ ranges over $\vec{V}$ in a manner that respects the variables' topology,
$\vec{v}_{\pa(\vec{V}[j])}$ is $\vec{v}$ restricted to just these components corresponding to elements of $\pa(\vec{V}[j])$, and we take $\pa(V)$ to be the empty set for exogenous variables $V$.
\eqref{ln:fac} follows since $\vec{V}^{1:j-1} = \vec{v}^{1:j-1}$ includes all the parents of $\vec{V}[j]$ by using the topological ordering and $\vec{V}[j]$ is independent of its non-parents given its parents.

For $\vec{W} = \vec{w}$ involving a subset of the variables, we use the following:
\begin{align}
\mathcal{P}^M(\vec{W}{=}\vec{w}) 
&= \sum_{\vec{u}} \mathcal{P}^M(\vec{W} = \vec{w}, \vec{U} = \vec{u})
= \sum_{\vec{u}} \prod_{j=1}^{|\vec{V}|} \mathcal{P}^M(\vec{V}[j]{=}\vec{v}[j] \given \pa(\vec{V}[j]) = v(\vec{w},\vec{u})_{\pa(\vec{V}[j])}) \label{ln:sum-over-nus}
\end{align}
where $\vec{U}$ are the remaining variables, $\vec{V}$ is a vector consisting of the components of $\vec{W}$ and $\vec{U}$ put into order, and
$v(\vec{w},\vec{u})$ is the vector $\vec{v}$ that results from combining the components of $\vec{w}$ and $\vec{u}$ in order.

\paragraph{Sub-Models and Truncated Factorization}

Recall that for an SEM $M$, endogenous variable $X$, and value $x$ that $X$ can take on,
the \emph{sub-model} $M[X{:=}x]$ is the SEM that results from replacing the equation $X := F_X(\vec{V})$ in $\mathcal{E}$ with the equation $X := x$.  
That is, for $M = \langle \mathcal{V}_{\msf{en}}, \mathcal{V}_{\msf{ex}}, \mathcal{E}, \mathcal{P}\rangle$, 
$M[X{:=}x] = \langle \mathcal{V}_{\msf{en}}, \mathcal{V}_{\msf{ex}}, \mathcal{E}[X := x], \mathcal{P}\rangle$ where 
$\mathcal{E}[X := x](X) = \lambda . x$ (the function that takes no arguments and always returns $x$) and 
$\mathcal{E}[X := x](V) = \mathcal{E}(V)$ for $V \neq X$.

$\mathcal{P}^M$ and $\mathcal{P}^{M[X{:=x}]}$ are related by \emph{truncated factorization}.  To define it, let $X$ be the $k$th variable in the topological order.  For $\vec{V}{=}\vec{v}$ that assigns $X$ the value $x$ (i.e., $\vec{v}[k] = x$),
\begin{align}
\mathcal{P}^{M[X{:=}x]}(\vec{V}{=}\vec{v}) 
&\:\:=\:\: \prod_{j=1}^{|\vec{V}|} \mathcal{P}^{M[X{:=}x]}(\vec{V}[j]{=}\vec{v}[j] \given \pa(\vec{V}[j]) = \vec{v}_{\pa(\vec{V}[j])}) 
&= \prod_{j=1 : j \neq k}^{|\vec{V}|} \mathcal{P}^{M}(\vec{V}[j]{=}v(\vec{w},\vec{u})[j] \given \pa(\vec{V}[j]) = \vec{v}_{\pa(\vec{V}[j])}) \label{ln:skip}
\end{align}
where the produce in \eqref{ln:skip} skips $X$, the $k$th variable.
For $\vec{V}{=}\vec{v}$ that assigns $X$ a value other than $x$, $\mathcal{P}^{M[X{:=}x]}(\vec{V}{=}\vec{v})$ is $0$.
The above extends to subsets of all variables as in \eqref{ln:sum-over-nus}.

Henceforth, for readability, we adopt Pearl's \emph{do} notation.  We will drop the $M$ from $\mathcal{P}^M$ when $M$ is clear from context.  We will denote $\mathcal{P}^{M[X{:=}x]}(\vec{V}{=}\vec{v})$ as $\mathcal{P}(\vec{V}{=}\vec{v} \given \doo(X{:=}x))$.
We take $\doo(\vec{X} := \vec{x})$ be shorthand for $\bigwedge_{j=1}^{|\vec{X}|} \doo(\vec{X}[j]:=\vec{x}[j])$.  We understand $\mathcal{P}(\vec{V}{=}\vec{v} \given \doo(\vec{X}{:=}\vec{x}))$ to be iterative application of taking a sub-model with
\begin{align}
\mathcal{P}(\vec{V}{=}\vec{v} \given \doo(\vec{X}{:=}\vec{x}))
&= \prod_{j=1 : j \notin K}^{|\vec{V}|} \mathcal{P}(\vec{V}[j]{=}\vec{v}[j] \given \pa(\vec{V}[j]) = \vec{v}_{\pa(\vec{V}[j])}) 
\end{align}
where $K$ is the set containing the indexes of the variables in $\vec{X}$.

Pearl presents two useful properties~\cite[pg 24]{pearl00book}.  The first allows converting normal conditional statements to \emph{do} statements when conditioning upon all of a variable's parents.  The second allows for dropping irreverent \emph{do} statements when conditioning upon all of a variable's parents.

\begin{lemma}[Pearl's Property~1]
\[ \mathcal{P}(Y{=}y \given \pa(Y){=}\vec{x})  
\:\:=\:\: \mathcal{P}(Y{=}y \given \doo(\pa(Y){:=}\vec{x})) \]
\end{lemma}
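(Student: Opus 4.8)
The plan is to show that each side of the identity equals the single factor that $Y$ contributes to the factorization~\eqref{ln:fac}, namely $\mathcal{P}^M(Y{=}y \given \pa(Y){=}\vec{x})$ in the sense fixed when the joint was defined (``the probability that $y = F_Y(\vec{x})$'' for a randomized $F_Y$, and $1$ if $y = f_Y(\vec{x})$ and $0$ otherwise for a deterministic one). Write $\phi_{y,\vec{x}}$ for this factor. The left-hand side is a genuine conditional probability derived from $\mathcal{P}^M$, and the right-hand side is $\mathcal{P}^{M[\pa(Y):=\vec{x}]}(Y{=}y)$; so it suffices to prove that both equal $\phi_{y,\vec{x}}$.

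For the right-hand side I would unfold the \emph{do}-operator with the truncated-factorization rule~\eqref{ln:skip}: $\mathcal{P}(Y{=}y \given \doo(\pa(Y){:=}\vec{x}))$ is the sum, over all assignments to the variables other than the (now severed) parents of $Y$, of the product of the original factors of every variable except those parents. The factor indexed by $Y$ is exactly $\phi_{y,\vec{x}}$, and it involves none of the variables being summed over, so it pulls out of the sum; what remains is the sum, over all the remaining variables, of the product of their own structural conditionals. That residual sum is $1$, by induction along a reverse topological order: the last variable has no children among the factors still present, so it is eliminated by normalization of its own conditional, and one iterates (any parent that happens to lie in $\{Y\}\cup\pa(Y)$ is simply held fixed). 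Equivalently and more briefly: $M[\pa(Y){:=}\vec{x}]$ leaves $Y$'s own equation $Y := F_Y(\pa(Y))$ intact with its parents frozen at $\vec{x}$, so $\mathcal{P}(Y{=}y \given \doo(\pa(Y){:=}\vec{x})) = \Pr[F_Y(\vec{x}){=}y] = \phi_{y,\vec{x}}$.

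For the left-hand side, the content is exactly the standard fact that in a recursive SEM every factor of~\eqref{ln:fac} is the true conditional of that variable given its parents in the joint distribution it defines. I would prove this on the spot for $Y$: expand $\mathcal{P}^M(Y{=}y, \pa(Y){=}\vec{x})$ via~\eqref{ln:sum-over-nus}, pull $\phi_{y,\vec{x}}$ out of the sum as above, and then sum out the \emph{proper descendants} of $Y$ in reverse topological order. Each such sum contributes a factor of $1$ and --- the crucial point --- erases all dependence of the remaining factors on the value $y$, since a descendant's normalization holds whatever value its parents (possibly including $Y$) take. What is left, summed over the non-descendants, is precisely the factorization of $\mathcal{P}^M(\pa(Y){=}\vec{x})$. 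Hence $\mathcal{P}^M(Y{=}y, \pa(Y){=}\vec{x}) = \phi_{y,\vec{x}}\cdot\mathcal{P}^M(\pa(Y){=}\vec{x})$, so the conditional equals $\phi_{y,\vec{x}}$ whenever $\mathcal{P}^M(\pa(Y){=}\vec{x})>0$; when that probability is $0$ the stated equality holds under the convention that an undefined conditional times $0$ is $0$.

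The one nonroutine ingredient, used on both sides, is this ``sums to one along a reverse topological order'' lemma, together with the observation that summing out the descendants of $Y$ kills all $y$-dependence in the complementary factors. Both follow from recursiveness (acyclicity of the parent graph), and both must be stated with the zero-times-undefined convention in view so the manipulations remain valid when some conditioning event has probability $0$; everything else is bookkeeping with~\eqref{ln:fac}--\eqref{ln:skip}.
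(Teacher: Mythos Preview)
The paper does not actually prove this lemma: it is stated as ``Pearl's Property~1'' and attributed to~\cite[pg~24]{pearl00book} without further argument. So there is no proof in the paper to compare against; your proposal supplies what the paper simply cites.

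Your argument is correct. Both halves rest on the same mechanism: in a recursive SEM, summing out a set of variables in reverse topological order collapses their conditional factors to~$1$. For the right-hand side this is immediate once you note that intervening on all of $\pa(Y)$ freezes $Y$'s equation at $F_Y(\vec{x})$, so $\mathcal{P}(Y{=}y\given\doo(\pa(Y){:=}\vec{x}))=\Pr[F_Y(\vec{x}){=}y]=\phi_{y,\vec{x}}$; the longer truncated-factorization route you sketch also works. For the left-hand side, your decomposition into $\{Y\}$, $\pa(Y)$, proper descendants, and remaining non-descendants is exactly right: the descendants sum to~$1$ and carry away all dependence on~$y$, and what survives is $\phi_{y,\vec{x}}$ times the marginal $\mathcal{P}^M(\pa(Y){=}\vec{x})$. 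The one place to be slightly careful is confirming that the residual sum over the non-descendants really is $\mathcal{P}^M(\pa(Y){=}\vec{x})$; this holds because marginalizing the full joint over $Y$ and its descendants (using the same reverse-topological argument plus $\sum_{y'}\phi_{y',\vec{x}}=1$) yields that same expression. Your handling of the zero-probability case via the paper's $0\times\text{undefined}=0$ convention is also appropriate.
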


\begin{lemma}[Pearl's Property~2]
\[ \mathcal{P}(Y{=}y \given \doo(\pa(Y){:=}\vec{x}), \doo(\vec{Z}{:=}\vec{z}))
\:\:=\:\: \mathcal{P}(Y{=}y \given \doo(\pa(Y){:=}\vec{x})) \]
\end{lemma}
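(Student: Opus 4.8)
The plan is to show that both sides collapse to the same quantity, namely $\mathcal{P}(Y{=}y \given \pa(Y){=}\vec{x})$ — the law of $F_Y(\vec{x})$ — by running the truncated‑factorization definition of the $\doo$ operator through a marginalization. Fix a topological ordering $V_1,\dots,V_n$ of all the variables, write $Y = V_m$, and let $K$ be the set of indices of the intervened variables $\pa(Y)\cup\vec{Z}$; since the SEM is recursive, $m\notin K$. As a preliminary I would dispose of overlaps: assume $\vec{Z}$ is disjoint from $\{Y\}\cup\pa(Y)$ (if some coordinate of $\vec{Z}$ lies in $\pa(Y)$, the intervention $\doo(\pa(Y){:=}\vec{x})$ already pins that coordinate, so the extra $\doo$‑entry is redundant or overridden and can be normalized away). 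For the right‑hand side nothing further is needed: Pearl's Property~1, already available, gives $\mathcal{P}(Y{=}y \given \doo(\pa(Y){:=}\vec{x})) = \mathcal{P}(Y{=}y \given \pa(Y){=}\vec{x})$.

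For the left‑hand side I would expand the double $\doo$ using the truncated‑factorization formula stated just above and then sum out every variable other than $Y$ and the intervened ones, obtaining
\[ \mathcal{P}(Y{=}y \given \doo(\pa(Y){:=}\vec{x}), \doo(\vec{Z}{:=}\vec{z})) = \sum_{\vec{v} : \vec{v}[m]=y,\ \vec{v}_{\pa(Y)}=\vec{x},\ \vec{v}_{\vec{Z}}=\vec{z}} \ \prod_{j \notin K} \mathcal{P}(\vec{V}[j]{=}\vec{v}[j] \given \pa(\vec{V}[j]){=}\vec{v}_{\pa(\vec{V}[j])}). \]
Because $\pa(Y)\subseteq\pa(Y)\cup\vec{Z}$, on the support of the sum we have $\vec{v}_{\pa(Y)}=\vec{x}$, so the $j=m$ factor equals the constant $\mathcal{P}(Y{=}y \given \pa(Y){=}\vec{x})$ and can be pulled outside the sum.

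The one substantive step is then to show the remaining sum — over the "free" coordinates, those with index not in $K\cup\{m\}$, of the product of their factors — equals $1$. I would prove this by induction, summing out the free variables in \emph{decreasing} topological order: at each stage the factor of the currently largest free variable $V_\ell$ is the only surviving factor that mentions $V_\ell$, since any variable having $V_\ell$ as a parent has a larger index and hence lies in $K$ or equals $m$, so its factor is absent from the product; thus $\sum_{v_\ell}\mathcal{P}(V_\ell{=}v_\ell\given\pa(V_\ell){=}\cdots)=1$ collapses it, and the recursion terminates with the empty product $=1$. Combining, the left‑hand side equals $\mathcal{P}(Y{=}y \given \pa(Y){=}\vec{x})$, matching the right‑hand side.

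I expect the bookkeeping in that last induction to be the main obstacle — keeping straight which factors still mention a given variable once earlier ones have been summed out, and handling cleanly the coordinates that are pinned (to $\vec{x}$, $\vec{z}$, or $y$) versus those being summed. Conceptually this is just the familiar fact that a mutilated Markov factorization is still a proper distribution, but it has to be spelled out against the appendix's notation. As a fallback I would keep in reserve the purely semantic argument — in $M[\pa(Y){:=}\vec{x}][\vec{Z}{:=}\vec{z}]$ the equation for $Y$ is still $Y:=F_Y(\pa(Y))$ while all of $\pa(Y)$ are now the constants $\vec{x}$, so $Y$ is distributed as $F_Y(\vec{x})$ regardless of the $\vec{Z}$ intervention — but since formalizing this bottoms out in the same marginalization identity, I would present the computational version.
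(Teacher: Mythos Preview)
The paper does not prove this lemma at all: it simply states it as one of ``two useful properties'' and cites Pearl~[pg~24] for both. So there is nothing to compare your argument against.

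Your proposal is correct. One wording issue in the induction step: you justify that $V_\ell$'s own factor is the only surviving one mentioning $V_\ell$ by saying that any child of $V_\ell$ ``has a larger index and hence lies in $K$ or equals $m$.'' That clause is literally true only at the first step, when $V_\ell$ is the globally largest free index; at later steps a child of $V_\ell$ may well be another free variable --- but one with larger index, hence already summed out in your decreasing-order sweep. Your word ``surviving'' shows you have the right picture; just make the third case (already summed out) explicit when you write it up. With that fix the marginalization goes through exactly as you describe.

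Your semantic fallback is also fine and arguably cleaner: in $M[\pa(Y){:=}\vec{x}][\vec{Z}{:=}\vec{z}]$ the structural equation for $Y$ is unchanged, still $Y := F_Y(\pa(Y))$, while every variable in $\pa(Y)$ now has the constant equation $:=\vec{x}[\cdot]$; hence $Y$ is distributed as $F_Y(\vec{x})$ irrespective of what is done to $\vec{Z}$, which is exactly $\mathcal{P}(Y{=}y \given \pa(Y){=}\vec{x})$. Since the paper defines sub-models directly by equation replacement, this version requires no marginalization bookkeeping at all and would be the shorter write-up.
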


\section{Interference and Causation}
\label{app:connection}

\subsection{Model}

Given a probabilistic Moore Machine $Q$, we define a SEM $M_Q$ of $Q$.  Intuitively, it contains endogenous variables for each input and output and exogenous variables for each user.  The behavior of $Q$ provides functions $F_{\mathsf{lo},t}$ defining the low output at time $t$ in terms of the previous and current inputs.
In more detail, for each time $t$, we create the endogenous variables
$\mathsf{HI}_t$,
$\mathsf{HO}_t$,
$\mathsf{LI}_t$, and
$\mathsf{LO}_t$ for the high input and output, and low input and output, respectively, at the time $t$.
We add exogenous variables $\mathsf{HU}_t$ and $\mathsf{LU}_t$ that represents the behavior of high and low users of the system at time $t$.

For a indexed family of variables $\vec{V}$, we use $\vec{V}^t$ to denote the vector holding those with an index of $t$ or less (in order).
That is, $\vec{V}^{t}=\vec{v}$ be shorthand for $\bigwedge_{j=1}^{t} \vec{V}[j]=\vec{v}[j]$.

The following table shows how we define these variables and functions:
\begin{tab}{@{}lllll@{}}
$V$               &           & $\pa(V)$                                               & $F_V$ \\
\midrule 
$\mathsf{HU}_{t+1}$   & high user   & $\emptyset$                                                 & (exogenous) & for all $t \geq 0$\\
$\mathsf{LU}_{t+1}$   & low user    & $\emptyset$                                                 & (exogenous) & for all $t \geq 0$\\
$S_{0}$           & initial state       &  $\emptyset$  & $F_{\mathsf{s},0}() = \degen(s_0)$\\
$S_{t+1}$           & state       &  $\{S_{t}, \mathsf{HI}_{t}, \mathsf{LI}_{t} \}$  & $F_{\mathsf{s},t+1}(s_{t}, \mathsf{hi}_{t}, \mathsf{li}_{t})(s') = \tau(s_{t}, \langle \mathsf{hi}_{t}, \mathsf{li}_{t})\rangle)$ & for all $t \geq 0$\\
$\mathsf{HI}_{t+1}$ & high input  & $\{ \mathsf{HU}_{t+1}, \mathsf{LU}_{t+1}, \mathsf{HO}_1, \ldots, \mathsf{HO}_{t}, \mathsf{LO}_1, \ldots, \mathsf{LO}_{t} \}$ & $F_{\mathsf{hi},t+1}(\mathsf{HU}_{t+1}, \msf{LU}_{t+1}, \vec{\mathsf{HO}}^{t}, \vec{\msf{LO}}^t)$ & for all $t \geq 0$\\
$\mathsf{LI}_{t+1}$ & low input   & $\{ \mathsf{HU}_{t+1}, \mathsf{LU}_{t+1}, \mathsf{HO}_1, \ldots, \mathsf{HO}_{t}, \mathsf{LO}_1, \ldots, \mathsf{LO}_{t} \}$ & $F_{\mathsf{li},t+1}(\mathsf{HU}_{t+1}, \mathsf{LU}_{t+1}, \vec{\msf{HO}}^{t}, \vec{\mathsf{LO}}^{t})$ & for all $t \geq 0$\\
$\mathsf{HO}_t$ & high output & $\{ S_t \}$ & $F_{\mathsf{ho},t}(s_{t}) = \degen(\filter{\sigma(s_t)}{H})$ & for all $t \geq 0$\\
$\mathsf{LO}_t$ & low input   & $\{ S_t \}$ & $F_{\mathsf{lo},t}(s_{t}) = \degen(\filter{\sigma(s_t)}{L})$ & for all $t \geq 0$\\
\end{tab}

The form of $\mathcal{P}(V{=}v \given \pa(V))$ depends upon the type of variable that $V$ is.  Here are the options based on the above table:
\begin{tab}{@{}lllll@{}}
$V$                 & $\mathcal{P}(V{=}v \given \pa(V))$  &  &\\
\midrule 
$\mathsf{HU}_{t+1}$ & $\mathcal{P}(\mathsf{HU}_{t+1}{=}\msf{hu}_{t+1})$  & & & for all $t \geq 0$\\
$\mathsf{LU}_{t+1}$ & $\mathcal{P}(\mathsf{LU}_{t+1}{=}\msf{lu}_{t+1})$  & & & for all $t \geq 0$\\
$S_{0}$             & $\mathcal{P}(S_{0}{=}s)$                          & $=$ & $\degen(s_0)(s)$ &\\
$S_{t+1}$           & $\mathcal{P}(S_{t+1}=s_{t+1} \given S_{t}{=}s_{t}, \mathsf{HI}_{t}{=}\msf{hi}_{t}, \mathsf{LI}_{t}{=}\msf{li}_{t})$  &$=$&  $\tau(s_{t}, \langle \mathsf{hi}_{t}, \mathsf{li}_{t})\rangle)(s_{t+1})$ & for all $t \geq 0$\\
$\mathsf{HI}_{t+1}$ & $\mathcal{P}(\mathsf{HI}_{t+1}{=}\msf{hi}_{t+1} \given \mathsf{HU}_{t+1}{=}\msf{hu}_{t+1}, \mathsf{LU}_{t+1}{=}\msf{lu}_{t+1}, \vec{\mathsf{HO}}^{t}{=}\vec{\msf{ho}}, \vec{\mathsf{LO}}^{t}{=}\vec{\msf{lo}})$ & $=$ & $F_{\mathsf{hi},t+1}(\mathsf{hu}_{t+1}, \mathsf{lu}_{t+1}, \vec{\mathsf{ho}}, \vec{\mathsf{lo}})(\msf{hi}_{t+1})$ & for all $t \geq 0$\\
$\mathsf{LI}_{t+1}$ &  $\mathcal{P}(\mathsf{LI}_{t+1}{=}\msf{li}_{t+1} \given  \mathsf{HU}_{t+1}{=}\msf{hu}_{t+1}, \mathsf{LU}_{t+1}{=}\msf{lu}_{t+1}, \vec{\mathsf{HO}}^{t}{=}\vec{\msf{ho}}, \vec{\mathsf{LO}}^{t}{=}\vec{\msf{lo}})$ & $=$ & $F_{\mathsf{li},t+1}(\mathsf{hu}_{t+1}, \mathsf{lu}_{t+1}, \vec{\mathsf{ho}}, \vec{\mathsf{lo}})(\msf{li}_{t+1})$ & for all $t \geq 0$\\
$\mathsf{HO}_t$ &  $\mathcal{P}(\mathsf{HO}_{t}{=}\mathsf{ho}_t \given S_t{=}s_t)$ &$=$& $\degen(\filter{\sigma(s_t)}{H})(\msf{ho}_t)$ & for all $t \geq 0$\\
$\mathsf{LO}_t$ &  $\mathcal{P}(\mathsf{LO}_{t}{=}\msf{lo}_t \given S_t{=}s_t)$    &$=$&  $\degen(\filter{\sigma(s_t)}{L})(\msf{lo}_t)$ & for all $t \geq 0$\\
\end{tab}

Let $M_Q$ consist of the variables and equations defined above plus an unknown probability distribution $\mathcal{P}$.

\subsection{Relation of Models}

Let $\vec{V}^{j:k}=\vec{v}$ be shorthand for $\bigwedge_{t=j}^{k} \vec{V}[t]=\vec{v}[t]$.
Let $\doo(V:=v)$ be Pearl's \emph{do} operation denoting an intervention fixing a value, such as by applying a treatment to an experimental unit~\cite{pearl00book}.
Let $\doo(\vec{V}^{j:k}:=\vec{v})$ be short hand for $\bigwedge_{t = j}^{k} \doo(\vec{V}[t] := \vec{v}[t])$.
Let $\vec{O}^{j:k} = \vec{o}$ be shorthand for $\filter{\msf{HO}^{j:k}}{H} = \filter{\vec{o}}{H} \land \filter{\msf{LO}^{j:k}}{L} = \filter{\vec{o}}{L}$.
Let $\vec{I}^{j:k} = \vec{\imath}$ be shorthand for $\filter{\msf{HI}^{j:k}}{H} = \filter{\vec{\imath}}{H} \land \filter{\msf{LI}^{j:k}}{L} = \filter{\vec{\imath}}{L}$.
We define $\doo(\vec{O}^{j:k} := \vec{o})$ and $\doo(\vec{I}^{j:k} := \vec{\imath})$ similarly.

We define the equivalent of $Q(s, \vec{\imath})(\vec{s},\vec{o})$ for an SEM $M_Q$ as follows:
let 
\begin{align}
\msf{fix}^{t}(M_Q)(s, \vec{\imath})(\vec{s},\vec{o}) &\quad=\quad \mathcal{P}(\vec{S}^{t:t+k} {=} \vec{s} \land \vec{O}^{t:t+k} {=} \vec{o} \:\given\: \doo(S_t{:=}s),\, \doo(\vec{I}^{t:t+k-1} {:=} \vec{\imath}))
\end{align}
where $\vec{\imath}$, $\vec{o}$, and $\vec{s}$ are of lengths $k \geq 0$, $k+1$, and $k+1$, respectively.
The time $t \geq 0$ represents the time at which $M_Q$ starts operating.  
Note that when $k = 0$, $\vec{I}^{t:t+k-1}$ is $\vec{I}^{t:t-1}$, which is an empty sequence, as is $\vec{\imath}$.  Thus, $\doo(\vec{I}^{t:t+k-1} {:=} \vec{\imath})$ is vacuously true when $k = 0$.  On the other hand, $\vec{S}^{t:t+k}$ is $\vec{S}^{t:t} = [\vec{S}_t]$, a sequence with a single component, which is compared to the single component of $\vec{s} = [s_1]$.

\begin{lemma}\label{lem:tc.fix-out}
For all $Q$, $s$, and $t \geq 0$, and $\vec{\imath}$, $\vec{o}$, and $\vec{s}$ of lengths $k \geq 0$, $k+1$, and $k+1$, respectively,
\[ \msf{fix}^{t}(M_Q)(s, \vec{\imath})(\vec{s}, \vec{o}) = Q(s,\vec{\imath})(\vec{o},\vec{s}) \]
\end{lemma}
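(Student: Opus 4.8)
The plan is to reduce both sides to the same explicit product of $\degen$ and $\tau$ factors. For the right-hand side there is nothing to do: Lemma~\ref{lem:q-out} already gives
\[ Q(s,\vec{\imath})(\vec{o},\vec{s}) = \degen(s)(\vec{s}[1]) * \degen(\sigma(\vec{s}[1]))(\vec{o}[1]) * \prod_{\kappa = 1}^{k} \tau(\vec{s}[\kappa], \vec{\imath}[\kappa])(\vec{s}[\kappa+1]) * \degen(\sigma(\vec{s}[\kappa+1]))(\vec{o}[\kappa+1]). \]
So the work is to show that $\msf{fix}^{t}(M_Q)(s,\vec{\imath})(\vec{s},\vec{o})$, which by definition is the $\doo$-conditioned probability of $\vec{S}^{t:t+k}{=}\vec{s} \land \vec{O}^{t:t+k}{=}\vec{o}$, equals this product. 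First I would expand that probability using truncated factorization together with the marginalization identity~\eqref{ln:sum-over-nus}: it becomes a sum, over every variable of $M_Q$ outside the observed window, of a product of the per-variable conditionals $\mathcal{P}(V{=}v \given \pa(V){=}\cdots)$ dictated by the tables defining $M_Q$, but with the factors for the intervened variables $S_t$ and $\msf{HI}_j, \msf{LI}_j$ (for $t \le j \le t+k-1$) deleted and those variables held at $s$ and at the components of $\vec{\imath}$ respectively.

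Next I would carry out the decoupling argument. The intervention $\doo(S_t{:=}s)$ cuts the state chain, so nothing in the observed window depends on any variable at a time before $t$; and after deleting the intervened input factors, the only remaining variables constrained by the event and having all parents inside the window are $S_{t+1},\ldots,S_{t+k}$ and the output variables $\msf{HO}_{t+\kappa},\msf{LO}_{t+\kappa}$. Every other variable of $M_Q$ --- the exogenous user variables $\msf{HU}_j,\msf{LU}_j$, the inputs outside $[t,t+k-1]$, and all variables at times past $t+k$ --- occurs only in factors that get summed over a full range, and so peeling those sums from the topological leaves inward contributes only $1$ at each step, since each $\mathcal{P}(V{=}v \given \pa(V){=}\cdots)$ is a distribution in $v$. (This is also the reason the statement does not mention the unspecified distribution $\mathcal{P}$ on $M_Q$'s exogenous variables: the $\doo$ on the inputs severs the users from everything observed.)

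What survives is exactly: $\mathcal{P}(S_t{=}\vec{s}[1]\given\doo(S_t{:=}s)) = \degen(s)(\vec{s}[1])$; for each $\kappa$ a state-transition factor equal to $\tau(\vec{s}[\kappa],\vec{\imath}[\kappa])(\vec{s}[\kappa+1])$ by the defining equation of $S_{t+\kappa}$ with its input parents fixed to $\vec{\imath}[\kappa]$; and for each relevant time a pair of output factors $\mathcal{P}(\msf{HO}{=}\cdots\given S{=}\vec{s}[\kappa])\,\mathcal{P}(\msf{LO}{=}\cdots\given S{=}\vec{s}[\kappa])$. Here I would use that, since there is exactly one high and one low output channel, an output tuple is the pairing of its $H$- and $L$-filters, so this pair equals $\degen(\filter{\sigma(\vec{s}[\kappa])}{H})(\cdots) * \degen(\filter{\sigma(\vec{s}[\kappa])}{L})(\cdots) = \degen(\sigma(\vec{s}[\kappa]))(\vec{o}[\kappa])$. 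Multiplying the surviving factors yields the displayed product and closes the argument.

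The step I expect to be the real obstacle is the decoupling bookkeeping: making rigorous, for the infinite SEM $M_Q$, that the marginalization over all out-of-window variables collapses to $1$ without any circularity, while simultaneously keeping straight the index shift between $M_Q$'s absolute times $t,\ldots,t+k$ and the $1,\ldots,k$ indexing of $\vec{\imath}$, $\vec{o}$, $\vec{s}$. A cleaner alternative, which I would fall back on if the direct computation gets unwieldy, is to induct on $k$, mirroring the recursive definitions of $Q$ and of $\msf{fix}^{t}$ and peeling one time step per inductive step; the inductive step still reduces to the same two observations --- the transition factor evaluates via $\tau$, and the $\msf{HO}/\msf{LO}$ pair collapses to $\degen(\sigma(\cdot))$.
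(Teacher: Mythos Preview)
Your proposal is correct, and the two arguments converge on the same product of $\degen$ and $\tau$ terms, but the route you take differs from the paper's.

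The paper does \emph{not} expand via the global truncated factorization and then marginalize out-of-window variables. Instead it applies the chain rule of probability directly to the finitely many observed variables $\vec{S}^{t:t+k}$ and $\vec{O}^{t:t+k}$, obtaining factors of the form $\mathcal{P}(\vec{S}[t+\kappa]{=}\cdots \given \text{earlier }S,O,\doo\text{'s})$ and $\mathcal{P}(\vec{O}[t+\kappa]{=}\cdots \given \text{earlier }S,O,\doo\text{'s})$. It then observes that each such conditional already includes every parent of the variable in question (states depend only on the prior state and the do'd inputs; outputs depend only on the current state), invokes Pearl's Property~1 to turn the ordinary conditionals into $\doo$'s, and invokes Pearl's Property~2 to drop all $\doo$ terms that are not parents. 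After that, evaluating each factor from the defining tables of $M_Q$ and applying Lemma~\ref{lem:q-out} finishes. No marginalization over unobserved variables ever appears.

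What each buys: the paper's route sidesteps precisely the obstacle you flagged --- the infinite SEM bookkeeping --- by never unfolding the full factorization; Pearl's Properties act as an abstraction layer that makes the out-of-window variables invisible. Your route is more elementary in that it does not rely on Pearl's Properties as black boxes, and your remark that the $\msf{HO}/\msf{LO}$ pair must be recombined into a single $\degen(\sigma(\cdot))(\vec{o}[\kappa])$ factor is a detail the paper's proof silently absorbs. Your fallback induction on $k$ would also work and is perhaps the cleanest way to avoid both the infinite-marginalization issue and the slightly informal ``all parents are present'' step in the paper's application of Property~1.
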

\nopagebreak
\begin{proof}
\begin{align}
&\msf{fix}^{t}(M_Q)(s, \vec{\imath})(\vec{s}, \vec{o}) \\
&= \mathcal{P}(\vec{S}^{t:t+k} {=} \vec{s} \,\land\, \vec{O}^{t:t+k} {=} \vec{o} \:\given\: \doo(S_t{:=}s),\, \doo(\vec{I}^{t:t+k-1} {:=} \vec{\imath}))\\ %
&= \mathcal{P}(\bigwedge_{\kappa = 0}^{k} \vec{S}[t+\kappa] {=} \vec{s}[1+\kappa] \,\land\, \vec{O}[t+\kappa] {=} \vec{o}[1+\kappa] \:\given\: \doo(S_t{:=}s),\, \doo(\vec{I}^{t:t+k-1} {:=} \vec{\imath})) \label{ln:tc.seq-land}\\
&= \prod_{\kappa = 0}^{k} \mathcal{P}(\vec{S}[t+\kappa]{=}\vec{s}[1+\kappa] \,\land\, \vec{O}[t+\kappa]{=}\vec{o}[1+\kappa] \:\given\: \vec{S}^{t:t+\kappa-1}{=}\vec{s}^{1:\kappa},\, \vec{O}^{t:t+\kappa-1}{=}\vec{o}^{1:\kappa},\, \doo(S_t{:=}s),\, \doo(\vec{I}^{t:t+k-1}{:=}\vec{\imath})) \label{ln:tc.chain1}\\
&=  \prod_{\kappa = 0}^{k} \begin{array}{l}
\mathcal{P}(\vec{S}[t+\kappa]{=}\vec{s}[1+\kappa] \:\given\: \vec{S}^{t:t+\kappa-1}{=}\vec{s}^{1:\kappa}, \vec{O}^{t:t+\kappa-1}{=}\vec{o}^{1:\kappa}, \doo(S_t{:=}s), \doo(\vec{I}^{t:t+k-1}{:=}\vec{\imath}))\\
* \mathcal{P}(\vec{O}[t+\kappa]{=}\vec{o}[1+\kappa] \:\given\: \vec{S}[t+\kappa]{=}\vec{s}[1+\kappa], \vec{S}^{t:t+\kappa-1}{=}\vec{s}^{1:\kappa}, \vec{O}^{t:t+\kappa-1}{=}\vec{o}^{1:\kappa}, \doo(S_t{:=}s), \doo(\vec{I}^{t:t+k-1}{:=}\vec{\imath}))
\end{array}
\label{ln:tc.chain2}%
\end{align}
where
\eqref{ln:tc.seq-land} expands $\vec{S}^{t:t+k} {=} \vec{s} \,\land\, \vec{O}^{t:t+k} {=} \vec{o}$ into $\bigwedge_{\kappa = 0}^{k} \vec{S}[t+\kappa] {=} \vec{s}[1+\kappa] \,\land\, \vec{O}[t+\kappa] {=} \vec{o}[1+\kappa]$.  Since $\kappa$ ranges from $0$ to $k$ while we index the sequences $\vec{s}$ and $\vec{o}$ from $1$ to $k+1$, we add $1$ to $\kappa$ while indexing into $\vec{s}$ and $\vec{o}$.
Both \eqref{ln:tc.chain1} and \eqref{ln:tc.chain2} follow from the chain rule of probability.  Note that 
when $\kappa$ is $0$, the term
$\vec{S}^{t:t+\kappa-1}{=}\vec{s}^{1:\kappa}$ becomes $\vec{S}^{t:t-1}{=}\vec{s}^{1:0}$, which compares the empty sequence to the empty sequence.  This comparison is vacuously true as it should be since no state precedes the first state $\vec{s}[1+\kappa] = \vec{s}[1+0] = \vec{s}[1]$ and, thus, the probability of this state should not be conditioned on a preceding state.  The same holds for the output $\vec{o}[1]$.

In \eqref{ln:tc.chain2}, $\mathcal{P}(\vec{S}[t+\kappa]{=}\vec{s}[1+\kappa] \:\given\: \vec{S}^{t:t+\kappa-1}{=}\vec{s}^{1:\kappa}, \vec{O}^{t:t+\kappa-1}{=}\vec{o}^{1:\kappa}, \doo(S_t{:=}s), \doo(\vec{I}^{t:t+k-1}{:=}\vec{\imath}))$ is looking at the probability of $\vec{S}[t+\kappa]{=}\vec{s}[1+\kappa]$ conditional upon every term on which $\vec{S}[t+\kappa]$ depends in the model $M_Q$ (i.e., all of the variables in $\pa(S_{t+\kappa})$).  The same holds for the outputs $\vec{O}[t+\kappa]$.
Thus, Pearl's Property~1~\cite[pg 24]{pearl00book} applies to \eqref{ln:tc.chain2} and justifies \eqref{ln:tc.prop1} in the following:%
\begin{align}
&\msf{fix}^{t}(M_Q)(s, \vec{\imath})(\vec{s}, \vec{o}) \\
&=  \prod_{\kappa = 0}^{k} \begin{array}{l}
\mathcal{P}(\vec{S}[t+\kappa]{:=}\vec{s}[1+\kappa] \:\given\: \doo(\vec{S}^{t:t+\kappa-1}{:=}\vec{s}^{1:\kappa}), \doo(\vec{O}^{t:t+\kappa-1}{:=}\vec{o}^{1:\kappa}), \doo(S_t{:=}s), \doo(\vec{I}^{t:t+k-1}{:=}\vec{\imath}))\\
* \mathcal{P}(\vec{O}[t+\kappa]{=}\vec{o}[1+\kappa] \:\given\: \doo(\vec{S}[t+\kappa]{:=}\vec{s}[1+\kappa]), \doo(\vec{S}^{t:t+\kappa-1}{:=}\vec{s}^{1:\kappa}), \doo(\vec{O}^{t:t+\kappa-1}{:=}\vec{o}^{1:\kappa}), \doo(S_t{:=}s), \doo(\vec{I}^{t:t+k-1}{:=}\vec{\imath}))
\end{array}
\label{ln:tc.prop1}\\[1ex]
&=  
\begin{array}{l}
\mathcal{P}(\vec{S}[t+0]{:=}\vec{s}[0+1] \:\given\: \doo(\vec{S}^{t:t+0-1}{:=}\vec{s}^{1:0}), \doo(\vec{O}^{t:t+0-1}{:=}\vec{o}^{1:0}), \doo(S_t{:=}s), \doo(\vec{I}^{t:t+k-1}{:=}\vec{\imath}))\\
* \mathcal{P}(\vec{O}[t+0]{=}\vec{o}[0+1] \:\given\: \doo(\vec{S}[t+0]{:=}\vec{s}[0+1]), \doo(\vec{S}^{t:t+0-1}{:=}\vec{s}^{1:0}), \doo(\vec{O}^{t:t+0-1}{:=}\vec{o}^{1:0}), \doo(S_t{:=}s), \doo(\vec{I}^{t:t+k-1}{:=}\vec{\imath}))\\
* \prod\limits_{\kappa = 1}^{k} \begin{array}{l}
\mathcal{P}(\vec{S}[t+\kappa]{:=}\vec{s}[1+\kappa] \:\given\: \doo(\vec{S}^{t:t+\kappa-1}{:=}\vec{s}^{1:\kappa}), \doo(\vec{O}^{t:t+\kappa-1}{:=}\vec{o}^{1:\kappa}), \doo(S_t{:=}s), \doo(\vec{I}^{t:t+k-1}{:=}\vec{\imath}))\\ 
* \mathcal{P}(\vec{O}[t+\kappa]{=}\vec{o}[1+\kappa] \:\given\: \doo(\vec{S}[t+\kappa]{:=}\vec{s}[1+\kappa]), \doo(\vec{S}^{t:t+\kappa-1}{:=}\vec{s}^{1:\kappa}), \doo(\vec{O}^{t:t+\kappa-1}{:=}\vec{o}^{1:\kappa}), \doo(S_t{:=}s), \doo(\vec{I}^{t:t+k-1}{:=}\vec{\imath}))
\end{array}
\end{array}
\label{ln:tc.special}\\[1ex]
&=  
\begin{array}{l}
\mathcal{P}(\vec{S}[t]{:=}\vec{s}[1] \:\given\: \doo(S_t{:=}s), \doo(\vec{I}^{t:t+k-1}{:=}\vec{\imath}))
 * \mathcal{P}(\vec{O}[t]{=}\vec{o}[1] \:\given\: \doo(\vec{S}[t]{:=}\vec{s}[1]), \doo(S_t{:=}s), \doo(\vec{I}^{t:t+k-1}{:=}\vec{\imath}))\\
* \prod\limits_{\kappa = 1}^{k} \begin{array}{l}
  \mathcal{P}(\vec{S}[t+\kappa]{:=}\vec{s}[1+\kappa] \:\given\: \doo(\vec{S}^{t:t+\kappa-1}{:=}\vec{s}^{1:\kappa}), \doo(\vec{O}^{t:t+\kappa-1}{:=}\vec{o}^{1:\kappa}), \doo(S_t{:=}s), \doo(\vec{I}^{t:t+k-1}{:=}\vec{\imath}))\\
* \mathcal{P}(\vec{O}[t+\kappa]{=}\vec{o}[1+\kappa] \:\given\: \doo(\vec{S}[t+\kappa]{:=}\vec{s}[1+\kappa]), \doo(\vec{S}^{t:t+\kappa-1}{:=}\vec{s}^{1:\kappa}), \doo(\vec{O}^{t:t+\kappa-1}{:=}\vec{o}^{1:\kappa}), \doo(S_t{:=}s), \doo(\vec{I}^{t:t+k-1}{:=}\vec{\imath}))
\end{array}
\end{array}
\label{ln:tc.simp}\\[1ex]
&=  
\begin{array}{l}
\mathcal{P}(\vec{S}[t]{:=}\vec{s}[1] \:\given\: \doo(S_t{:=}s), \doo(\vec{I}^{t:t+k-1}{:=}\vec{\imath}))
* \mathcal{P}(\vec{O}[t]{=}\vec{o}[1] \:\given\: \doo(\vec{S}[t]{:=}\vec{s}[1]))\\
* \prod\limits_{\kappa = 1}^{k} \begin{array}{l}
\mathcal{P}(\vec{S}[t+\kappa]{:=}\vec{s}[1+\kappa] \:\given\: \doo(\vec{S}[t+\kappa-1]{:=}\vec{s}[\kappa]), \doo(\vec{I}[t+\kappa-1]{:=}\vec{\imath}[\kappa]))\\
* \mathcal{P}(\vec{O}[t+\kappa]{=}\vec{o}[1+\kappa] \:\given\: \doo(\vec{S}[t+\kappa]{:=}\vec{s}[1+\kappa]))  
\end{array}
\end{array}
\label{ln:tc.prop2}\\[1ex]
&= 
\degen(s)(\vec{s}[1]) * \degen(\sigma(\vec{s}[1]))(\vec{o}[1]) *
\prod_{\kappa = 1}^{k} \tau(\vec{s}[\kappa], \vec{\imath}[\kappa])(\vec{s}[1+\kappa]) * \degen(\sigma(\vec{s}[1+\kappa]))(\vec{o}[1+\kappa])\label{ln:tc.model-cons}\\
&= Q(s,\vec{\imath})(\vec{o},\vec{s}) \label{ln:tc.q-out}
\end{align}
where
\eqref{ln:tc.special} simply pulls out the case where $\kappa = 0$;
\eqref{ln:tc.simp} just removes terms that are vacuously true;
\eqref{ln:tc.prop2} follows from Pearl's Property~2~\cite[pg 24]{pearl00book}, which removes \emph{do} terms that are not parents in $M_Q$ of the term whose probability we are computing;
\eqref{ln:tc.model-cons} comes from how we construct the model $M_Q$;
and \eqref{ln:tc.q-out} comes from Lemma~\ref{lem:q-out}.
\end{proof}

\begin{lemma}\label{lem:tc.eq-filtered}
For all $Q$, $\vec{\imath}$, and $\vec{\msf{lo}}$ of lengths $t$ and $t+1$, respectively,
$\mathcal{P}(\vec{\msf{LO}}^{1:t+1}{=}\vec{\msf{lo}} \:\given\: \doo(\vec{I}^{1:t}{:=}\vec{\imath}))
= \filter{Q(\vec{\imath})}{L}(\vec{\msf{lo}})$.
\end{lemma}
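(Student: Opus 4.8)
The plan is to chain three ingredients: the definitions of $\filter{Q(\vec\imath)}{L}$ and of the output distribution $Q(\vec\imath)$ (which expands the latter as a sum over state sequences of $Q(s_0,\vec\imath)$); Lemma~\ref{lem:tc.fix-out}, which already identifies $Q(s_0,\vec\imath)(\vec o,\vec s)$ with the intervened SEM probability $\msf{fix}(M_Q)(s_0,\vec\imath)(\vec s,\vec o)$; and a marginalization step that collapses the sum over $\vec s$ and over the high-output coordinates of $\vec o$ into a marginal of $\mathcal P^{M_Q}$, after which the extra intervention that $\msf{fix}$ places on the initial state turns out to be inert.

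First I would unfold the right-hand side. By the definition of purging a distribution, $\filter{Q(\vec\imath)}{L}(\vec{\msf{lo}}) = \sum_{\vec o\,:\,\filter{\vec o}{L}=\vec{\msf{lo}}} Q(\vec\imath)(\vec o)$, and by the definition of $Q(\vec\imath)$ this equals $\sum_{\vec o\,:\,\filter{\vec o}{L}=\vec{\msf{lo}}}\sum_{\vec s} Q(s_0,\vec\imath)(\vec o,\vec s)$. Instantiating Lemma~\ref{lem:tc.fix-out} with $s:=s_0$ and the starting time for which the output indices run from $1$ to $t{+}1$, every summand becomes $\mathcal P(\vec S{=}\vec s \,\land\, \vec O{=}\vec o \,\mid\, \doo(S_0{:=}s_0),\doo(\vec I^{1:t}{:=}\vec\imath))$ over the matching ranges. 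Now I would push both outer sums inside: summing over all state sequences $\vec s$ marginalizes the state variables away, and summing over all $\vec o$ with $\filter{\vec o}{L}=\vec{\msf{lo}}$ marginalizes the $\msf{HO}$-variables away while pinning the $\msf{LO}$-variables to $\vec{\msf{lo}}$. Here I would use that each $\msf{LO}_t$ already ranges over purely low outputs and each $\msf{HO}_t$ over purely high outputs, so the shorthand $\vec O{=}\vec o$ (i.e.\ $\filter{\vec{\msf{HO}}}{H}=\filter{\vec o}{H}\,\land\,\filter{\vec{\msf{LO}}}{L}=\filter{\vec o}{L}$) splits cleanly and the surviving constraint is simply $\vec{\msf{LO}}{=}\vec{\msf{lo}}$. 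This leaves $\mathcal P(\vec{\msf{LO}}^{1:t+1}{=}\vec{\msf{lo}} \,\mid\, \doo(S_0{:=}s_0),\doo(\vec I^{1:t}{:=}\vec\imath))$.

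Finally I would discard the intervention $\doo(S_0{:=}s_0)$: since the initial-state variable $S_0$ has no parents and its structural equation is the point mass $\degen(s_0)$, the sub-model $M_Q[S_0{:=}s_0]$ is literally the same SEM as $M_Q$ (equivalently, in the truncated factorization the $S_0$-factor it removes and the $\mathcal P(S_0{=}s_0)$ it would restore are both $1$), so conditioning on this intervention changes nothing; after matching index ranges with the statement's convention this is exactly $\filter{Q(\vec\imath)}{L}(\vec{\msf{lo}})=\mathcal P(\vec{\msf{LO}}^{1:t+1}{=}\vec{\msf{lo}}\mid\doo(\vec I^{1:t}{:=}\vec\imath))$. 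I expect the main obstacle to be pure bookkeeping: interchanging the finite sums over $\vec s$ and $\vec o$ and recognizing the result as the intended marginal of $\mathcal P^{M_Q}$, keeping the $\filter{\cdot}{H}$/$\filter{\cdot}{L}$ translation between the machine's paired outputs and the SEM's separate $\msf{HO}$/$\msf{LO}$ variables straight, and aligning the time indices of $\msf{fix}^{t_0}$ (which run $t_0$ through $t_0{+}k$) with the lemma's indexing. The only genuinely substantive point is that fixing the initial state is inert, which rests on that variable being parentless with a degenerate law --- essentially the base case already handled inside the proof of Lemma~\ref{lem:tc.fix-out}.
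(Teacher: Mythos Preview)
Your proposal is correct and follows essentially the same approach as the paper: expand via the definitions of $\filter{Q(\vec\imath)}{L}$ and $Q(\vec\imath)$, invoke Lemma~\ref{lem:tc.fix-out} at $s_0$, marginalize out the states and the high outputs, and neutralize the intervention on $S_0$. The paper runs the chain in the opposite direction (starting from the SEM side) and handles the $S_0$ step by first conditioning on the almost-sure event $S_0{=}s_0$ and then invoking Pearl's Property~1, whereas you argue directly that $M_Q[S_0{:=}s_0]=M_Q$; these are equivalent justifications of the same step.
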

\nopagebreak
\begin{proof}
\begin{align}
\mathcal{P}(\vec{\msf{LO}}^{1:t+1}{=}\vec{\msf{lo}} \:\given\: \doo(\vec{I}^{1:t}{:=}\vec{\imath}))
&= \sum_{\vec{o} : \filter{\vec{o}}{L}{=}\vec{\msf{lo}}} \mathcal{P}(\vec{O}^{1:t+1}{=}\vec{o} \:\given\: \doo(\vec{I}^{1:t}{:=}\vec{\imath}))\label{ln:tc.fil.outputs-mutex}\\
&= \sum_{\vec{s} \in \mathcal{S}^{t+1}} \sum_{\vec{o} : \filter{\vec{o}}{L}{=}\vec{\msf{lo}}} \mathcal{P}(\vec{S}^{1:t+1} = \vec{s} \land \vec{O}^{1:t+1}{=}\vec{o} \:\given\: \doo(\vec{I}^{1:t}{:=}\vec{\imath}))\label{ln:tc.fil.states-mutex}\\
&= \sum_{\vec{s} \in \mathcal{S}^{t+1}} \sum_{\vec{o} : \filter{\vec{o}}{L}{=}\vec{\msf{lo}}} \mathcal{P}(\vec{S}^{1:t+1} = \vec{s} \land \vec{O}^{1:t+1}{=}\vec{o} \:\given\: S_0{=}s_0, \doo(\vec{I}^{1:t}{:=}\vec{\imath}))\label{ln:tc.fil.start}\\
&= \sum_{\vec{s} \in \mathcal{S}^{t+1}} \sum_{\vec{o} : \filter{\vec{o}}{L}{=}\vec{\msf{lo}}} \mathcal{P}(\vec{S}^{1:t+1} = \vec{s} \land \vec{O}^{1:t+1}{=}\vec{o} \:\given\: \doo(S_0{:=}s_0), \doo(\vec{I}^{1:t}{:=}\vec{\imath}))\label{ln:tc.fil.prop1}\\
&= \sum_{\vec{s} \in \mathcal{S}^{t+1}} \sum_{\vec{o} : \filter{\vec{o}}{L}{=}\vec{\msf{lo}}} Q(s_0,\vec{\imath})(\vec{o},\vec{s})\label{ln:tc.fil.lemma}\\
&= \sum_{\vec{o} : \filter{\vec{o}}{L}{=}\vec{\msf{lo}}} Q(\vec{\imath})(\vec{o})\\
&= \filter{Q(\vec{\imath})}{L}(\vec{\msf{lo}})
\end{align}
where 
\eqref{ln:tc.fil.outputs-mutex} and \eqref{ln:tc.fil.states-mutex} hold since output sequences and state sequences are mutually exclusive,
\eqref{ln:tc.fil.start} follows since $S_0$ is known to be $s_0$,
\eqref{ln:tc.fil.prop1} follows from Pearl's Property~1~\cite[pg 24]{pearl00book},
and \eqref{ln:tc.fil.lemma} follows from Lemma~\ref{lem:tc.fix-out}.
\end{proof}

\subsection{Proof of Theorem~3}

\begin{theorem}\label{thm:tc.inter-cause}
$Q$ has probabilistic interference iff there exists low inputs $\ell$ of length $t$ such that $\vec{V}_{\mathsf{hi}}^{t}$ has an effect on $\vec{V}_{\msf{lo}}^{t}$ given $V_{\mathsf{li}}^{t} := \ell$ in $M_Q$.  
\end{theorem}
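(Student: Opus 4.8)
The statement is a biconditional relating two definitional notions: \emph{probabilistic interference} of the machine $Q$ (the negation of probabilistic noninterference) and \emph{effect} of $\vec{V}_{\msf{hi}}^{t}$ on $\vec{V}_{\msf{lo}}^{t}$ in the SEM $M_Q$. The plan is to unfold both definitions and bridge them with Lemma~\ref{lem:tc.eq-filtered}, which already identifies the interventional distribution of the low outputs in $M_Q$ with the purged output distribution $\filter{Q(\vec{\imath})}{L}$ of the machine. In effect, all the genuine content — that the interventional semantics of $M_Q$ reproduces the purged I/O behavior of $Q$ — has been absorbed into Lemmas~\ref{lem:q-out}, \ref{lem:tc.fix-out}, and \ref{lem:tc.eq-filtered}, so what remains is an exercise in matching definitions.

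\textbf{Step 1: rewrite both sides as existential statements.} By definition, $Q$ has probabilistic interference iff there exist input sequences $\vec{\imath}_1,\vec{\imath}_2$ of a common length $t$ with $\filter{\vec{\imath}_1}{L}=\filter{\vec{\imath}_2}{L}$ but $\filter{Q(\vec{\imath}_1)}{L}\neq\filter{Q(\vec{\imath}_2)}{L}$; write $\ell$ for the shared low part and $h_1,h_2$ for the high parts, so each $\vec{\imath}_b$ is recovered by interleaving $h_b$ with $\ell$. On the other side, by the definition of \emph{effect}, $\vec{V}_{\msf{hi}}^{t}$ has an effect on $\vec{V}_{\msf{lo}}^{t}$ given $\vec{V}_{\msf{li}}^{t}{:=}\ell$ iff there exist high sequences $h_1,h_2$ such that the distribution of $\vec{V}_{\msf{lo}}^{t}$ in $M_Q[\vec{V}_{\msf{hi}}^{t}{:=}h_1][\vec{V}_{\msf{li}}^{t}{:=}\ell]$ differs from the one in $M_Q[\vec{V}_{\msf{hi}}^{t}{:=}h_2][\vec{V}_{\msf{li}}^{t}{:=}\ell]$.

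\textbf{Step 2: the bridge.} I would observe that intervening jointly on all high-input variables $\vec{V}_{\msf{hi}}^{t}$ with values $h$ and on all low-input variables $\vec{V}_{\msf{li}}^{t}$ with values $\ell$ is, by the definition of iterated $\doo$ on a vector of variables together with the shorthand $\doo(\vec{I}^{1:t}{:=}\vec{\imath})$ introduced before Lemma~\ref{lem:tc.fix-out}, exactly $\doo(\vec{I}^{1:t}{:=}\vec{\imath})$ where $\vec{\imath}$ is the interleaving of $h$ and $\ell$. The high \emph{outputs} and the states remain endogenous under this intervention, which is harmless because the distribution of $\vec{V}_{\msf{lo}}^{t}$ marginalizes them out — precisely what appears on the left-hand side of Lemma~\ref{lem:tc.eq-filtered}. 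Hence Lemma~\ref{lem:tc.eq-filtered} yields
\[ \mathcal{P}\bigl(\vec{V}_{\msf{lo}}^{t}{=}\vec{\msf{lo}} \,\bigm|\, \doo(\vec{V}_{\msf{hi}}^{t}{:=}h),\,\doo(\vec{V}_{\msf{li}}^{t}{:=}\ell)\bigr)=\filter{Q(\vec{\imath})}{L}(\vec{\msf{lo}}), \]
i.e.\ the interventional low-output distribution in $M_Q$ coincides with $\filter{Q(\vec{\imath})}{L}$.

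\textbf{Step 3: conclude both directions.} For ($\Rightarrow$): from interference pick $\vec{\imath}_1,\vec{\imath}_2$ as in Step~1; take $x_1{=}h_1$, $x_2{=}h_2$ in the definition of effect; by the bridge the two interventional low-output distributions equal $\filter{Q(\vec{\imath}_1)}{L}$ and $\filter{Q(\vec{\imath}_2)}{L}$, which are unequal, so $\vec{V}_{\msf{hi}}^{t}$ has an effect on $\vec{V}_{\msf{lo}}^{t}$ given $\vec{V}_{\msf{li}}^{t}{:=}\ell$. For ($\Leftarrow$): from an effect pick $h_1,h_2$, form $\vec{\imath}_1,\vec{\imath}_2$ by interleaving each with $\ell$; they agree on their low parts, and by the bridge $\filter{Q(\vec{\imath}_1)}{L}$, $\filter{Q(\vec{\imath}_2)}{L}$ equal the two unequal interventional distributions, hence differ, so $Q$ has probabilistic interference. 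The only real friction I anticipate is bookkeeping: reconciling the length/indexing conventions of the theorem statement ($\ell$ of length $t$, $\vec{V}_{\msf{lo}}^{t}$) with those of Lemma~\ref{lem:tc.eq-filtered} ($\vec{\imath}$ of length $t$, $\vec{\msf{LO}}^{1:t+1}$), and verifying that $\doo$ applied to a vector of endogenous variables unfolds to the componentwise $\doo(\vec{I}^{1:t}{:=}\vec{\imath})$ shorthand used in the appendix. Neither is conceptually hard, but both must be handled carefully so the appeal to Lemma~\ref{lem:tc.eq-filtered} is literally valid.
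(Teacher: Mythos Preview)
Your proposal is correct and follows essentially the same route as the paper: unfold the two definitions into matching existential statements, bridge them via Lemma~\ref{lem:tc.eq-filtered}, and read off both directions. The indexing friction you flag is exactly what the paper resolves with one extra intermediate step---it invokes Pearl's properties to drop the $\doo$ on the time-$t$ input (since $\msf{LO}_t$ depends only on $S_t$, hence only on inputs up through time $t{-}1$), after which Lemma~\ref{lem:tc.eq-filtered} applies verbatim with a shifted length parameter.
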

\begin{proof}
In the notation of this appendix, $\vec{V}_{\mathsf{hi}}^{t}$ is $\vec{\msf{HI}}^{1:t}$ and $\vec{V}_{\msf{lo}}^{t}$ is $\vec{\msf{LO}}^{1:t}$.  For consistency, we write $\ell$ as $\vec{\msf{li}}$.

\newcommand{\cnum}[1]{\textbf{{#1}}}
Under this notation, we must show that 
\cnum{(1)} there exists input sequences $\vec{\imath}_1$ and $\vec{\imath}_2$ such that 
$\filter{\vec{\imath}_1}{L} = \filter{\vec{\imath}_2}{L}$ and $\filter{Q(\vec{\imath}_1)}{L} \neq \filter{Q(\vec{\imath}_2)}{L}$
if and only if
\cnum{(2)} there exists low inputs $\vec{\msf{li}}$ of length $t$ and high inputs $\vec{\msf{hi}}_1$ and $\vec{\msf{hi}}_2$ of length $t$
such that the probability distribution of $\vec{\msf{LO}}^{1:t}$ in $M_Q[\vec{\msf{HI}}^{1:t} := \vec{\msf{hi}}_1][\vec{\msf{LI}}^{1:t} := \vec{\msf{li}}]$
is not equal to its distribution in $M_Q[\vec{\msf{HI}}^{1:t} := \vec{\msf{hi}}_2][\vec{\msf{LI}}^{1:t} := \vec{\msf{li}}]$.

The distribution of $\vec{\msf{LO}}^{1:t}$ in $M_Q[\vec{\msf{HI}}^{1:t} := \vec{\msf{hi}}][\vec{\msf{LI}}^{1:t} := \vec{\msf{li}}]$ is given by 
$\mathcal{P}(\vec{\msf{LO}}^{1:t}{=}\vec{\msf{lo}} \:\given\: \doo(\vec{\msf{HI}}^{1:t}{:=}\vec{\msf{hi}}),\, \doo(\vec{\msf{LI}}^{1:t}{:=}\vec{\msf{li}}))$ for various values of $\vec{\msf{lo}}$.
Thus, Condition (2) is equivalent to 
\cnum{(3)} there exists $\vec{\imath}_1$ and $\vec{\imath}_2$ of length $t$ such that $\filter{\vec{\imath}_1}{L} = \filter{\vec{\imath}_2}{L}$ and there exists $\vec{\msf{lo}}$ such that
$\mathcal{P}(\vec{\msf{LO}}^{1:t}{=}\vec{\msf{lo}} \:\given\: \doo(\vec{I}^{1:t}{:=}\vec{\imath}_1))
\neq \mathcal{P}(\vec{\msf{LO}}^{1:t}{=}\vec{\msf{lo}} \:\given\: \doo(\vec{I}^{1:t}{:=}\vec{\imath}_2))$.

By Pearl's Property~1~\cite[pg 24]{pearl00book}, Condition (3) is equivalent to 
\cnum{(4)} there exists $\vec{\imath}_1$ and $\vec{\imath}_2$ of length $t$ such that $\filter{\vec{\imath}_1}{L} = \filter{\vec{\imath}_2}{L}$ and there exists $\vec{\msf{lo}}$ such that
$\mathcal{P}(\vec{\msf{LO}}^{1:t}{=}\vec{\msf{lo}} \:\given\: \doo(\vec{I}^{1:t-1}{:=}\vec{\imath}_1))
\neq \mathcal{P}(\vec{\msf{LO}}^{1:t}{=}\vec{\msf{lo}} \:\given\: \doo(\vec{I}^{1:t-1}{:=}\vec{\imath}_2))$ since the output at time $t$ does not depend upon the input at time $t$ in $M_Q$.

By Lemma~\ref{lem:tc.eq-filtered},
$\mathcal{P}(\vec{\msf{LO}}^{1:t}{=}\vec{\msf{lo}} \:\given\: \doo(\vec{I}^{1:t-1}{:=}\vec{\imath}))
= \filter{Q(\vec{\imath})}{L}(\vec{\msf{lo}})$.
Thus, Condition (4) is equivalent to 
\cnum{(5)} there exists input sequences $\vec{\imath}_1$, $\vec{\imath}_2$, and $\vec{\msf{lo}}$ such that 
$\filter{\vec{\imath}_1}{L} = \filter{\vec{\imath}_2}{L}$ and $\filter{Q(\vec{\imath}_1)}{L}(\vec{\msf{lo}}) \neq \filter{Q(\vec{\imath}_2)}{L}(\vec{\msf{lo}})$.

Condition (5) is equivalent to Condition (1).  Thus, Conditions (1) and (2) are equivalent as needed.
\end{proof}

\section{Proof of Corollary~\ref{cor:independ-noninter} Relating Independence Testing to Noninterference}

We start by showing that testing independence is the same as testing equality.
\begin{lemma}\label{lem:ind-eq}
For a set $C_1, \ldots, C_n$ of a mutually exclusive and exhaustive conditions,  for each $i$, there exists a $j$ such that $\Pr[A \given C_i] \neq \Pr[A \given C_j]$ iff there exists a $k$ such that $\Pr[A \given C_k] \neq \Pr[A]$.
\end{lemma}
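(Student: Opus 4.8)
The plan is to prove the biconditional by contraposition on each side, reducing everything to the elementary fact that a (nontrivial) convex combination of real numbers equals each of them precisely when the numbers are all equal. Before starting, I would normalize the setup: by the convention from the ``Background: Causality'' appendix that an undefined quantity times zero is zero, the law of total probability $\Pr[A] = \sum_{i} \Pr[A \given C_i]\,\Pr[C_i]$ holds for the mutually exclusive and exhaustive family $C_1,\dots,C_n$ regardless of whether some $\Pr[C_i]=0$. Conditions with $\Pr[C_i]=0$ contribute nothing to this sum and cannot serve as a well-defined witness to either of the two displayed inequalities, so I would discard them; this leaves a subfamily whose probabilities are all strictly positive and still sum to $1$, and neither side of the claimed equivalence is affected.

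Next I would rewrite the two sides in a more convenient logical form. Writing $p_i := \Pr[A \given C_i]$ and $p := \Pr[A]$, the left-hand side ``for each $i$ there exists $j$ with $p_i \neq p_j$'' is simply the assertion that the $p_i$ are \emph{not} all equal; hence its negation is ``$p_1 = p_2 = \dots = p_n$''. The negation of the right-hand side is ``$p_k = p$ for every $k$''. So it suffices to show that these two statements are equivalent.

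For the direction $(\forall k\; p_k = p) \Rightarrow (p_1 = \dots = p_n)$ there is nothing to do: if every $p_k$ equals the single number $p$, they are in particular equal to one another. For the converse, suppose $p_1 = \dots = p_n = c$ for some common value $c$. Then by the law of total probability (in the strictly-positive-probability form obtained above),
\[
\Pr[A] \;=\; \sum_{k} p_k\,\Pr[C_k] \;=\; c\sum_{k}\Pr[C_k] \;=\; c,
\]
so $p_k = c = \Pr[A]$ for every $k$, which is exactly $\neg$(RHS). Combining the two directions gives $\neg(\text{LHS}) \Leftrightarrow \neg(\text{RHS})$, hence $\text{LHS} \Leftrightarrow \text{RHS}$.

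The main (and essentially only) obstacle is not in the inequality manipulation, which is a one-line computation, but in the bookkeeping around conditions of probability zero: one must invoke the paper's convention to guarantee the law of total probability in the form used, and be careful that discarding null conditions genuinely preserves both sides of the statement. Once that point is dispatched, the proof is purely a rearrangement of quantifiers together with the observation that a weighted average with positive weights summing to $1$ of equal numbers is that number.
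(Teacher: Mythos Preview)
Your proof is correct and is essentially the same argument as the paper's: both hinge on the law of total probability to show that if all the $\Pr[A\mid C_i]$ coincide then they must equal $\Pr[A]$. The paper proves the two implications directly (one is the trivial observation that $p_i\neq p_j$ forces one of them to differ from $p$, the other is your computation rephrased as a proof by contradiction), whereas you package both directions via contraposition after rewriting the left-hand side as ``the $p_i$ are not all equal''; the mathematical content is identical. Your extra bookkeeping about conditions of probability zero is more careful than the paper, which silently relies on its zero-times-undefined convention.
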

\begin{proof}
If $\Pr[A \given C_i] \neq \Pr[A \given C_j]$, then either $\Pr[A \given C_i] \neq \Pr[A]$ or $\Pr[A \given C_j] \neq \Pr[A]$.

For the other direction, suppose $\Pr[A \given C_k] \neq \Pr[A]$. %
By the chain rule, $\Pr[A] = \sum_{h} \Pr[A \given C_h]\Pr[C_h]$.  
For showing a contradiction, suppose $\Pr[A \given C_k] = \Pr[A \given C_h]$ for all $h$.  Then, 
\begin{align}
\Pr[A] 
&= \sum_{h} \Pr[A \given C_h]\Pr[C_h]\\
&= \sum_{h} \Pr[A \given C_k]\Pr[C_h]\\
&= \Pr[A \given C_k] \sum_{h} \Pr[C_h]\\
&= \Pr[A \given C_k]
\end{align}
which is a contradiction.  Thus, there must exist some $h$ such that $\Pr[A \given C_k] \neq \Pr[A \given C_h]$.  
\end{proof}

The corollary states:
\begin{quote}
A test for independence of two random variables in science is a test of noninterference for information flow experiments. 
\end{quote}
\begin{proof}%
By Lemma~\ref{lem:ind-eq}, a test of independence is the same as a test of equality.  A test of equality is the same as a test for noninterference by Theorem~\ref{thm:tc.inter-cause}.
\end{proof}

\section{Proofs for Nonce Analysis}

\begin{proposition}
For all $\vec{y}$, 
$\pt(s_n, \vec{y}) = \mathsf{count}(\vec{y}, n) / |\vec{y}|$.
\end{proposition}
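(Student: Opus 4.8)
The plan is to unfold the definition of the permutation test from equation~\eqref{eqn:pt} and evaluate the indicator explicitly, exploiting that $s_n$ is $\{0,1\}$-valued. By the construction preceding the proposition, the nonce occurs in the first component of $\vec{y}$, so $s_n(\vec{y}) = 1$; I would record this as the standing hypothesis (without it the one-tailed test returns $1$ while $\cnt(\vec{y},n)/|\vec{y}| < 1$, so the identity need not hold literally). Then for every permutation $\pi$,
\[ I[s_n(\vec{y}) \le s_n(\pi(\vec{y}))] \;=\; I[\,1 \le s_n(\pi(\vec{y}))\,] \;=\; s_n(\pi(\vec{y})), \]
the last step because $s_n(\pi(\vec{y})) \in \{0,1\}$. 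Substituting into~\eqref{eqn:pt} gives $\pt(s_n,\vec{y}) = \frac{1}{|\vec{y}|!}\sum_{\pi \in \Pi(|\vec{y}|)} s_n(\pi(\vec{y}))$.

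Next I would compute this sum by grouping permutations according to which response ends up in the first slot. Since $s_n(\pi(\vec{y}))$ depends only on whether the first entry of $\pi(\vec{y})$ contains $n$, and since exactly $(|\vec{y}|-1)!$ permutations send a given index $k$ to the first position (the remaining $|\vec{y}|-1$ positions being free), I obtain
\[ \sum_{\pi \in \Pi(|\vec{y}|)} s_n(\pi(\vec{y})) \;=\; \sum_{k=1}^{|\vec{y}|} (|\vec{y}|-1)!\, I[\vec{y}[k]\text{ contains }n] \;=\; (|\vec{y}|-1)!\,\cnt(\vec{y},n), \]
the last equality being the definition of $\cnt$. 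Dividing by $|\vec{y}|! = |\vec{y}|\cdot(|\vec{y}|-1)!$ yields $\pt(s_n,\vec{y}) = \cnt(\vec{y},n)/|\vec{y}|$, as claimed.

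This is essentially a counting argument, so I do not anticipate a substantive obstacle; the only delicate points are (i) the reduction of the indicator, which genuinely needs $s_n$ to be two-valued and $s_n(\vec{y})=1$, and (ii) the permutation count of $(|\vec{y}|-1)!$ per index, which I would isolate as a one-line fact about $\Pi(|\vec{y}|)$ so that the bookkeeping is transparent.
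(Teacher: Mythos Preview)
Your proof is correct and follows essentially the same route as the paper: both reduce the indicator to $s_n(\pi(\vec{y}))$ using $s_n(\vec{y})=1$, then count the $(|\vec{y}|-1)!$ permutations sending each nonce-bearing index to the first slot. Your explicit flagging of the hypothesis $s_n(\vec{y})=1$ is a welcome clarification that the paper leaves implicit in its ``for all $\vec{y}$'' statement.
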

\begin{proof}
\begin{align}
\pt(s_n, \vec{y}) 
&= \frac{1}{|\vec{y}|!} \sum_{\pi \in \Pi(|\vec{y}|)} I[s_n(\vec{y}) \leq s_n(\pi(\vec{y}))]\\
&= \frac{1}{|\vec{y}|!} \sum_{\pi \in \Pi(|\vec{y}|)} 1 \leq s_n(\pi(\vec{y}))]\\
&= \frac{1}{|\vec{y}|!} \sum_{\pi \in \Pi(|\vec{y}|)} s_n(\pi(\vec{y}))\\
&= \frac{1}{|\vec{y}|!} \mathsf{count}(\vec{y}, n) * (|\vec{y}|-1)!\label{ln:num-times}\\
&= \mathsf{count}(\vec{y}, n) / |\vec{y}|
\end{align}
where \eqref{ln:num-times} is the number of permutations that puts a particular instance of the nonce into the first position times the number of instances of the nonce.
\end{proof}

\begin{proposition}
$\lim_{m \to \infty} \pt(s_n, \vec{y}_{m, \lceil p*m \rceil}) = p$ and $\lim_{m \to \infty} \pt(s_n, \vec{y}_{m, \lfloor p*m \rfloor}) = p$.
\end{proposition}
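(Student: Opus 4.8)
The plan is to reduce immediately to the preceding proposition and then carry out an elementary squeeze argument. By that proposition we already have the closed form $\pt(s_n, \vec{y}) = \cnt(\vec{y}, n) / |\vec{y}|$ for every response vector $\vec{y}$, so the entire task is to evaluate the right-hand side on the specific family $\vec{y}_{m,w}$ and then take limits.

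First I would compute the two ingredients of that formula for $\vec{y}_{m,w}$. By construction $|\vec{y}_{m,w}| = m$, and the nonce $n$ sits in the first component and in exactly $w$ of the remaining $m-1$ components, so $\cnt(\vec{y}_{m,w}, n) = 1 + w$. Hence $\pt(s_n, \vec{y}_{m,w}) = (1+w)/m$. This is the only content-bearing step; everything after it is arithmetic.

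Next I would instantiate $w = \lceil p m \rceil$ and $w = \lfloor p m \rfloor$ in turn and bound the resulting rationals using the standard estimates $pm \le \lceil pm \rceil < pm + 1$ and $pm - 1 < \lfloor pm \rfloor \le pm$. These give $p + \tfrac{1}{m} \le \pt(s_n, \vec{y}_{m,\lceil p m\rceil}) < p + \tfrac{2}{m}$ and $p < \pt(s_n, \vec{y}_{m,\lfloor p m\rfloor}) \le p + \tfrac{1}{m}$. Letting $m \to \infty$ and applying the squeeze theorem, both sequences converge to $p$, which is the claim.

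I do not expect any real obstacle here. The only places to be careful are the off-by-one bookkeeping in $\cnt(\vec{y}_{m,w}, n)$ (the first component always contributes one occurrence of the nonce on top of the $w$ chosen ones) and applying the floor/ceiling inequalities in the correct direction so that the two-sided bound used for the squeeze is genuinely valid; the $\tfrac{1}{m}$ and $\tfrac{2}{m}$ correction terms are harmless and vanish in the limit regardless of these details.
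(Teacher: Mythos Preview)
Your proposal is correct and follows essentially the same approach as the paper: invoke the preceding proposition to obtain $\pt(s_n,\vec{y}_{m,w}) = (1+w)/m$, then squeeze using the standard floor/ceiling inequalities. Your bounds are in fact slightly tighter than the paper's (which uses the looser $\lfloor pm\rfloor \le pm+1$ and $\lceil pm\rceil \ge pm-1$ on one side of each squeeze), but the structure is identical.
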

\begin{proof}

Since both
\begin{align}
\lim_{m \to \infty} \pt(s_n, \vec{y}_{m, \lfloor p*m \rfloor}) 
&= \lim_{m \to \infty} \frac{\mathsf{count}(\vec{y}_{m, \lfloor p*m \rfloor})}{|\vec{y}_{m, \lfloor p*m \rfloor}|}\\
&= \lim_{m \to \infty} \frac{1 + \lfloor p*m \rfloor}{m}\\
&= \lim_{m \to \infty} \frac{1 + \lfloor p*m \rfloor}{m}\\
&\geq \lim_{m \to \infty} \frac{1 + p*m - 1}{m}\\
&= \lim_{m \to \infty} p\\
&= p
\end{align}
and
\begin{align}
\lim_{m \to \infty} \pt(s_n, \vec{y}_{m, \lfloor p*m \rfloor}) 
&= \lim_{m \to \infty} \frac{\mathsf{count}(\vec{y}_{m, \lfloor p*m \rfloor})}{|\vec{y}_{m, \lfloor p*m \rfloor}|}\\
&= \lim_{m \to \infty} \frac{1 + \lfloor p*m \rfloor}{m}\\
&= \lim_{m \to \infty} \frac{1 + \lfloor p*m \rfloor}{m}\\
&\leq \lim_{m \to \infty} \frac{1 + p*m + 1}{m}\\
&= \lim_{m \to \infty} p\\
&= p
\end{align} 
it must be the case that $\lim_{m \to \infty} \pt(s_n, \vec{y}_{m, \lfloor p*m \rfloor}) = p$.

Since both
\begin{align}
\lim_{m \to \infty} \pt(s_n, \vec{y}_{m, \lceil p*m \rceil}) 
&= \lim_{m \to \infty} \frac{\mathsf{count}(\vec{y}_{m, \lceil p*m \rceil})}{|\vec{y}_{m, \lceil p*m \rceil}|}\\
&= \lim_{m \to \infty} \frac{1 + \lceil p*m \rceil}{m}\\
&= \lim_{m \to \infty} \frac{1 + \lceil p*m \rceil}{m}\\
&\leq \lim_{m \to \infty} \frac{1 + p*m + 1}{m}\\
&= \lim_{m \to \infty} p + \frac{2}{m}\\
&= p
\end{align}
and
\begin{align}
\lim_{m \to \infty} \pt(s_n, \vec{y}_{m, \lceil p*m \rceil}) 
&= \lim_{m \to \infty} \frac{\mathsf{count}(\vec{y}_{m, \lceil p*m \rceil})}{|\vec{y}_{m, \lceil p*m \rceil}|}\\
&= \lim_{m \to \infty} \frac{1 + \lceil p*m \rceil}{m}\\
&= \lim_{m \to \infty} \frac{1 + \lceil p*m \rceil}{m}\\
&\geq \lim_{m \to \infty} \frac{1 + p*m - 1}{m}\\
&= \lim_{m \to \infty} p\\
&= p
\end{align}
it must be the case that $\lim_{m \to \infty} \pt(s_n, \vec{y}_{m, \lceil p*m \rceil}) = p$
\end{proof}

\section{Details of Experiments}

\renewenvironment{tab}[1]
{\begin{center}
\let\oldarraystretch=\arraystretch
\renewcommand{\arraystretch}{1.05} %
\begin{tabular}{@{}#1@{}}
\toprule
}
{\bottomrule
\end{tabular}
\renewcommand{\arraystretch}{\oldarraystretch}
\end{center}}

All the experiments were carried out using Python bindings for Selenium WebDriver version 2.31 for the Firefox browser 25.0. Experiments were carried out with a script in Python 2.7 running on one of two identical 64-bit Ubuntu 12.04 VM with 24GB of RAM and 8 Intel Xeon E5540 CPUs. 
All network requests were made from behind a proxy server. 

When observing Google's behavior, we first ``opted-in'' to receive interest-based Google Ads across the web on every test instance by visiting the Google Ad Settings page at \url{https://www.google.com/settings/ads} and clicking the \texttt{Opt-in} link. This placed a Doubleclick cookie on the browser instance.

\subsection{Experiment~\ref{exp:cross-browsers}}

A primary browser instance would first establish an interest in cars by visiting car-related websites.

The car-related sites selected by collecting the top $10$ websites excluding images, news articles or ads returned by Google when queried with the search terms  ``BMW buy'', ``Audi purchase'', ``new cars'', ``local car dealers'', ``autos and vehicles'', ``cadillac prices'', and ``best limousines'' are shown in Table~\ref{tab:car}. Note that the results from ``local car dealers'' has only $9$ results because the page \url{local.yahoo.com/[redacted_location]/Automotive/Dealers/Used+Car+Dealers} 
took a long time to load and was manually removed from the training pages. 

\begin{table}
\caption{For Experiments~\ref{exp:cross-browsers} and~\ref{exp:pos}, the list of websites returned by Google upon searching with corresponding term. These websites were used for creating the profile of an auto enthusiast.}
\label{tab:car}
{\raggedright
\begin{description}
\item[``BMW buy''] \url{www.bmwusa.com/}, \url{www.autotrader.com/find/BMW-328i-cars-for-sale.jsp}, \url{www.autotrader.com/find/used-BMW-cars-for-sale.jsp}, \url{www.bmw.com/}, \url{www.bmwmotorcycles.com/}, \url{autos.aol.com/new-cars/}, \url{www.exchangeandmart.co.uk/used-cars-for-sale/bmw}, \url{en.wikipedia.org/wiki/BMW}, \url{www.cars.com/bmw/}, \url{autos.aol.com/bmw/}
\item[``Audi purchase''] \url{www.audiusa.com/inventory/european-delivery}, \url{www.audiusa.com/help/leasing}, \url{www.audiusa.com/myaudi/finance}, \url{www.audiusa.com/myaudi/offers-programs}, \url{www.audiusa.com/inventory/certified-pre-owned}, \url{www.audisupplier.com/}, \url{townhall-talk.edmunds.com/direct/view/.f1cc6d7}, \url{www.autotrader.com/find/Audi-A3-cars-for-sale.jsp}, \url{en.wikipedia.org/wiki/Audi}, \url{jalopnik.com/5903083/why-audi-just-bought-ducati}
\item[``new cars''] \url{www.edmunds.com/new-cars/}, \url{www.edmunds.com/car-reviewsautos.yahoo.com/new-cars.html}, \url{autos.yahoo.com/new-cars.html}, \url{www.kbb.com/new-cars/}, \url{www.autotrader.com/research/new-cars/}, \url{www.autotrader.com/buy-a-new-car.jsp}, \url{www.cars.com/}, \url{autos.aol.com/new-cars/}, \url{www.newcars.com/}, \url{www.motortrend.com/new_cars/}
\item[``local car dealers''] \url{www.edmunds.com/dealerships/}, \url{www.cars.com/dealers/search.action}, \url{www.cochran.com/}, \url{www.autotrader.com/find/[redacted_location].jsp}, \url{www.baierl.com/}, \url{www.kbb.com/car-dealers-and-inventory/}, \url{www.enterprisecarsales.com/location/.../Enterprise_Car_Sales_[redacted_location]}, \url{autos.aol.com/new-cars}, \url{www.toyota.com/dealers/}
\item[``autos and vehicles''] \url{www.youtube.com/channel/HCLfhQGBROujg}, \url{www.youtube.com/channel/HCHXCPGmshRz4}, \url{www.youtube.com/live/autos}, \url{en.wikipedia.org/wiki/Automobile}, \url{www.veoh.com/list/videos/autos_and_vehicles}, \url{vidstatsx.com/most-popular-autos-vehicles-videos-today}, \url{www.savevid.com/category/auto-vehicles}, \url{www.smbiz.com/sbrl003.html}, \url{www.pinterest.com/hasaniqbal/autos-and-vehicles/}, \url{www.justluxe.com/lifestyle/car/articles-2.php}
\item[``cadillac prices''] \url{www.truecar.com/prices-new/cadillac/}, \url{www.motortrend.com/new_cars}, \url{autos.msn.com/browse/Cadillac.aspx}, \url{www.nadaguides.com/Cars/Cadillac}, \url{autos.yahoo.com/new-cars.html}, \url{www.gizmag.com/cadillac-elr-plug-in-hybrid-price/29389/}, \url{www.autonews.com/article/20131011/RETAIL03/131019967/}, \url{www.cadillac.com/}, \url{usnews.rankingsandreviews.com/cars-trucks/browse/cadillac}, \url{www.automobilemag.com/car_prices/01/cadillac/}
\item[``best limousines''] \url{www.medialightbox.com/blog/.../the-10-best-limousines-in-the-world/}, \url{www.bestlimousines.com/}, \url{www.celebritylimos[redacted_location].com/}, \url{www.tdflimo.com/}, \url{www.limo.com/limo-[redacted_location]-limousines.php}, \url{www.[redacted_location]luxurylimoservice.com/}, \url{www.angieslist.com/companylist/}, \url{www.forbes.com/2005/03/10/cx_dl_0310feat_bill05.html}, \url{www.thebestlimousine.com/}, \url{www.youtube.com/watch?v=0iqi6jHviJ0}
\end{description}
}
\end{table}

During the $10$ rounds of ad collections, each round would attempt to reload the International Homepage of Times of India (\url{http://timesofindia.indiatimes.com/international-home}) $10$ times.  Occasionally it would time out instead of reloading. We set the page-load-timeout to be $60$ seconds. 
We repeated the experiment four times (twice using $10$ rounds and twice using $20$ rounds) and  found that the page would not always load completely resulting in fewer ads being collected. Details on the number of ads collected by the primary browser instance in each round are shown in Table~\ref{tab:epoch-rounds}.

\begin{table}
\caption{For Experiment~\ref{exp:cross-browsers}, the number of unique ads collected.  $I$ denotes the set of all ads collected from the primary browser instance running in isolation, while $P$ denotes the same collected from the primary browser instance running in parallel. This table shows the number of ads collected in each round as well as the total number of ads and the number of unique ads in $I$ and $P$. The stars represent numbers from the instances running in isolation. }
\label{tab:epoch-rounds}
\begin{tab}{@{}ccccc@{}}%
Data set & \#rounds & ads (unique) collected by primary browser per round & 
total (unique) in $I$ & total (unique) in $P$ \\
\midrule

1 & 10 	& *50(13), *50(13), 50(8), 50(10), *50(10), 	& 250(37) & 250(25) \\
   & 		& 50(12), *50(13), 50(11), 50(7), *50(17)	& & \\

\midrule
2 & 10 	& 50(11), *50(14), 50(15), 50(11), 50(13), 	&  250(46) & 245(33) \\
   & 		& *50(19), *50(13), *50(14), 45(11), *50(14) 	& & \\
\midrule
3 & 20 	& *50(12), *50(12), 42(11), 50(14), *50(12), 	& 490(58) & 492(47) \\
   & 		& 50(11), 50(13), *50(18), *50(15), 50(15), 	& & \\
   & 		& 50(14), 50(9), 50(17), *50(17), 50(10), 	& & \\
   & 		& *45(10), *50(12), 50(13), *50(16), *45(13)  & & \\
\midrule

4 & 20 	& 50(10), 50(10), 50(15), *50(14), 50(10), 	&  485(57) & 495(52) \\
   &  		& *50(17), 50(13), *40(11), 50(10), 50(16), 	& & \\
   &		& *50(14), *50(11), *50(14), 50(13), 45(11),	& & \\ 
   & 		& *50(14), *50(14), 50(12), *50(12), *45(16)	& & \\
\end{tab}
\end{table}

\clearpage
\subsection{Experiment~\ref{exp:binning}}
This experiment suggested that Google associates users with various ad pools switching users from pool to pool over time. 
Plotting the ads from both the instances together, as in Figure~\ref{fig:exp-bin-com}, we observe that for a period of time (between approximately the $60^{th}$ and $120^{th}$ reload), both the instances appear to receive ads from the same pool. We also ran the same experiment with different intervals between successive reloads. We tested intervals of $0$s, $5$s, $15$s, $30$s, $60$s, and $120s$, the ad-plots of which are shown in Figure~\ref{fig:all-bin}
\vfill
\begin{figure}[h]
\begin{center}
\includegraphics[width = 6.5in]{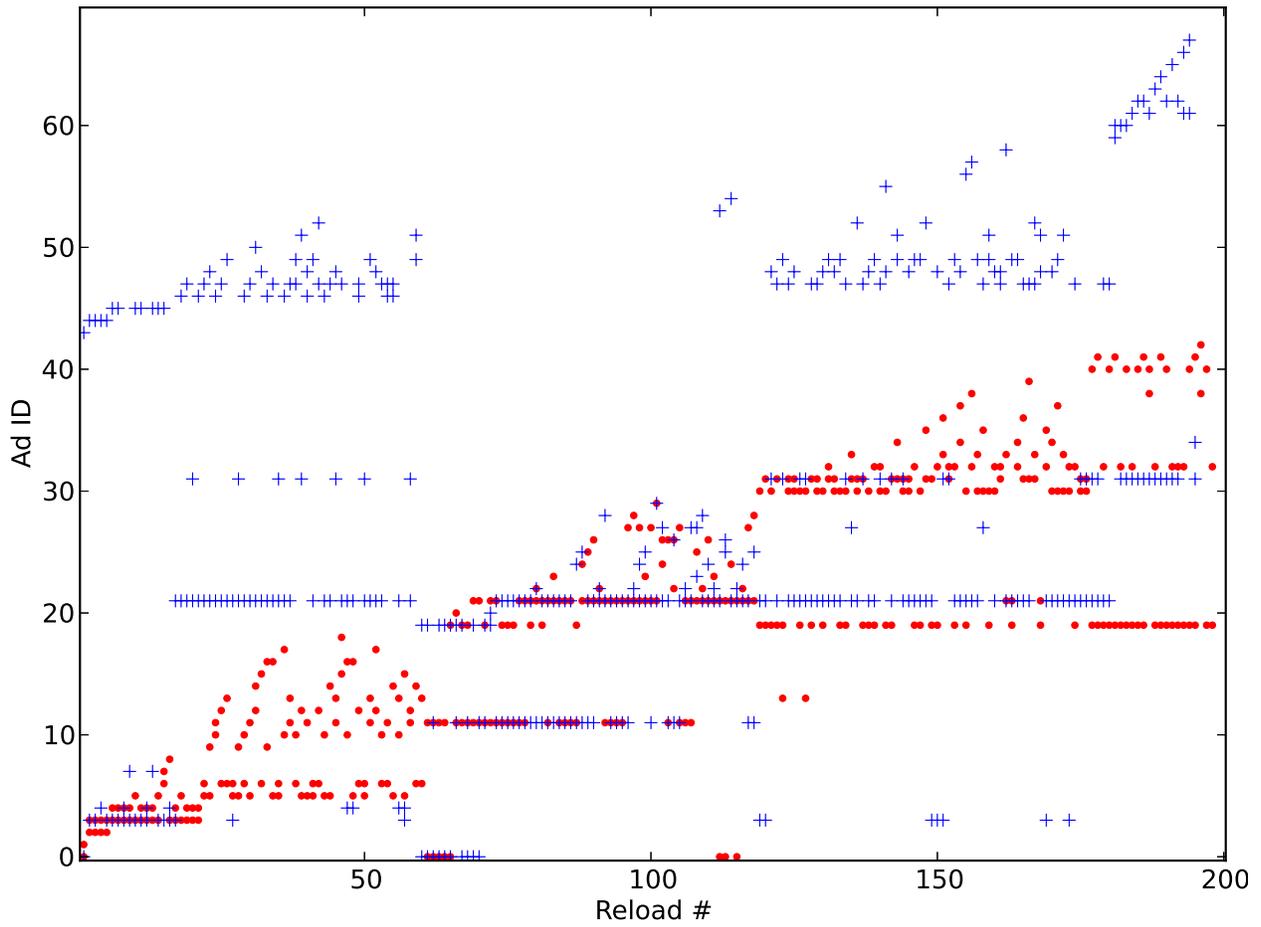}
\end{center}
\caption{For Experiment~\ref{exp:binning}, the combined plot of ads from Instances 1 and 2}\label{fig:exp-bin-com}
\end{figure}
\vfill
\mbox{}

\begin{figure}
\begin{center}
\begin{tabular}{cc}
\includegraphics[width = 3in]{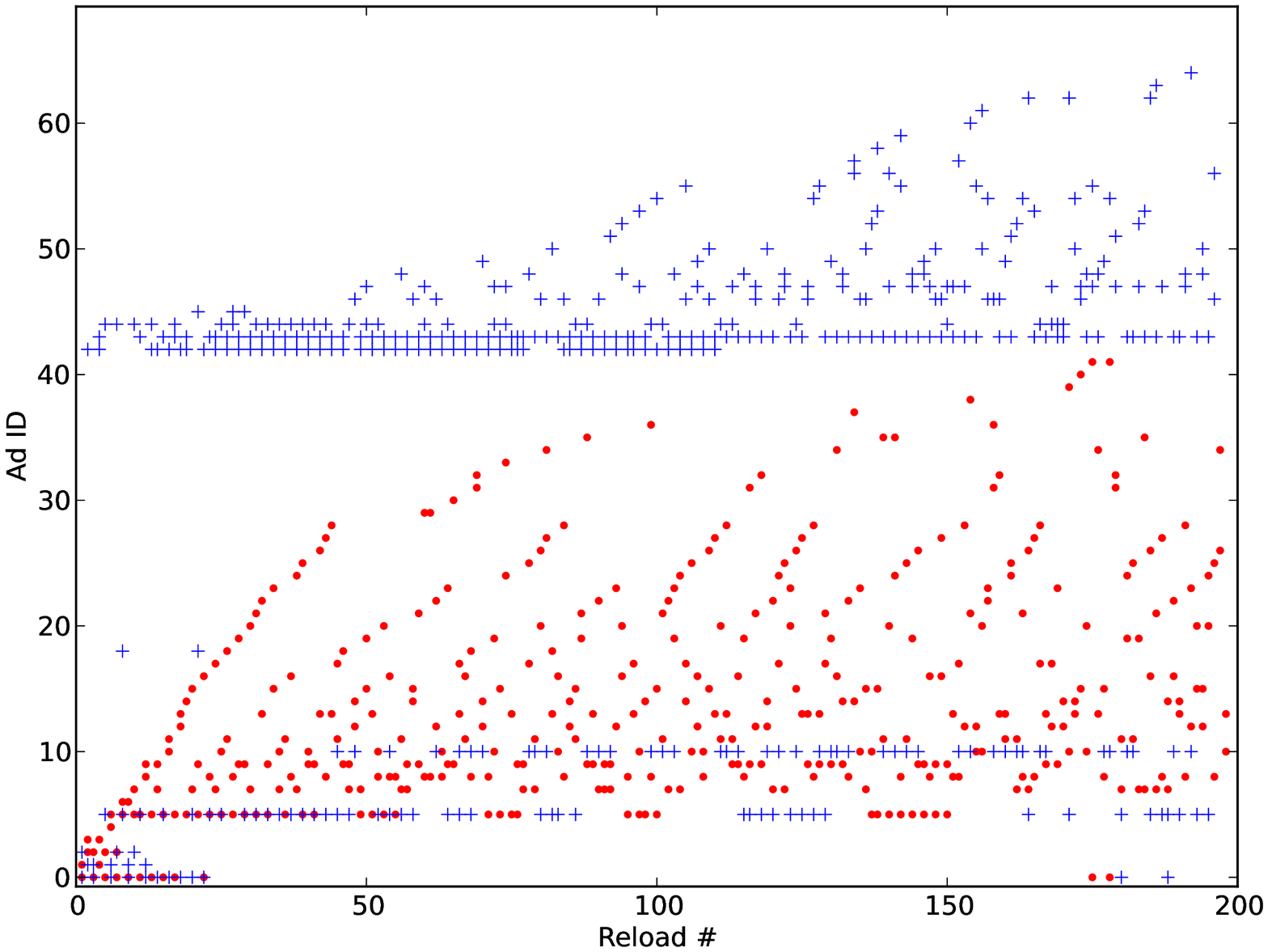} & \includegraphics[width = 3in]{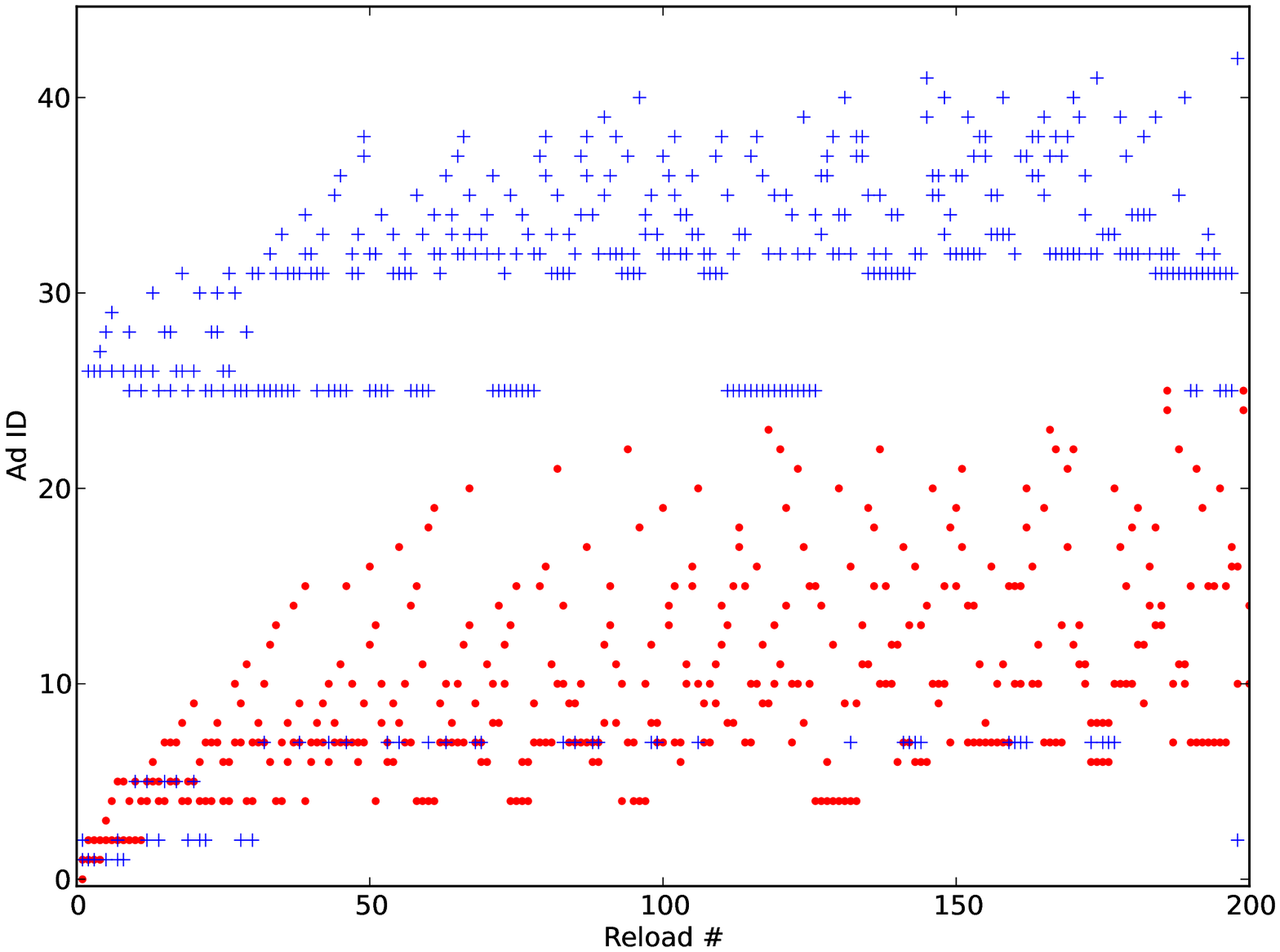} \\
(a) interval = $0s$ & (b) interval = $5s$ \\[3ex]
\includegraphics[width = 3in]{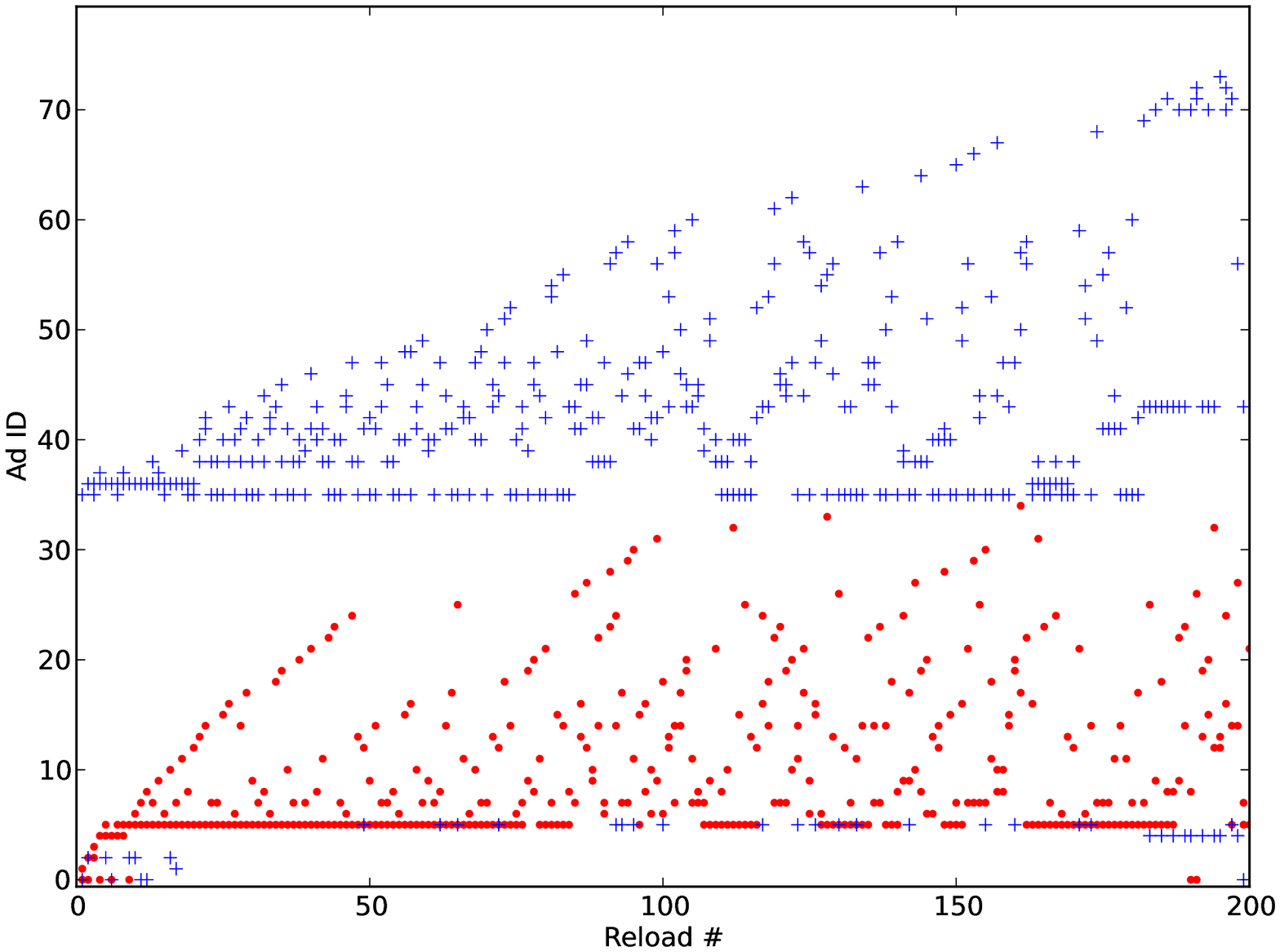} & \includegraphics[width = 3in]{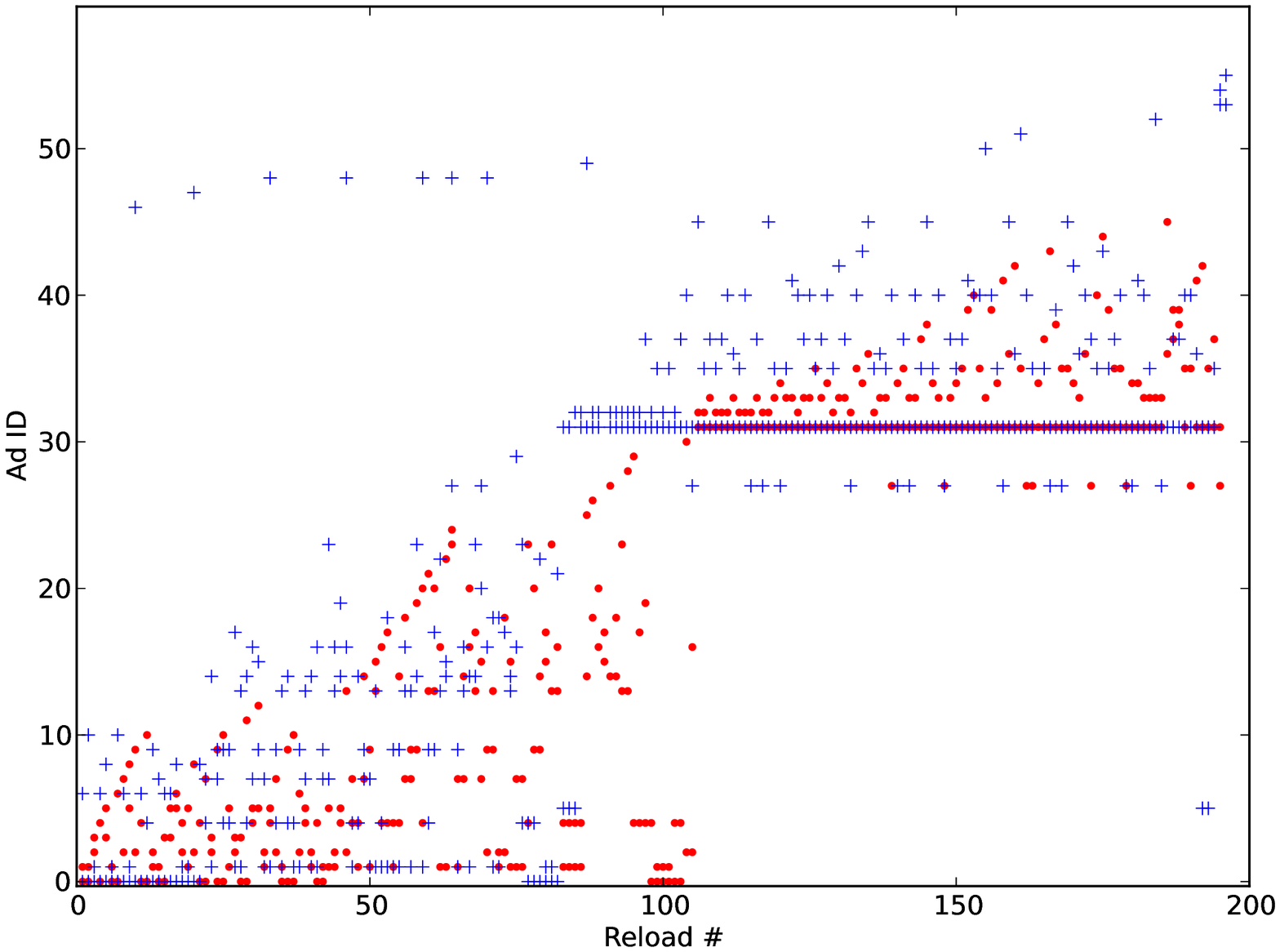} \\
(c) interval = $15s$ & (d) interval = $30s$ \\[3ex]
\includegraphics[width = 3in]{bin-fig-woline+-flipped.eps} & \includegraphics[width = 3in]{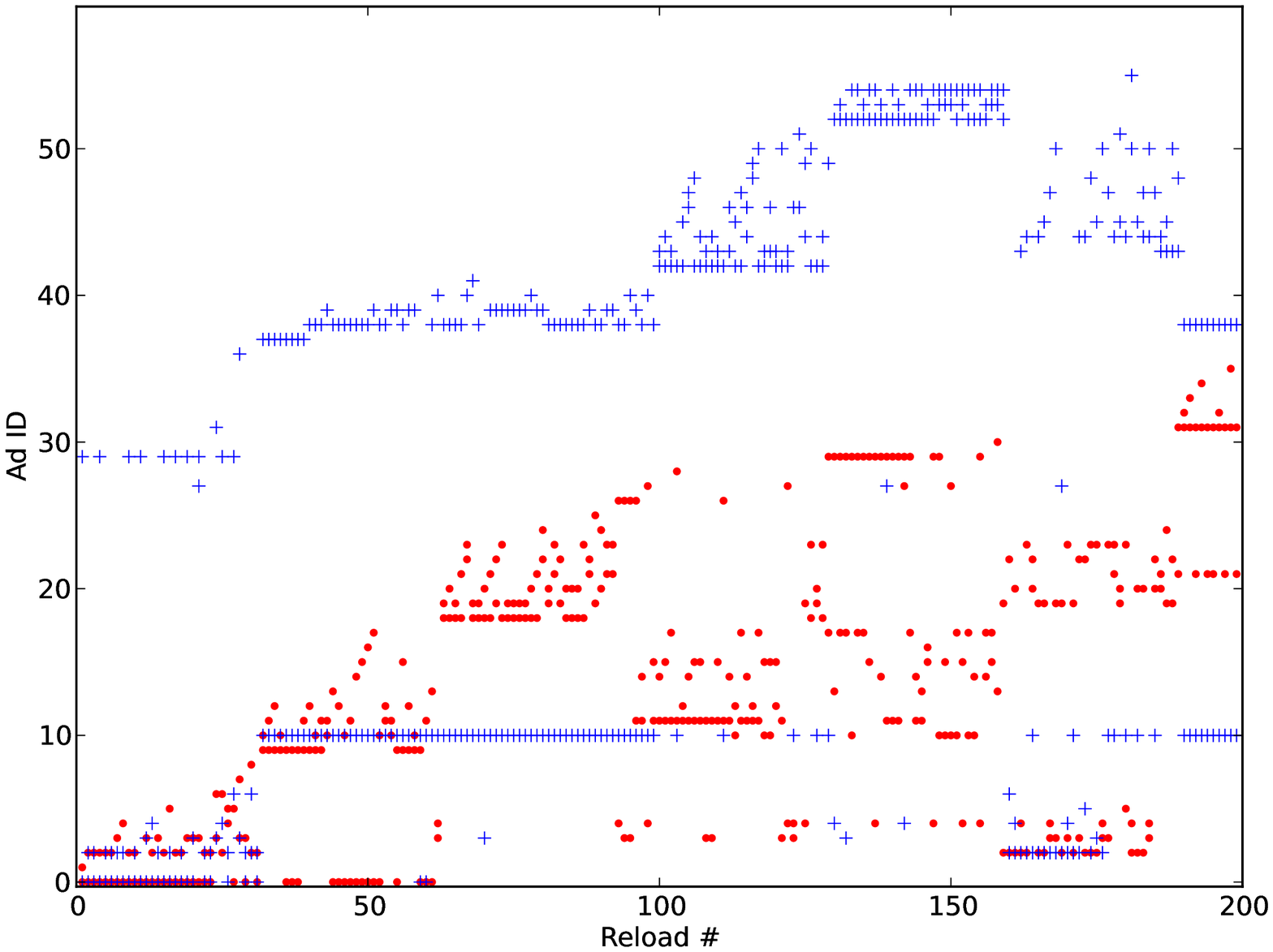} \\
(e) interval = $60s$ & (f) interval = $120s$ \\
\end{tabular}
\end{center}
\caption{For Experiment~\ref{exp:binning}, plots of ads from Instances 1 and 2 in of the six experiments with varying time intervals between reloads. Observe that the pooling behavior appears for the first time in \ref{fig:all-bin}(d), where the pool seems to switch somewhere around the $80$th reload. After that the number of these switches keep increasing in successive plots with the reload interval.}
\label{fig:all-bin}
\end{figure}
\clearpage
\subsection{Experiment~\ref{exp:gender}}
We got the top $100$ websites for females from\\
\centerline{\url{http://www.alexa.com/topsites/category/Top/Society/People/Women}} 
and the top $100$ sites for males from\\
\centerline{\url{http://www.alexa.com/topsites/category/Top/Society/People/Men}}
They are listed in Tables~\ref{tab:women} and~\ref{tab:men}, respectively.

Google's Ad Settings page (previously known as the Ad Preferences Manager) is located at\\
\centerline{\url{http://www.google.com/settings/ads}}

\vfill
\begin{table}[h]
\caption{For Experiment~\ref{exp:gender}, the list of websites for creating female personas}
\label{tab:women}
\url{shine.yahoo.com}, \url{sheknows.com}, \url{realsimple.com}, \url{cosmopolitan.com}, \url{shape.com}, \url{yourtango.com}, \url{glamour.com}, \url{allwomenstalk.com}, \url{self.com}, \url{womansday.com}, \url{indusladies.com}, \url{sofeminine.co.uk}, \url{allure.com}, \url{cosmopolitan.co.uk}, \url{redbookmag.com}, \url{bellaonline.com}, \url{chatelaine.com}, \url{womenshealth.gov}, \url{womensforum.com}, \url{more.com}, \url{blisstree.com}, \url{memsaab.com}, \url{handbag.com}, \url{bitchmagazine.org}, \url{feministing.com}, \url{divine.ca}, \url{inthepowderroom.com}, \url{penmai.com}, \url{bust.com}, \url{shoppinglifestyle.com}, \url{msmagazine.com}, \url{anewmode.com}, \url{bettyconfidential.com}, \url{ywbb.org/index.shtml}, \url{worldoffemale.com}, \url{herdaily.com}, \url{lady.co.uk}, \url{worldpulse.com}, \url{sophisticatededge.com}, \url{baggagereclaim.co.uk}, \url{pmsclan.com}, \url{girlfriendology.com}, \url{lemondrop.com}, \url{bcliving.ca}, \url{journeywoman.com}, \url{australianwomenonline.com}, \url{women-on-the-road.com}, \url{magforwomen.com}, \url{nawbo.org}, \url{dressforsuccess.org}, \url{womenshistory.about.com}, \url{wavejourney.com}, \url{secondwivescafe.com}, \url{unwomen.org}, \url{aauw.org}, \url{catalyst.org}, \url{truthaboutdeception.com}, \url{womensissues.about.com}, \url{ncwit.org}, \url{dawnali.com/lovinmysistas}, \url{mookychick.co.uk}, \url{savvy-chick.net}, \url{rawa.org}, \url{emilyslist.org}, \url{constantchatter.com}, \url{girlfriendsocial.com}, \url{womenzmag.com}, \url{ladieswholaunch.com}, \url{maitinepal.org}, \url{geniusbeauty.com}, \url{thefword.org.uk}, \url{womensenews.org}, \url{rockytravel.net}, \url{femail.com.au}, \url{onewomanmarketing.com}, \url{un.org/womenwatch}, \url{webgrrls.com}, \url{feminist.com}, \url{iwda.org.au}, \url{feminist.org}, \url{mrssurvival.com}, \url{gogirlfriend.com}, \url{nzgirl.co.nz}, \url{digital.library.upenn.edu/women/}, \url{daisygreenmagazine.co.uk}, \url{now.org}, \url{womensnetwork.com.au}, \url{jwa.org}, \url{library.duke.edu/rubenstein/}, \url{heartlessbitches.com}, \url{gogalavanting.com}, \url{redhatsociety.com}, \url{witi.com}, \url{womenslaw.org}, \url{wnba-books.org}, \url{vday.org}, \url{everywoman.com}, \url{vivmag.com}, \url{womenonlyforums.com}, \url{teachertech.rice.edu}
\end{table}

\vfill

\begin{table}[h]
\caption{For Experiment~\ref{exp:gender}, the list of websites for creating male personas}
\label{tab:men}
\url{askmen.com}, \url{complex.com}, \url{menshealth.com}, \url{esquire.com}, \url{gq.com}, \url{artofmanliness.com}, \url{thrillist.com}, \url{maxim.com}, \url{mademan.com}, \url{uncrate.com}, \url{guyism.com}, \url{everyjoe.com}, \url{coolmaterial.com}, \url{spike.com}, \url{gearpatrol.com}, \url{goodmenproject.com}, \url{fhm.com}, \url{bullz-eye.com}, \url{mensjournal.com}, \url{blessthisstuff.com}, \url{avoiceformen.com}, \url{primermagazine.com}, \url{thesmokingjacket.com}, \url{acquiremag.com}, \url{tmrzoo.com}, \url{unfinishedman.com}, \url{thecoolist.com}, \url{werd.com}, \url{gunaxin.com}, \url{ywbb.org/index.shtml}, \url{instash.com}, \url{giantlife.com}, \url{plunderguide.com}, \url{gearculture.com}, \url{hispotion.com}, \url{mensgear.net}, \url{modernman.com}, \url{manofmany.com}, \url{brash.com}, \url{fearlessmen.com}, \url{dadsdivorce.com}, \url{savethemales.ca}, \url{tempe12.com}, \url{justaguything.com}, \url{mkp.org}, \url{sharpformen.com}, \url{pinstripemag.com}, \url{thecampussocialite.com}, \url{fatherhood.org}, \url{guylife.com}, \url{mankindunplugged.com}, \url{grind365.com}, \url{manukau.ac.nz}, \url{thegearpost.com}, \url{nextluxury.com}, \url{bonjourlife.com}, \url{nomoremrniceguy.com}, \url{shavemagazine.com}, \url{nextcrave.com}, \url{toromagazine.com}, \url{ziprage.com}, \url{menstuff.org}, \url{ncfm.org}, \url{angryharry.com}, \url{fact.on.ca}, \url{aspiringgentleman.com}, \url{fataldose.com}, \url{debonairmag.com}, \url{dailyxy.com}, \url{citynetmagazine.com}, \url{male-initiation.net}, \url{faculty.washington.edu/eloftus/}, \url{losangeles.mkp.org}, \url{fancymaterial.com}, \url{owenmarcus.com}, \url{manlyadventure.com}, \url{mensactivism.org}, \url{beast.com}, \url{thepopularman.com}, \url{menstoppingviolence.org}, \url{doubleagent.com}, \url{guymanningham.com}, \url{contemporarymasculine.com}, \url{thecmg.org}, \url{rtinternational.org}, \url{justdetention.org}, \url{maninstitute.com}, \url{uk.mkp.org}, \url{man-over-board.com}, \url{jaysongaddis.com}, \url{sospapa.net}, \url{dullmensclub.com}, \url{askmamu.com}, \url{taoofbachelorhood.com}, \url{anger.org}, \url{dandyism.net}, \url{acfc.org}, \url{fathersforlife.org}, \url{singlesexschools.org}, \url{frachelli.com}
\end{table}
\vfill
\mbox{}

\clearpage
\subsection{Experiment~\ref{exp:pos}}
As in Experiment~\ref{exp:cross-browsers}, an instance manifests its interest by visiting the top $10$ websites returned by Google when queried with certain automobile-related terms: ``BMW buy'', ``Audi purchase'', ``new cars'', ``local car dealers'', ``autos and vehicles'', ``cadillac prices'', and ``best limousines''.
Thus, they visited the same websites as in Experiment~\ref{exp:cross-browsers} (see Table~\ref{tab:car}).

Across all runs of the experiment, we collected $9832$ ads with $281$ being unique.  
Table~\ref{tab:instances} shows the number of ads collected by each instance.   Notice that both outliers were in the $19$th run and in the experimental group.

\begin{table}[b!]
\caption{For Experiment~\ref{exp:pos}, how the ads were distributed over the $10$ different instances. $T$ denotes the set of all ads collected from the trained instances, while $U$ denotes the same collected from the untrained instances. The number of ads collected by each instance in $\{i_1 \dots i_{10}\}$ is shown in the left half of the table. The right half of the table shows the total number of ads and the number of unique ads in $T$ and $U$. }
\label{tab:instances}
\begin{tab}{@{}rrrrrrrrrrrrrrr@{}}%
Data set & $i_1$ & $i_2$ & $i_3$ & $i_4$ & $i_5$ & $i_6$ & $i_7$ & $i_8$ & $i_9$ & $i_{10}$ & 
Total($T$) & Unique($T$) & Total($U$) & Unique($U$) \\
\midrule
1 & 45 &50 &45 &50 &45 &50 &45 &50 &45 &50 &235 & 28 & 240 & 44\\
2 & 50 &50 &50 &49 &50 &50 &50 &50 &50 &50 &250 & 28 & 249 & 38\\
3 & 50 &50 &50 &50 &50 &50 &50 &50 &50 &50 &250 & 38 & 250 & 30\\
4 & 50 &50 &50 &50 &50 &50 &50 &50 &50 &50 &250 & 28 & 250 & 34\\
5 & 50 &50 &50 &50 &50 &50 &50 &50 &50 &50 &250 & 36 & 250 & 31\\
6 & 50 &50 &50 &50 &50 &50 &46 &50 &50 &50 &250 & 31 & 246 & 37\\
7 & 42 &50 &50 &50 &50 &50 &50 &50 &50 &50 &242 & 25 & 250 & 39\\
8 & 50 &50 &50 &50 &50 &50 &50 &50 &50 &50 &250 & 27 & 250 & 22\\
9 & 50 &50 &45 &50 &50 &50 &50 &48 &50 &50 &250 & 29 & 243 & 52\\
10 & 50 &50 &50 &50 &50 &50 &50 &50 &49 &50 &249 & 27 & 250 & 30\\
11 & 50 &50 &50 &50 &50 &50 &50 &50 &50 &50 &250 & 29 & 250 & 38\\
12 & 50 &50 &50 &48 &50 &49 &50 &50 &50 &50 &250 & 35 & 247 & 38\\
13 & 50 &50 &50 &50 &50 &50 &48 &50 &50 &50 &250 & 37 & 248 & 30\\
14 & 50 &50 &50 &50 &50 &50 &50 &50 &50 &50 &250 & 52 & 250 & 28\\
15 & 50 &50 &50 &50 &50 &50 &50 &50 &50 &50 &250 & 40 & 250 & 35\\
16 & 50 &50 &50 &50 &50 &50 &50 &50 &50 &50 &250 & 24 & 250 & 40\\
17 & 50 &50 &41 &50 &50 &48 &49 &50 &50 &50 &250 & 39 & 238 & 38\\
18 & 50 &50 &45 &50 &50 &50 &50 &50 &50 &50 &250 & 26 & 245 & 44\\
19 & 50 &50 &50 &50 &0 &50 &0 &50 &50 &50 &150 & 24 & 250 & 53\\
20 & 50 &50 &50 &50 &50 &50 &50 &50 &50 &50 &250 & 46 & 250 & 34\\
\end{tab}
\end{table}

Across all runs of the control-control experiment, we collected 9304 ads with 295 being unique.  
Table~\ref{tab:control} shows the number of ads collected by each instance.
\begin{table}
\caption{For Experiment~\ref{exp:pos}, how the ads were distributed over the $10$ different instances in the {\em control-control} experiment. $5$ out these $10$ were randomly assigned to $T$, while the remaining to $U$. Observe that data-set $8$ is an outlier because the instances in that round returned much fewer ads. }
\label{tab:control}
\begin{tab}{@{}rrrrrrrrrrrrrrr@{}}%
Data set & $i_1$ & $i_2$ & $i_3$ & $i_4$ & $i_5$ & $i_6$ & $i_7$ & $i_8$ & $i_9$ & $i_{10}$ & 
Total($T$) & Unique($T$) & Total($U$) & Unique($U$) \\
\midrule
1 & 45 &30 &30 &45 &25 &40 &45 &45 &35 &29 &190 & 31 & 179 & 44\\
2 & 50 &50 &50 &50 &50 &50 &50 &23 &50 &50 &223 & 33 & 250 & 50\\
3 & 45 &40 &45 &50 &45 &45 &45 &45 &45 &45 &225 & 37 & 225 & 39\\
4 & 50 &50 &50 &50 &50 &45 &50 &50 &50 &50 &245 & 39 & 250 & 39\\
5 & 50 &50 &50 &50 &46 &50 &50 &50 &50 &45 &245 & 33 & 246 & 57\\
6 & 50 &50 &45 &50 &50 &50 &50 &45 &45 &50 &250 & 45 & 235 & 38\\
7 & 50 &47 &50 &50 &50 &50 &50 &50 &50 &50 &250 & 42 & 247 & 34\\
8 & 25 &15 &25 &0 &9 &30 &19 &0 &0 &15 &114 & 15 & 24 & 16\\
9 & 50 &50 &50 &50 &50 &45 &50 &45 &50 &50 &245 & 37 & 245 & 33\\
10 & 45 &45 &50 &50 &50 &45 &45 &45 &50 &50 &245 & 36 & 230 & 47\\
11 & 50 &50 &45 &50 &45 &50 &50 &50 &44 &50 &239 & 35 & 245 & 37\\
12 & 50 &49 &50 &50 &50 &40 &45 &50 &50 &50 &235 & 33 & 249 & 36\\
13 & 50 &50 &50 &50 &45 &50 &50 &50 &50 &50 &245 & 36 & 250 & 24\\
14 & 50 &50 &50 &50 &50 &50 &50 &50 &50 &50 &250 & 31 & 250 & 28\\
15 & 50 &50 &50 &50 &46 &50 &50 &50 &50 &47 &246 & 45 & 247 & 43\\
16 & 50 &50 &50 &50 &50 &50 &50 &50 &50 &50 &250 & 36 & 250 & 35\\
17 & 50 &50 &50 &49 &50 &50 &50 &50 &50 &50 &249 & 37 & 250 & 36\\
18 & 50 &50 &50 &50 &50 &50 &50 &50 &50 &50 &250 & 26 & 250 & 27\\
19 & 50 &50 &50 &50 &50 &50 &50 &37 &50 &50 &237 & 36 & 250 & 33\\
20 & 50 &50 &50 &50 &50 &50 &50 &50 &49 &50 &249 & 37 & 250 & 34\\
\end{tab}
\end{table}
The p-values that the permutation tests yielded for the control-control experiment are shown in Table~\ref{tbl:control}. 
\begin{table}
\caption{For Experiment~\ref{exp:pos}, p-values for the for control-control experiment. Note that the significant p-values are from data-set 8, which we showed in Table~\ref{tab:control} to an outlier.}
\label{tbl:control}. 
\begin{tab}{@{}ccccc@{}}
 Data set & $s_{\simm}$ & $s_{\msf{kw}}$  & $s_{\percent}$ & $\chi^2$ \\
\midrule
$1$ & $0.373016$ 	 & $0.857143$ 	 & $0.777778$ 	 & $0.0690831$ \\
$2$ & $0.063492$ 	 & $0.293651$ 	 & $0.261905$ 	 & $0.0388589$ \\
$3$ & $0.603175$ 	 & $0.920635$ 	 & $0.777778$ 	 & $2.66915e-05$ \\
$4$ & $0.436508$ 	 & $0.440476$ 	 & $0.500000$ 	 & $0.565445$ \\
$5$ & $0.071429$ 	 & $0.869048$ 	 & $1.000000$ 	 & $9.85584e-05$ \\
$6$ & $0.309524$ 	 & $0.158730$ 	 & $0.500000$ 	 & $0.0139651$ \\
$7$ & $0.103175$ 	 & $0.527778$ 	 & $1.000000$ 	 & $0.947502$ \\
$8$ & $0.007937^*$ 	 & $0.003968^*$ 	 & $0.003968^*$ 	 & $0.0701231$ \\
$9$ & $0.547619$ 	 & $0.134921$ 	 & $0.222222$ 	 & $0.0216323$ \\
$10$ & $0.119048$ 	 & $1.000000$ 	 & $1.000000$ 	 & $0.000856692$ \\
$11$ & $0.936508$ 	 & $0.234127$ 	 & $0.222222$ 	 & $0.0341701$ \\
$12$ & $0.285714$ 	 & $0.769841$ 	 & $0.222222$ 	 & $0.228014$ \\
$13$ & $0.761905$ 	 & $0.440476$ 	 & $0.896825$ 	 & $0.00237996$ \\
$14$ & $0.642857$ 	 & $0.408730$ 	 & $1.000000$ 	 & $0.415073$ \\
$15$ & $0.468254$ 	 & $0.738095$ 	 & $1.000000$ 	 & $0.164419$ \\
$16$ & $0.476190$ 	 & $0.095238$ 	 & $0.500000$ 	 & $0.000130842$ \\
$17$ & $0.984127$ 	 & $0.186508$ 	 & $0.500000$ 	 & $0.0254968$ \\
$18$ & $0.746032$ 	 & $0.440476$ 	 & $0.896825$ 	 & $0.464851$ \\
$19$ & $0.611111$ 	 & $0.420635$ 	 & $0.500000$ 	 & $0.0122963$ \\
$20$ & $0.071429$ 	 & $0.936508$ 	 & $0.777778$ 	 & $2.27046e-05$ \\
\midrule
$\text{Number} < 5\%$ & 1  &              1       &      1       &                        12\\
\end{tab}
\end{table}
We can see that each of the statistics produced one statistically significant result except for the $\chi^2$, which produced $12$. This seems to indicate that the $\chi^2$-test is more prone to showing false-positives than the permutation tests.

Across all runs of the treatment-treatment experiment, we collected 9741 ads with 243 being unique.  
Table~\ref{tab:treatment} shows the number of ads collected by each instance.
\begin{table}
\caption{For Experiment~\ref{exp:pos}, how the ads were distributed over the $10$ different instances in the {\em treatment-treatment} experiment. $5$ out these 10 were randomly assigned to $T$, while the remaining to $U$.}
\label{tab:treatment}
\begin{tab}{@{}rrrrrrrrrrrrrrr@{}}
Data set & $i_1$ & $i_2$ & $i_3$ & $i_4$ & $i_5$ & $i_6$ & $i_7$ & $i_8$ & $i_9$ & $i_{10}$ & 
Total($T$) & Unique($T$) & Total($U$) & Unique($U$) \\
\midrule
1 & 50 &50 &50 &50 &50 &50 &45 &50 &50 &50 &245 & 31 & 250 & 33\\
2 & 50 &50 &50 &50 &50 &50 &45 &50 &50 &43 &250 & 33 & 238 & 42\\
3 & 50 &50 &50 &50 &50 &50 &50 &45 &50 &50 &245 & 37 & 250 & 37\\
4 & 50 &50 &50 &50 &50 &50 &50 &50 &50 &50 &250 & 31 & 250 & 45\\
5 & 49 &49 &50 &50 &50 &50 &50 &50 &50 &49 &248 & 34 & 249 & 46\\
6 & 50 &50 &50 &50 &50 &50 &50 &50 &45 &45 &240 & 40 & 250 & 32\\
7 & 45 &50 &50 &50 &50 &50 &50 &36 &50 &50 &250 & 40 & 231 & 40\\
8 & 50 &40 &50 &50 &50 &50 &50 &50 &50 &50 &240 & 36 & 250 & 35\\
9 & 50 &50 &45 &50 &45 &40 &50 &50 &40 &45 &230 & 26 & 235 & 33\\
10 & 50 &50 &50 &50 &50 &50 &50 &50 &50 &50 &250 & 33 & 250 & 32\\
11 & 50 &49 &50 &50 &50 &50 &50 &50 &50 &50 &249 & 35 & 250 & 41\\
12 & 45 &45 &50 &50 &50 &0 &50 &50 &45 &50 &195 & 25 & 240 & 43\\
13 & 45 &50 &50 &50 &50 &50 &50 &50 &45 &0 &195 & 21 & 245 & 37\\
14 & 50 &50 &50 &49 &46 &50 &50 &50 &50 &50 &250 & 37 & 245 & 28\\
15 & 50 &50 &50 &50 &50 &50 &50 &49 &50 &50 &250 & 39 & 249 & 28\\
16 & 50 &50 &50 &50 &50 &50 &50 &50 &50 &50 &250 & 28 & 250 & 33\\
17 & 50 &50 &50 &50 &50 &50 &50 &50 &50 &50 &250 & 23 & 250 & 45\\
18 & 50 &50 &50 &50 &50 &50 &50 &50 &50 &50 &250 & 38 & 250 & 37\\
19 & 50 &45 &47 &45 &50 &50 &50 &50 &50 &50 &247 & 34 & 240 & 26\\
20 & 45 &45 &50 &50 &50 &50 &50 &45 &50 &50 &240 & 44 & 245 & 34\\
\end{tab}
\end{table}
The p-values for the treatment-treatment experiments are shown in Table~\ref{tbl:treatment}.
\begin{table}
\caption{For Experiment~\ref{exp:pos}, p-values for the for treatment-treatment experiment}\label{tbl:treatment}
\begin{tab}{@{}ccccc@{}}
 Data set & $s_{\simm}$ & $s_{\msf{kw}}$  & $s_{\percent}$ & $\chi^2$ \\
\midrule
$1$ & $0.634921$ 	 	& $0.821429$ 	 	& $1.000000$ 	 	& $0.158018$ \\
$2$ & $0.722222$ 	 	& $0.357143$ 	 	& $1.000000$ 	 	& $0.554021$ \\
$3$ & $0.134921$ 	 	& $0.202381$ 	 	& $1.000000$ 	 	& $0.105753$ \\
$4$ & $0.492063$ 	 	& $0.468254$ 	 	& $1.000000$ 	 	& $0.767482$ \\
$5$ & $0.103175$ 	 	& $0.281746$ 	 	& $1.000000$ 	 	& $0.235403$ \\
$6$ & $0.952381$ 	 	& $0.650794$ 	 	& $1.000000$ 	 	& $0.478123$ \\
$7$ & $0.515873$ 	 	& $0.384921$ 	 	& $1.000000$ 	 	& $0.768996$ \\
$8$ & $0.547619$ 	 	& $0.571429$ 	 	& $1.000000$ 	 	& $0.654094$ \\
$9$ & $0.492063$ 	 	& $0.829365$ 	 	& $1.000000$ 	 	& $0.097828$ \\
$10$ & $0.523810$ 	 & $0.162698$ 	 & $1.000000$ 	 & $0.24844$ \\
$11$ & $0.198413$ 	 & $0.007937^*$ 	 & $1.000000$ 	 & $1.2326e-05$ \\
$12$ & $0.515873$ 	 & $0.781746$ 	 & $1.000000$ 	 & $0.471851$ \\
$13$ & $0.222222$ 	 & $0.734127$ 	 & $1.000000$ 	 & $0.51711$ \\
$14$ & $0.563492$ 	 & $0.825397$ 	 & $1.000000$ 	 & $0.00390297$ \\
$15$ & $0.103175$ 	 & $0.396825$ 	 & $1.000000$ 	 & $0.513125$ \\
$16$ & $0.674603$ 	 & $0.146825$ 	 & $0.500000$ 	 & $0.00960702$ \\
$17$ & $0.063492$ 	 & $0.880952$ 	 & $0.500000$ 	 & $0.00017787$ \\
$18$ & $0.325397$ 	 & $0.357143$ 	 & $1.000000$ 	 & $0.239513$ \\
$19$ & $0.119048$ 	 & $0.992063$ 	 & $1.000000$ 	 & $3.01973e-10$ \\
$20$ & $0.476190$ 	 & $0.690476$ 	 & $1.000000$ 	 & $0.189638$ \\

\midrule
$\text{Number} < 5\%$ & 0  &              1       &      0       &                        5\\
\end{tab}
\end{table}
 Here too, we would expect not to find statistically significant results. The $\chi^2$-test once again shows more false-positives than the permutation tests. These numbers indicate that the $\pt(s_\simm)$ and $\pt(s_\msf{kw})$ are good indicators of statistical significance in our setting.

\clearpage

\bibliographystyle{IEEEtran}
\bibliography{blackbox}

\end{document}